 \title{Almost-Optimal Sublinear-Time Edit Distance in the Low Distance Regime}
\author[1]{Karl Bringmann}
\author[1]{Alejandro Cassis}
\author[1]{Nick Fischer}
\author[2]{Vasileios Nakos}
\affil[1]{Saarland University and Max Planck Institute for Informatics,\\Saarland Informatics Campus, Saarbrücken, Germany}
\affil[2]{RelationalAI}
\begin{document}
\maketitle

\begin{abstract}
We revisit the task of computing the edit distance in sublinear time. In the $(k,K)$-gap edit distance problem we are given oracle access to two strings of length $n$ and the task is to distinguish whether their edit distance is at most~$k$ or at least $K$. It has been established by Goldenberg, Krauthgamer and Saha~(FOCS~'19), with improvements by Kociumaka and Saha~(FOCS~'20), that the $(k,k^2)$-gap problem can be solved in time $\widetilde\Order(n/k + \poly(k))$. One of the most natural questions in this line of research is whether the $(k,k^2)$-gap is best-possible for the running time $\widetilde\Order(n/k + \poly(k))$.

In this work we answer this question by significantly improving the gap. Specifically, we show that in time $\Order(n/k + \poly(k))$ we can even solve the $(k,k^{1+o(1)})$-gap problem. This is the first algorithm that breaks the $(k,k^2)$-gap in this running time. Our algorithm is almost optimal in the following sense: In the low distance regime ($k \le n^{0.19}$) our running time becomes $O(n/k)$, which matches a known $n/k^{1+o(1)}$ lower bound for the $(k,k^{1+o(1)})$-gap problem up to lower order factors.

Our result also reveals a surprising similarity of Hamming distance and edit distance in the low distance regime: For both, the $(k,k^{1+o(1)})$-gap problem has time complexity $n/k^{1\pm o(1)}$ for small $k$.

In contrast to previous work, which employed a subsampled variant of the Landau-Vishkin algorithm, we instead build upon the algorithm of Andoni, Krauthgamer and Onak~(FOCS~'10) which approximates the edit distance in almost-linear time $O(n^{1+\varepsilon})$ within a polylogarithmic factor. We first simplify their approach and then show how to to effectively prune their computation tree in order to obtain a sublinear-time algorithm in the given time bound. Towards that, we use a variety of structural insights on the (local and global) patterns that can emerge during this process and design appropriate property testers to effectively detect these patterns.
\end{abstract}

\begin{funding}
This work is part of the project TIPEA that has received funding from the European Research Council (ERC) under the European Unions Horizon 2020 research and innovation programme (grant agreement No.~850979).
\end{funding}

\setcounter{page}{0}
\newpage
\section{Introduction}
The \emph{edit distance} (also called \emph{Levenshtein distance})~\cite{Levenshtein66} between two strings~$X$ and~$Y$ is the minimum number of character insertions, deletions and substitutions required to transform~$X$ into~$Y$. It constitutes a fundamental string similarity measure with applications across several disciplines, including computational biology, text processing and information retrieval.

Computational problems involving the edit distance have been studied extensively. A textbook dynamic programming algorithm computes the edit distance of two strings of length $n$ in time $\Order(n^2)$~\cite{Vintsyuk68,WagnerF74}. It is known that beating this quadratic time by a polynomial improvement would violate the Strong Exponential Time Hypothesis~\cite{BackursI15,AbboudBW15,BringmannK15,AbboudHWW16}, one of the cornerstones of fine-grained complexity theory. For faster algorithms, we therefore have to resort to \emph{approximating} the edit distance.\footnote{Throughout the paper, by ``algorithm'' we mean ``randomized algorithm with constant error probability''.} A long line of research (starting even before the hardness result emerged) lead to successively improved approximation algorithms: The first result established that in linear time the edit distance can be $\Order(\sqrt n)$-approximated~\cite{LandauMS98}. The approximation ratio was improved to~$\Order(n^{3/7})$ in~\cite{BarYossefJKK04} and to $n^{1/3+\order(1)}$ in~\cite{BatuES06}. Making use of the Ostrovsky-Rabani technique for embedding edit distance into the~$\ell_1$-metric~\cite{OstrovskyR07}, Andoni and Onak~\cite{AndoniO12} gave a \raisebox{0pt}[0pt][0pt]{$2^{\Order(\sqrt{\log n})}$}-approximation algorithm which runs in time~$n^{1+\order(1)}$. Later, Andoni, Krauthgamer and Onak achieved an algorithm in time~$\Order(n^{1+\varepsilon})$ computing a $(\log n)^{O(1/\varepsilon)}$-approximation~\cite{AndoniKO10}. A breakthrough result by Chakraborty, Das, Goldenberg, Kouck{\`y} and Saks showed that it is even possible to compute a constant-factor approximation in strongly subquadratic time~\cite{ChakrabortyDGKS20}. Subsequent work~\cite{KouckyS20,BrakensiekR20} improved the running time to close-to-linear for the regime of near-linear edit distance. Very recently, Andoni and Nosatzki~\cite{AndoniN20} extended this to the general case, showing that in time $\Order(n^{1+\varepsilon})$ one can compute a $f(1/\varepsilon)$-approximation for some function $f$ depending solely on $\varepsilon$.

While for exact algorithms (almost-)linear-time algorithms are the gold standard, for approximation algorithms it is not clear whether a running time of $\Order(n^{1+\varepsilon})$ is desired, as in fact one might hope for \emph{sublinear-time} algorithms. Indeed, another line of research explored the edit distance in the sublinear setting, where one has random access to the strings~$X$ and~$Y$, and the goal is to compute the edit distance between~$X$ and~$Y$ without even reading the whole input. Formally, in the $(k,K)$-gap edit distance problem the task is to distinguish whether the edit distance is at most $k$ or at least $K$. The running time is analyzed in terms of the string length $n$ and the gap parameters $k$ and $K$. Initiating the study of this problem, Batu et al.~\cite{BatuEKMRRS03} showed how to solve the $(k,\Omega(n))$-gap problem in time $\Order(k^2 / n + \sqrt{k})$, assuming $k \le n^{1-\Omega(1)}$.
The aforementioned algorithm by Andoni and Onak~\cite{AndoniO12} can be viewed in the sublinear setting and solves the $(k,K)$-gap problem in time $n^{2+o(1)} \cdot k / K^2$, assuming $K/k = n^{\Omega(1)}$. In a major contribution, Goldenberg, Krauthgamer and Saha~\cite{GoldenbergKS19} showed how to solve the $(k,K)$-gap problem in time\footnote{We write \raisebox{0pt}[0pt][0pt]{$\widetilde\Order(\cdot)$} to hide polylogarithmic factors, that is, $\widetilde\Order(T) = \bigcup_{c \ge 0} O(T \log^c T)$.} \makebox{$\widetilde\Order(nk/K + k^3)$}, assuming $K/k = 1 + \Omega(1)$. Subsequently, Kociumaka and Saha~\cite{KociumakaS20} improved the running time to \makebox{$\widetilde{O}(nk/K + k^2 + \sqrt{nk^5}/K)$}. They focus their presentation on the $(k,k^2)$-gap problem, where they achieve time $\widetilde{O}(n/k + k^2)$.

How far from optimal are these algorithms? It is well-known that $(k,K)$-gap edit distance requires time $\Omega(n/K)$ (this follows from the same bound for Hamming distance). Batu et al.~\cite{BatuEKMRRS03} additionally proved that $(k,\Omega(n))$-gap edit distance requires time $\Omega(\sqrt{k})$. Together, $(k,K)$-gap edit distance requires time $\Omega(n/K + \sqrt{k})$, but this leaves a big gap to the known algorithms. In particular, in the low distance regime (where, say, $K \le n^{0.01}$) the lower bound is $\Omega(n/K)$ and the upper bound is $\widetilde\Order(nk/K)$. Closing this gap has been raised as an open problem by the authors of the previous sublinear-time algorithms.\footnote{See \url{https://youtu.be/WFxk3JAOC84?t=1104} for the open problems raised in the conference talk on~\cite{GoldenbergKS19}, and similarly \url{https://youtu.be/3jfHHEFNRU4?t=1159} for \cite{KociumakaS20}.} In particular, a natural question is to determine the smallest possible gap which can be distinguished within the time budget of the previous algorithms, say \makebox{$\Order(n/k + \poly(k))$}. Is the currently-best $(k,k^2)$-gap barrier penetrable or can one prove a lower bound?

\paragraph{Our Contribution}
We show that $(k,k^{1+o(1)})$-gap edit distance can be solved in time $O(n/k + k^{4+o(1)})$. In the low distance regime ($k \le n^{0.19}$) this runs in the same time as the previous algorithms for the $(k,k^2)$-gap problem, so we significantly improve the gap~\cite{GoldenbergKS19,KociumakaS20}. 
We stress that the quadratic gap seems to be the limit of the techniques of previous work. Formally, we obtain the following results.

\begin{restatable}[Main Theorem]{theorem}{thmmain} \label{thm:main}
Let $2 \leq B \leq k$ be a parameter. The $(k, \Theta(k \log_B(k) \cdot B))$-gap edit distance problem can be solved in time
\begin{equation*}
     \frac nk \cdot (\log k)^{\Order(\log_B(k))} + \widetilde\Order(k^4 \poly(B)).
\end{equation*}
\end{restatable}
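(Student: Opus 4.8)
The plan is to build on a simplified, diagonal-band-restricted version of the Andoni--Krauthgamer--Onak recursion and then prune it aggressively. First I would fix the computation tree. In the YES case $\mathsf{ed}(X,Y)\le k$ every optimal alignment stays inside the diagonal band of width $\Order(k)$, so it suffices to split $X$ into the $n/k$ blocks $X_b=X[bk,(b+1)k)$, attach to each a $B$-ary tree of depth $\log_B(k)$ whose leaves have length $\poly(B)$, and combine the $n/k$ block values by one flat top-level step. Each node $v$ with $X$-block $X_v$ of length $\ell$ stores, for every shift $s$ in a net $\mathcal{S}_v$ of the relevant $\Order(k)$-wide range, an estimate $\widetilde E_v(s)$ of the edit distance between $X_v$ and the window of $Y$ at the home position of $X_v$ displaced by $s$; leaves are solved exactly, and an internal node is evaluated by a small shortest-path computation on the grid (children)$\times$(child shifts), where moving between consecutive children at shifts $\sigma_{j-1},\sigma_j$ costs $\widetilde E_{w_j}(\sigma_j)+|\sigma_j-\sigma_{j-1}|$. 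A streamlined version of the Andoni--Krauthgamer--Onak analysis then shows that the resulting estimate is within a factor $\Order(B\log_B(k))$ of $\mathsf{ed}(X,Y)$ --- each of the $\log_B(k)$ levels losing roughly a factor $\Order(B)$ from the net granularity and the shortest-path rounding --- so thresholding around $k$ distinguishes $\mathsf{ed}(X,Y)\le k$ from $\mathsf{ed}(X,Y)\ge\Theta(kB\log_B(k))$, which is exactly the claimed gap. The only problem is the running time: the top-level grid alone has $\Theta(n/k)\times\Theta(k)$ cells and the levels below it contain $\Theta(n)$ nodes in total.

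Second, I would prune. Call a node $v$ \emph{trivial} if $X_v$ occurs exactly as a substring of its slightly widened un-shifted $Y$-window; then $\widetilde E_v(\cdot)$ is just shift-matching cost and the whole subtree below $v$ may be skipped. In the YES case a non-trivial node must contain an edit of the optimal alignment, and each edit lies in one node per level, so there are at most $k$ non-trivial nodes per level and $\Order(k\log_B(k))$ in total; moreover in any parent's grid every trivial child contributes the same ``zero-edit'' row, so the parent's shortest path branches only at the positions of its non-trivial children. The algorithm therefore caps the number of non-trivial nodes it expands and reports ``far'' when the cap is exceeded: in the YES case the cap is never reached, while if $\mathsf{ed}(X,Y)$ is large then either the cap is hit or the surviving estimates already sum to $\Theta(kB\log_B(k))$, so reporting ``far'' is correct. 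What remains is small: for each of the $\Order(k)$ surviving non-trivial root blocks we read its $\Order(k)$ characters and run the pruned recursion on a length-$k$ string over $\Order(k)$ shifts, which by a direct accounting telescopes to $\widetilde\Order(k^4\poly(B))$ --- the second term.

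Third --- and this is the crux --- for \emph{each} of the $n/k$ root blocks we must still decide whether it is trivial and, if so, output a usable shift, within the per-block budget $(\log k)^{\Order(\log_B(k))}$, which forbids even reading the block. I would run the triviality test itself as a depth-$\log_B(k)$ recursion that keeps a geometric net of $\Order(\log k)$ candidate shifts per level and, at each node, samples $\Order(\log k)$ positions and compares characters against $Y$ to test consistency with an occurrence of the block in its window at one of these shifts; over the tree this is $(\log k)^{\Order(\log_B(k))}$ comparisons per root block. The difficulty is one-sidedness: we must never certify as trivial a block that actually carries many edits (otherwise NO instances leak through), while near-occurrences --- blocks occurring up to a single edit, (approximately) periodic blocks or windows, and blocks self-similar to their neighbors --- both defeat a naive sampling argument and create ambiguity about which shift to report. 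Here I would invoke the structural and property-testing machinery developed earlier: testers for approximate periodicity of blocks and windows, for single-edit near-occurrences, and for inter-block self-similarity, together with the global fact that where $Y$ is highly periodic the shift net collapses to a short arithmetic progression, shrinking both the top-level grid and the per-block search. The main obstacle is to make all of these simultaneously (a) one-sided, so that ``trivial'' is never wrongly certified; (b) cheap enough to fit $k^{o(1)}$ per block into the first term; and (c) compatible with the recursion, so that declaring a node trivial only decreases the global estimate and the $\Order(B\log_B(k))$ factor is preserved.

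Finally I would assemble the pieces: leaves of length $\poly(B)$ solved exactly; the per-level $\Order(\log k)$-factor overhead compounding over $\log_B(k)$ levels into the $(\log k)^{\Order(\log_B(k))}$ charged to the $n/k$ root-block triviality tests; the $\widetilde\Order(k^4\poly(B))$ charged to the surviving non-trivial nodes; and a union bound over the randomized subroutines, after the standard logarithmic amplification, to bound the error probability. Summing gives the stated bound $\frac{n}{k}(\log k)^{\Order(\log_B(k))}+\widetilde\Order(k^4\poly(B))$; the $(k,k^{1+o(1)})$-gap corollary then follows by taking $B=2^{\Theta(\sqrt{\log k})}$, for which both $B\log_B(k)$ and $(\log k)^{\log_B(k)}$ are $k^{o(1)}$.
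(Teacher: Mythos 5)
Your skeleton matches the paper's at a high level (AKO-style hierarchical decomposition, pruning at nodes where $X_v$ occurs exactly in its $Y$-window, $\Order(k)$ non-matched nodes per level, the same $\Theta(kB\log_B k)$ gap and $\frac{n}{k}(\log k)^{\Order(\log_B k)}+\widetilde\Order(k^4\poly(B))$ time split), but there are two substantive gaps beyond the one you flag yourself.

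First, you assume that once a node $v$ is ``trivial'' (i.e.\ $X_v = Y_{v,s^*}$ for some shift $s^*$) its estimate is ``just shift-matching cost'' and the whole subtree may be skipped. This is false: even when $X_v$ matches $Y$ exactly, $\ED(X_v,Y_{v,s}) = \ED(Y_{v,s^*},Y_{v,s})$ is \emph{not} in general $2|s-s^*|$. If $Y_v$ is periodic the correct answer collapses to $2\min_j|s-s^*+j|P||$ (the Periodic Rule), and in the intermediate regime where $Y_v$ is neither periodic nor random-like, neither closed form is correct and the recursion must continue. This is exactly the structure-vs-randomness dichotomy via the block-periodicity measure $\bp_{4K}(Y_v)$ that drives the paper: matched nodes are split into periodic (direct answer), random-like ($\bp_{4K} > 10K$, direct answer $2|s-s^*|$), and intermediate (recurse, but with at most $2\bp_{4K}(Y_v) \le 20K$ non-periodic active descendants per level). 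Without this trichotomy your ``skip the subtree below a trivial node'' rule returns wrong answers on YES instances that are locally (block-)periodic, and your active-node bound has no replacement argument in the intermediate regime.

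Second, your plan never controls additive error accumulation across levels. Your internal node combines its $B$ children's estimates by a shortest-path computation. If each child is computed to additive error $\varepsilon$, the parent's estimate carries additive error $B\varepsilon$; over $\log_B k$ levels this is $B^{\log_B k} = k$, which swamps the signal. The paper avoids this with the Precision Sampling Lemma: children get \emph{independent random rates} drawn from a heavy-tailed distribution, and a median-of-maxima recovery step aggregates their additive errors with only a $(1+\varepsilon)$ multiplicative loss per level while keeping the expected total query cost bounded by $\polylog$. Nothing in your proposal --- uniform ``$\Order(\log k)$ samples per node'' with deterministic shift nets --- substitutes for this; the error analysis simply does not close. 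You do correctly flag the remaining ``crux'' (one-sided matching/periodicity testers compatible with the budget) as an open obstacle; in the paper this is resolved by the Matching Test and Periodicity Test, whose soundness hinges on the same periodicity structure you would need to introduce to fix the first gap.
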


\begin{restatable}[Subpolynomial Gap]{corollary}{coroptimaltime} \label{cor:optimal-time}
In time $\Order(n / k + k^{4+o(1)})$ we can solve $(k, k \cdot 2^{\widetilde\Order(\sqrt{\log k})})$-gap edit distance.
\end{restatable}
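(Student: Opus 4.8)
The plan is to derive Corollary~\ref{cor:optimal-time} purely as an instantiation of Theorem~\ref{thm:main} with a well-chosen value of the parameter $B$; no new algorithmic ingredients are needed. The point is that in Theorem~\ref{thm:main} the slack of the gap over $k$ is $\Theta(\log_B(k)\cdot B)$, whereas the first term of the running time carries an overhead $(\log k)^{\Order(\log_B(k))}$ and the second term an overhead $\poly(B)$. Writing $b := \log B$, so that $\log_B(k) = (\log k)/b$, these three quantities are $2^{b + \Order(\log\log k)}$, $2^{\Order((\log k\cdot\log\log k)/b)}$ and $2^{\Order(b)}$. Thus enlarging $b$ shrinks the running-time overhead but inflates the gap slack, and the two are in balance exactly when $b \approx (\log k\cdot\log\log k)/b$, i.e.\ for $b = \Theta(\sqrt{\log k\cdot\log\log k})$.

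Concretely, I would invoke Theorem~\ref{thm:main} with $B := 2^{\lceil\sqrt{\log k\cdot\log\log k}\,\rceil}$. For all sufficiently large $k$ this lies in the admissible range $2 \le B \le k$ (because $\sqrt{\log k\cdot\log\log k} = o(\log k)$), and for $k$ below a fixed constant the $(k,\Order(k))$-gap problem is solved trivially in $\Order(n) = \Order(n/k)$ time by a banded dynamic program, so we may assume $k$ is large. With this choice $\log_B(k) = \Theta(\sqrt{(\log k)/\log\log k})$, and hence: the gap is $\Theta(k\cdot\log_B(k)\cdot B) = k\cdot 2^{\Theta(\sqrt{\log k\cdot\log\log k})} = k\cdot 2^{\widetilde\Order(\sqrt{\log k})}$ (since $\sqrt{\log\log k}$ is polylogarithmic in $\sqrt{\log k}$); the first running-time term is $\frac{n}{k}\cdot(\log k)^{\Order(\log_B(k))} = \frac{n}{k}\cdot 2^{\Theta(\sqrt{\log k\cdot\log\log k})} = \frac{n}{k}\cdot k^{o(1)}$; and the second term is $\widetilde\Order(k^4\cdot\poly(B)) = k^4\cdot 2^{\Theta(\sqrt{\log k\cdot\log\log k})} = k^{4+o(1)}$. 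Summing the two terms gives running time $\Order(n/k + k^{4+o(1)})$ --- more precisely $n/k^{1\pm o(1)} + k^{4+o(1)}$ --- and in the low-distance regime $k \le n^{0.19}$ the second summand is $n^{0.76+o(1)} = o(n/k)$, so the running time is dominated by $n/k^{1\pm o(1)}$, recovering the bound highlighted in the introduction.

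I do not expect a genuine obstacle here: all of the technical substance resides in Theorem~\ref{thm:main}, and this derivation is just a single one-parameter optimization. The only points requiring a moment's care are (i) restricting to sufficiently large $k$ so that $B\in[2,k]$ and the asymptotic manipulations above are valid, and (ii) performing the balancing so that the gap slack and the running-time overhead become \emph{simultaneously} subpolynomial in $k$; it is precisely the choice $b = \Theta(\sqrt{\log k\cdot\log\log k})$ that forces both into the form $2^{\widetilde\Order(\sqrt{\log k})}$, which is what yields the subpolynomial gap $k\cdot 2^{\widetilde\Order(\sqrt{\log k})}$ asserted in the statement.
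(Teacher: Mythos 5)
Your derivation falls short of what the corollary actually claims. You compute the first running-time term as $\frac{n}{k}\cdot(\log k)^{\Order(\log_B k)} = \frac{n}{k}\cdot 2^{\Theta(\sqrt{\log k\cdot\log\log k})}$ and then round this to ``$\Order(n/k + k^{4+\order(1)})$, more precisely $n/k^{1\pm\order(1)}$''. But $n/k^{1-\order(1)} = \frac{n}{k}\cdot k^{\order(1)}$ is \emph{not} $\Order(n/k)$: the factor $2^{\Theta(\sqrt{\log k\cdot\log\log k})}$ is superconstant. The corollary promises a first term of $n/k$ with no slack in the exponent, and this is precisely what the paper's optimality discussion hinges on (``our running time becomes $O(n/k)$''). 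Moreover, no choice of $B$ alone removes this overhead: any $B$ with $\log B = \order(\log k)$, which is needed to keep the gap subpolynomial, forces $\log_B k\to\infty$ and hence $(\log k)^{\Omega(\log_B k)}\to\infty$.

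The missing idea is a reparametrization of the distance cap. The paper first records, with some fixed $B$, that Theorem~\ref{thm:main} distinguishes $\bar k$ from $\bar k\cdot 2^{\Theta(\sqrt{\log\bar k})}$ in time $\Order(n/\bar k\cdot 2^{\alpha(\bar k)} + \bar k^{4+\order(1)})$, where $\alpha(\bar k)=\Order(\sqrt{\log\bar k}\,\log\log\bar k)$, and then runs this algorithm with the inflated parameter $\bar k := k\cdot 2^{2\alpha(k)}$ rather than $k$. Since $2^{\alpha(\bar k)}\le 2^{2\alpha(k)}$ for large $k$, the first term becomes at most $n/k$; since $2^{2\alpha(k)}=k^{\order(1)}$, the second term stays $k^{4+\order(1)}$; and since $\bar k\ge k$, distinguishing $\bar k$ from $\bar k\cdot 2^{\Theta(\sqrt{\log\bar k})}$ in particular distinguishes $k$ from $k\cdot 2^{\widetilde\Order(\sqrt{\log k})}$. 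Your more careful balancing $B=2^{\sqrt{\log k\cdot\log\log k}}$ is a genuine, if minor, refinement over the paper's $B=2^{\sqrt{\log k}}$, shaving a $\sqrt{\log\log k}$ from the relevant exponents, but it does not and cannot substitute for the $k\to\bar k$ shift.
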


\begin{restatable}[Polylogarithmic Gap]{corollary}{corpolyloggap} \label{cor:polylog-gap}
For any $\varepsilon \in (0,1)$, in time $\Order(n / k^{1-\varepsilon} + k^{4+o(1)})$ we can solve $(k, k \cdot (\log k)^{\Order(1/\varepsilon)})$-gap edit distance.
\end{restatable}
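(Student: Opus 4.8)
The plan is to obtain Corollary~\ref{cor:polylog-gap} directly from the Main Theorem (Theorem~\ref{thm:main}) by instantiating its parameter~$B$ appropriately. Fix $\varepsilon \in (0,1)$ and let $C$ be a sufficiently large absolute constant --- at least the implicit constant in the exponent $\Order(\log_B(k))$ of the running time in Theorem~\ref{thm:main}. I would set
\[
    B := \lceil (\log k)^{C/\varepsilon} \rceil .
\]
There is a threshold $k_0 = k_0(\varepsilon)$ with $2 \le B \le k$ for all $k > k_0$, so Theorem~\ref{thm:main} applies with this~$B$ in that range. For $k \le k_0$ --- a range whose size depends only on~$\varepsilon$ --- I would instead run the textbook $\Order(nk)$-time banded dynamic program, which decides whether the edit distance of the two input strings is at most~$k$ and hence solves every $(k,K)$-gap with $K \ge k$; since there $k$ is bounded in terms of~$\varepsilon$, this takes time $\Order(n) = \Order(n/k^{1-\varepsilon})$.

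Assume now $k > k_0$. From $\log B \ge (C/\varepsilon)\log\log k$ we get $\log_B(k) = \log k/\log B \le \varepsilon\log k/(C\log\log k)$. Hence the leading term of the running time of Theorem~\ref{thm:main} satisfies
\[
    \frac{n}{k} \cdot (\log k)^{\Order(\log_B(k))} = \frac{n}{k} \cdot 2^{\Order(\log_B(k) \cdot \log\log k)} \le \frac{n}{k} \cdot k^{\varepsilon} = \frac{n}{k^{1-\varepsilon}},
\]
where the middle inequality holds because $\log_B(k)\cdot\log\log k \le \varepsilon\log k/C$ and $C$ is at least the implicit constant. For the additive term, $\poly(B) = \poly\bigl((\log k)^{C/\varepsilon}\bigr) = (\log k)^{\Order(1/\varepsilon)} = k^{\order(1)}$, as $\varepsilon$ is a fixed constant; thus $\widetilde\Order(k^4\poly(B)) = k^{4+\order(1)}$. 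This gives the running time $\Order(n/k^{1-\varepsilon} + k^{4+\order(1)})$ claimed.

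For the gap, Theorem~\ref{thm:main} distinguishes the $(k,\ \Theta(k\log_B(k)\cdot B))$-gap, and
\[
    \log_B(k)\cdot B \le \log k \cdot \bigl( (\log k)^{C/\varepsilon} + 1 \bigr) = (\log k)^{\Order(1/\varepsilon)},
\]
using the crude bound $\log_B(k)\le\log k$, valid since $B\ge2$. So the far endpoint of the solved gap is at most $k\cdot(\log k)^{\Order(1/\varepsilon)}$; since enlarging the far endpoint of a gap problem only relaxes it, the same algorithm solves the $(k,\ k\cdot(\log k)^{\Order(1/\varepsilon)})$-gap edit distance problem, which is the assertion.

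I do not expect a genuine obstacle here, as the whole argument is a parameter substitution into Theorem~\ref{thm:main}. The only points needing care are bookkeeping: keeping the single constant~$C$ large enough to absorb the running-time overhead $(\log k)^{\Order(\log_B(k))}$ into $k^{\varepsilon}$, while noting that it is harmless in the gap factor $(\log k)^{\Order(1/\varepsilon)}$ (which already hides $C$ inside its $\Order(\cdot)$); and disposing of the small-$k$ range by the banded dynamic program, as above.
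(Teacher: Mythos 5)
Your proof is correct and follows essentially the same route as the paper: both instantiate Theorem~\ref{thm:main} with $B = (\log k)^{\Theta(1/\varepsilon)}$ chosen so that $(\log k)^{\Order(\log_B(k))} \le k^\varepsilon$, then observe that $\poly(B) = k^{o(1)}$ and that $\log_B(k)\cdot B = (\log k)^{\Order(1/\varepsilon)}$. Your additional treatment of the small-$k$ range (where $B > k$ would violate the hypothesis $B \le k$) is a sensible bookkeeping step that the paper elides.
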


Note that one can solve the $(k,k^2)$-gap edit distance problem by running our algorithm from Corollary~\ref{cor:optimal-time} for $\bar k := k^{2-o(1)}$. This runs in time $O(n/k^{2-o(1)} + \poly(k))$, which improves the previously best running time of $\widetilde \Order(n/k + \poly(k))$ for the $(k,k^2)$-gap edit distance problem~\cite{GoldenbergKS19,KociumakaS20} by a factor $k^{1-o(1)}$ in the low distance regime.

\paragraph{Edit Distance versus Hamming Distance}
It is interesting to compare our result against the best-possible sublinear-time algorithms for approximating the \emph{Hamming distance}. 
For Hamming distance, it is well known that the $(k,K)$-gap problem has complexity $\Theta(n/K)$, with matching upper and (unconditional) lower bounds, assuming that $K/k = 1 + \Omega(1)$.
In the large distance regime, Hamming distance and edit distance have been \emph{separated} by the~$\Omega(\sqrt{k})$ lower bound for $(k,\Omega(n))$-gap edit distance~\cite{BatuEKMRRS03}, because $(k,\Omega(n))$-gap Hamming distance can be solved in time $O(1)$.

Our results show a surprising \emph{similarity} of Hamming distance and edit distance: In the low distance regime ($k \le n^{0.19}$) the complexity of the $(k,k^{1+o(1)})$-gap problem is $n/k^{1 \pm o(1)}$ for both Hamming distance and edit distance. Thus, up to $k^{o(1)}$-factors in the gap and running time, their complexity is the same in the low distance regime.\footnote{We remark that this similarity does not yet follow from previous sublinear algorithms, since they solve the $(k,k^2)$-gap edit distance in time \raisebox{0pt}[0pt][0pt]{$\widetilde\Order(n/k+\poly(k))$}~\cite{GoldenbergKS19,KociumakaS20}, while $(k,k^2)$-gap Hamming distance can be solved much faster, namely in time $O(n/k^2)$.}

\paragraph{Our Techniques}
To achieve our result, we depart from the framework of subsampling the Landau-Vishkin algorithm~\cite{LandauV88,LandauMS98}, which has been developed by the state-of-the-art algorithms for sublinear edit distance~\cite{GoldenbergKS19,KociumakaS20}. Instead, we pick up the thread from the \emph{almost-linear-time} algorithm by Andoni, Krauthgamer and Onak~\cite{AndoniKO10}: First, we give (what we believe to be) a more accessible view on that algorithm. Subsequently, we design a sublinear version of it by \emph{pruning} certain branches in its recursion tree and thereby avoid spending time on ``cheap'' subproblems. To this end, we use a variety of structural insights on the (local and global) patterns that can emerge during the algorithm and design property testers to effectively detect these patterns.

\paragraph{Comparison to Goldenberg, Kociumaka, Krauthgamer and Saha~\cite{GoldenbergKKS21}} In a recent arXiv paper, Goldenberg et al.\ studied the complexity of the $(k, k^2)$-gap edit distance problem in terms of \emph{non-adaptive} algorithms~\cite{GoldenbergKKS21}. Their main result is an $\Order(n / k^{3/2})$-time algorithm, and a matching query complexity lower bound. We remark that our results are incomparable: For the $(k, k^2)$-gap problem they obtain a non-adaptive algorithm (which is faster for large~$k$),
whereas we present an adaptive algorithm (which is faster for small~$k$). Moreover, we can improve the gap to $(k, k^{1+\order(1)})$, while still running in sublinear time~$\Order(n/k)$ when $k$ is small. Our techniques also differ substantially: Goldenberg et al.\ build on the work of Andoni and Onak~\cite{AndoniO12} and Batu et al.~\cite{BatuEKMRRS03}, whereas our algorithm borrows from~\cite{AndoniKO10}.

\paragraph{Outline}
The rest of this paper is structured as follows. In \cref{sec:preliminaries} we introduce the necessary preliminaries. In \cref{sec:overview} we give a high-level overview of our algorithm, starting with the necessary ingredients from previous work. We omit the proof details in this section and give the detailed proofs in \cref{sec:main}.

\section{Preliminaries} \label{sec:preliminaries}
For integers $i, j$ we write $\range ij = \set{i, i+1, \dots, j-1}$ and $\rangezero j = \range 0j$. Moreover, we set $\poly(n) = n^{\Order(1)}$, $\polylog(n) = (\log n)^{\Order(1)}$ and  $\widetilde\Order(n) = n \polylog(n)$.

\paragraph{Strings}
We usually denote strings by capital letters $X, Y, Z$. The length of a string $X$ is denoted by $|X|$ and we write $X \circ Y$ to denote the \emph{concatenation} of $X$ and $Y$. For a nonnegative integer $i$, we denote by~$X \access i$ the $i$-th character in $X$ (starting with index zero). For integers $i, j$ we denote by $X \range ij$ the substring of $X$ with indices in $\range ij$. In particular, if the indices are out-of-bounds, then we set $X \range ij = X \range{\max(i, 0)}{\min(j, |X|)}$.

For two strings $X, Y$ with equal length, $X$ is a \emph{rotation} of $Y$ if $X \access i = Y \access{(i + s) \bmod |Y|}$ for some integer shift $s$. We say that $X$ is \emph{primitive} if all of the nontrivial rotations of $X$ are not equal to $X$. For a string $P$, we denote by $P^*$ the infinite-length string obtained by repeating $P$. We say that $X$ is \emph{periodic with period $P$} if $X = P^* \range{0}{|X|}$. We also say that~$X$ is \emph{$p$-periodic} if it is periodic with some period of length at most~$p$.

\paragraph{Hamming and Edit Distance}
For two equal-length strings $X, Y$, we define their \emph{Hamming distance $\HD(X, Y)$} as the number of non-equal characters: $\HD(X, Y) = |\{\, i : X \access i \neq Y \access i \,\}|$. For two strings~$X, Y$ (with possibly different lengths), we define their \emph{edit distance $\ED(X, Y)$} as the smallest number of edit operations necessary to transform $X$ into $Y$; here, an edit operation means \emph{inserting}, \emph{deleting} or \emph{substituting} a character.

We also define an \emph{optimal alignment}, which is a basic object in several of the forthcoming proofs. For two strings $X, Y$, an \emph{alignment} between $X$ and $Y$ is a monotonically non-decreasing function $A : \set{0, \dots, |X|} \to \set{0, \dots, |Y|}$ such that $A(0) = 0$ and $A(|X|) = |Y|$. We say that $A$ is an \emph{optimal alignment} if additionally
\begin{equation*}
    \ED(X, Y) = \sum_{i=0}^{|X|-1} \ED(X \access i, Y \range{A(i)}{A(i+1)}).    
\end{equation*}
This definition is slightly non-standard (compare for instance to the definition in~\cite{Gusfield97}), but more convenient for our purposes. Note that the alignments between $X$ and $Y$ correspond to the paths through the standard edit distance dynamic program. In that correspondence, an optimal alignment corresponds to a minimum-cost path.

\paragraph{Trees}
In the following we will implicitly refer to trees $T$ where each node has an ordered list of children. A node is a \emph{leaf} if it has no children and otherwise an \emph{internal} node. The \emph{depth} of a node $v$ is defined as the number of
ancestors of $v$, and the depth of a tree $T$ is the length of the longest root-leaf path. We refer to the subset of nodes with depth $i$ as the \emph{$i$-th level} in $T$.

\paragraph{Approximations}
We often deal with additive \emph{and} multiplicative approximations in this paper. To simplify the notation, we say that $\widetilde x$ is a $(a, b)$-approximation of $x$ whenever $x/a - b \leq \widetilde x \leq a x + b$.
\section{Overview} \label{sec:overview}
We start by reinterpreting the algorithm of Andoni, Krauthgamer and Onak~\cite{AndoniKO10} as a \emph{framework} consisting of a few fundamental ingredients (\cref{sec:overview:sec:framework}). We then quickly show how to instantiate this framework to recover their original algorithm (\cref{sec:overview:sec:ako-algorithm}), and then develop further refinements to obtain our main result (\cref{sec:overview:sec:alg}). The goal of this section is to outline the main pieces of our algorithm; we defer the formal proofs to \cref{sec:main} and the Appendix.

\subsection{The Andoni-Krauthgamer-Onak Framework}\label{sec:overview:sec:framework}
\paragraph{First Ingredient: Tree Distance}
The first crucial ingredient for the framework is a way to split the computation of the edit distance into smaller, independent subtasks. A natural approach would be to divide the two strings into equally sized blocks, compute the edit distances of the smaller blocks recursively, and combine the results. The difficulty in doing this is that the edit distance might depend on a \emph{global alignment}, which determines how the blocks should align and therefore the subproblems are not independent (e.g.~the optimal alignment of one block might affect the optimal alignment of the next block). However, this can be overcome by computing the edit distances of one block in one string with several shifts of its corresponding block in the other string, and combining the results smartly. This type of \emph{hierarchical decomposition} appeared in previous algorithms for approximating edit distance~\cite{AndoniO12,AndoniKO10,BatuEKMRRS03,OstrovskyR07}. In particular, Andoni, Krauthgamer and Onak~\cite{AndoniKO10} define a string similarity measure called the \emph{tree distance}\footnote{In fact, Andoni et al.\ call the measure the \emph{$\mathcal E$-distance}. However, in a \href{https://slidetodoc.com/polylogarithmic-approximation-for-edit-distance-and-the-asymmetric/}{talk} by Robert Krauthgamer he recoined the name to \emph{tree distance}, and we decided to stick to this more descriptive name.} which gives a good approximation of the edit distance and cleanly splits the computation into independent subproblems.

We will define the tree distance for an underlying tree $T$ which we sometimes refer to as the \emph{partition tree}.

\begin{definition}[Partition Tree] \label{def:partition-tree}
Let $T$ be a tree where each node $v$ is labeled with a non-empty interval $I_v$. We call $T$ a \emph{partition tree} if
\begin{itemize}
\item for the root node $v$ we have $I_v = \rangezero n$, and
\item for any node $v$ with children $v_0, \dots, v_{B-1}$, $I_v$ is the concatenation of $I_{v_0}, \dots, I_{v_{B-1}}$. 
\end{itemize}
\end{definition}

For the original Andoni-Krauthgamer-Onak algorithm, we will use a complete $B$-ary partition tree $T$ with $n$ leaves. In particular, the depth of $T$ is bounded by $\log_B(n)$. There is a unique way to label $T$ with intervals $I_v$: For the~$i$-th leaf (ordered from left to right) we set $I_v = \set{i}$; this choice determines the intervals for all internal nodes. For our algorithm we will later focus on the subtree of $T$ with depth bounded by $\Order(\log_B(k))$ (this is again a partition tree, as can easily be checked).

The purpose of the partition tree is that it determines a decomposition of two length\=/$n$ strings $X$ and $Y$: For a node labeled with interval $I_v = \range ij$ we focus on the substrings~$X \range ij$ and $Y \range ij$. In particular, the leaf labels in the partition tree determine a partition into consecutive substrings. With this definition in hand, we can define the tree distance:

\begin{definition}[Tree Distance] \label{def:tree-distance}
Let $X, Y$ be length-$n$ strings and let $T$ be a partition tree. For any node $v$ in $T$ and any shift $s \in \Int$, we set:
\begin{itemize}
\item If $v$ is a leaf with $I_v = \range ij$, then $\TD_{v, s}(X, Y) = \ED(X \range ij, Y \range{i+s}{j+s})$.
\item If $v$ is an internal node with children $v_0, \dots, v_{B-1}$, then
\begin{equation} \label{eq:tree-distance}
    \TD_{v, s}(X, Y) = \sum_{i \in \rangezero B} \min_{s' \in \Int} \,(\TD_{v_i, s'}(X, Y) + 2 \cdot |s - s'| ).
\end{equation}
\end{itemize}
We write $\TD_T(X, Y) = \TD_{v, 0}(X, Y)$ where $v$ is the root node in~$T$, and we may omit the subscript $T$ when it is clear from the context.
\end{definition}

\begin{figure}[t]
    \caption{Illustrates the tree distance $\TD_{v, s}(X, Y)$ at a node $v$ with interval $I_v = [\,i\,.\,.j\,]$, shift $s$ and children $v_0,\dots,v_3$. The dashed lines denote the shift given by $s$. The bold lines show the ``local'' shifts $s'$ for each of the children (explicitly labeled for $v_0$).
    } \label{fig:tree-distance}
    \centering
    \vspace{1.4ex}
    \includegraphics[width=12cm]{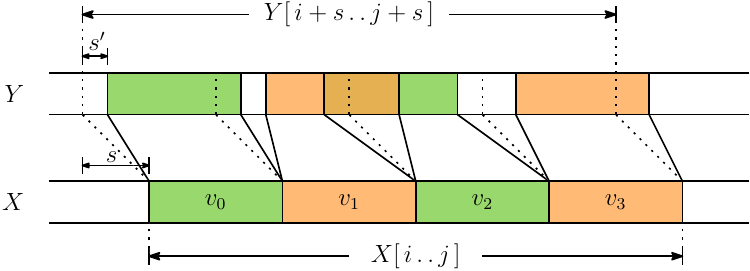}
\end{figure}

\Cref{fig:tree-distance} gives an illustration of this definition. The following lemma (which slightly generalizes the analogous result by~\cite{AndoniKO10}) shows that the tree distance is a useful measure to approximate the edit distance of two strings. We repeat the proof in \cref{sec:ed-td}.

\begin{restatable}[Equivalence of Edit Distance and Tree Distance]{lemma}{lemequivedtd} \label{lem:equivalence-ed-td}
Let~$X, Y$ be strings and let~$T$ be a partition tree with degree at most $B$ and depth at most $D$. Then $\ED(X, Y) \leq \TD_T(X, Y) \leq 2 B D \cdot \ED(X, Y)$.
\end{restatable}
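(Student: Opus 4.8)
The plan is to prove both inequalities by induction on the structure of the partition tree $T$. For the lower bound $\ED(X,Y) \le \TD_T(X,Y)$, I would prove the more general statement that for every node $v$ with $I_v = \range ij$ and every shift $s$, we have $\ED(X\range ij, Y\range{i+s}{j+s}) \le \TD_{v,s}(X,Y)$; the lemma then follows by taking $v$ to be the root and $s=0$. The base case (leaves) is immediate from the definition of $\TD_{v,s}$. For the inductive step at an internal node with children $v_0,\dots,v_{B-1}$, pick for each child $v_\ell$ an optimal local shift $s_\ell'$ achieving the minimum in \eqref{eq:tree-distance}. By the induction hypothesis, $\ED(X\range{i_\ell}{j_\ell}, Y\range{i_\ell + s_\ell'}{j_\ell + s_\ell'}) \le \TD_{v_\ell, s_\ell'}(X,Y)$. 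The key observation is that we can build an alignment of $X\range ij$ to $Y\range{i+s}{j+s}$ by, for each block $\ell$, first ``paying'' to move from the shift $s$ to the shift $s_\ell'$ (which costs at most $2|s - s_\ell'|$ edits, since correcting a shift discrepancy of $t$ between consecutive blocks costs at most $2|t|$ insertions/deletions — roughly $|t|$ deletions to remove a misaligned prefix and $|t|$ insertions to restore it, or a telescoping argument across consecutive blocks), then aligning block $\ell$ under shift $s_\ell'$. Summing over $\ell$ and using subadditivity of edit distance over concatenations gives exactly $\sum_\ell \bigl(\TD_{v_\ell, s_\ell'}(X,Y) + 2|s - s_\ell'|\bigr) = \TD_{v,s}(X,Y)$.

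For the upper bound $\TD_T(X,Y) \le 2BD \cdot \ED(X,Y)$, I would again strengthen the statement to handle all nodes and shifts, but now the natural quantity to compare against is a ``shifted'' cost. Concretely, I would show that for every node $v$ with $I_v = \range ij$ at depth $d$, and every shift $s$, $\TD_{v,s}(X,Y) \le 2B(D-d)\cdot \ED(X\range ij, Y\range{i+s}{j+s}) + (\text{something accounting for }|s|)$ — actually the cleanest route is to fix a \emph{global} optimal alignment $A$ between $X$ and $Y$ and use it to guide the choice of local shifts at every node, so that the shift $s_v$ assigned to node $v$ is dictated by where $A$ maps the left endpoint of $I_v$. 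Then at each internal node the ``shift correction'' terms $2|s_v - s_{v_\ell}|$ telescope and are bounded by the number of insertions/deletions that $A$ performs inside $I_v$, while the leaf terms are bounded by the substitutions (and residual edits) $A$ performs. Each edit operation of $A$ gets charged at most once per level and there are at most $D$ levels, with an extra factor $B$ coming from the fact that the $2|s-s'|$ term can be charged to a child's block; carefully accounting gives the $2BD$ factor.

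The main obstacle, and the part I would be most careful about, is the upper bound's bookkeeping: showing that the local shifts induced by a single global optimal alignment make the recursive sum in \eqref{eq:tree-distance} telescope correctly, and verifying that each of the $\ED(X,Y)$ edit operations is charged at most $2BD$ times in total. In particular one must check that the $2|s-s'|$ penalty in the definition is enough to ``pay'' for re-synchronizing adjacent blocks when the global alignment shifts between them — this is where the factor of $2$ (rather than $1$) in front of $|s-s'|$ is used, since an insertion and a deletion may both be needed, or equivalently a displacement of $t$ between block boundaries forces $|t|$ unmatched characters on each side in the worst decomposition. The lower bound, by contrast, is a fairly routine induction once the alignment-splicing idea is in place. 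Since the excerpt says this lemma slightly generalizes the one in~\cite{AndoniKO10} and that the proof is deferred to \cref{sec:ed-td}, I expect the formal argument to essentially follow their telescoping analysis with the degree and depth parameters tracked explicitly.
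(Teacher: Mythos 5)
Your plan matches the paper's proof in its essentials: the lower bound is proved by induction on the strengthened statement $\ED(X_v, Y_{v,s}) \le \TD_{v,s}(X,Y)$ using subadditivity over children, the triangle inequality, and the observation $\ED(Y_{w,s}, Y_{w,s'}) \le 2|s-s'|$; and for the upper bound the paper indeed abandons induction (citing ``technical complications'') in favor of exactly what you call the ``cleanest route'' --- unfold $\TD(X,Y)$ into a global minimum over shift assignments $s_v$, fix a global optimal alignment $A$, and set $s_v = A(i_v) - i_v$.

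One piece of your upper-bound bookkeeping is misdescribed, though, and if you carried it out as written it would not close. You say the shift-correction terms $2|s_v - s_w|$ ``telescope'' and that the factor $B$ comes from charging the $2|s-s'|$ term to a \emph{child's} block. Neither is quite right. These terms do not telescope; each is bounded individually. And the inequality you'd need for the child-charging, $|s_v - s_w| \le \ED(X_w, Y_w')$, is false in general: since $s_v - s_w = (A(i_v)-i_v) - (A(i_w)-i_w)$, the shift discrepancy is governed by edits that $A$ makes inside $X\range{i_v}{i_w}$, which lies strictly \emph{before} the child $w$'s interval. (Take a child block with no edits sitting after a stretch with many insertions: $\ED(X_w, Y_w')=0$ but $|s_v-s_w|$ is large.) The correct charge is to the \emph{parent}: $|s_v - s_w| \le \ED(X\range{i_v}{i_w}, Y\range{A(i_v)}{A(i_w)}) \le \ED(X_v, Y_v')$, using that $A$ restricted to $I_v$ is an optimal alignment of $X_v$ to $Y_v'$. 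The factor $B$ then appears because a single internal node $v$ has up to $B$ children and each contributes a $2|s_v - s_w|$ term, all charged to the same $\ED(X_v,Y_v')$; the factor $D$ appears because each level's sum $\sum_v \ED(X_v, Y_v')$ equals $\ED(X,Y)$. You flagged the bookkeeping as the part to be careful about, so this is exactly the spot: charge to the parent, not the child, and drop the telescoping intuition.
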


In light of this lemma, we now focus on approximating the tree distance of two strings. The idea behind the Andoni-Krauthgamer-Onak algorithm is to approximately evaluate \cref{def:tree-distance} for all nodes $v$ in the partition tree: For the leaves we directly evaluate the edit distance and for the internal nodes we use \Cref{eq:tree-distance} to combine the recursive computations. However, notice that in \Cref{eq:tree-distance} we minimize over an \emph{infinite} number of shifts $s' \in \Int$. To remedy this situation, recall that we anyways only want to solve a gap problem: Whenever the tree distance exceeds some value $K$ we will immediately report \emph{far}. We therefore restrict our attention to approximating the \emph{capped distances.}

\begin{definition}[Capped Edit Distance] \label{def:capped-ed}
For strings $X, Y$ and $K \geq 0$, we define the \emph{$K$-capped edit distance $\ED^{\leq K}(X, Y) = \min(\ED(X, Y), K)$}.
\end{definition}

\begin{definition}[Capped Tree Distance] \label{def:capped-td}
Let $X, Y$ be strings, let $K \geq 0$ and let $T$ be a partition tree. For any node $v$ in $T$ and any shift $s \in \set{-K, \dots, K}$, we define the \emph{$K$-capped tree distance $\TD^{\leq K}_{v, s}(X, Y)$} as follows:
\begin{itemize}
    \item If $v$ is an a leaf with $I_v = \range ij$, then $\TD_{v, s}^{\leq K}(X, Y) = \ED^{\leq K}(X \range ij, Y \range{i + s}{j + s})$.
    \item If $v$ is an internal node with children $v_0, \dots, v_{B-1}$, then
    \begin{equation} \label{eq:capped-td}
        \TD_{v, s}^{\leq K}(X, Y) = \min\left(\sum_{i \in \rangezero B} \min_{-K \leq s' \leq K} (\TD_{v_i, s'}^{\leq K}(X, Y) + 2 \cdot |s - s'|), K \right)\!.
    \end{equation}
\end{itemize}
We write $\TD_T^{\leq K}(X, Y) = \TD_{v, 0}^{\leq K}(X, Y)$ where $v$ is the root node in $T$, and we omit the subscript $T$ when it is clear from the context.
\end{definition}

It is easy to prove that for computing the $K$-capped tree distance $\TD^{\leq K}(X, Y)$ can be expressed as $\min(\TD(X, Y), K)$, see the following lemma. (The same statement does not apply to $\TD_{v, s}(X, Y)$ for all nodes in the tree.) We provide a formal proof in \cref{sec:ed-td}.

\begin{restatable}[Equivalence of Capped Tree Distances]{lemma}{lemcappedtdequivalence} \label{lem:capped-td-equivalence}
$\TD^{\leq K}(X, Y) = \min(\TD(X, Y), K)$.
\end{restatable}

\paragraph{The Formal Setup}
At this point we are ready to define precisely what we want to compute. We will associate a computational problem to each node in the partition tree $T$. Roughly speaking, the goal for a node $v$ is to approximate the tree distance $\TD_{v, s}^{\leq K}(X, Y)$. We now make the details of this approximation precise. Let us first introduce the following notational shortcuts:
\begin{itemize}
\item $X_v = X \range ij$ (the substring of $X$ relevant at $v$), 
\item $Y_{v, s} = Y \range{i + s}{j + s}$ (the substring of $Y$ relevant at $v$ for one specific shift $s$),
\item $Y_v = Y \range{i - K}{j + K}$ (the entire substring of $Y$ relevant at $v$).
\end{itemize}
Moreover, for each node $v$ in the partition tree we also define:
\begin{itemize}
\item A \emph{rate} (or \emph{additive accuracy}) $r_v$. This parameter serves two purposes. First, it gives a budget for the number of characters that can be read at $v$ (i.e., we are allowed to read~$r_v |X_v|$ symbols). Second, it determines the \emph{additive} approximation guarantee at $v$, in the sense that the output at $v$ is allowed to have additive error $1/r_v$. For the root node we set $r_v = 1000 / K$.
\item The \emph{multiplicative accuracy $\alpha_v > 1$.} This parameter determines the multiplicative approximation ratio at $v$. For the root node we set $\alpha_v = 10$.
\end{itemize}
We are now ready to define the precise computational task for each node:

\begin{definition}[Tree Distance Problem] \label{def:problem}
Given a node $v$ in the partition tree $T$, compute a list of numbers $\Delta_{v, -K}, \dots, \Delta_{v, K}$ such that
\begin{equation} \label{eq:delta-approx}
    \frac1{\alpha_v} \ED^{\leq K}(X_v, Y_{v, s}) - \frac1{r_v} \leq \Delta_{v, s} \leq \alpha_v \TD^{\leq K}_{v, s}(X, Y) + \frac1{r_v}.
\end{equation}
\end{definition}
Given a node $v$ of the partition tree, we will sometimes refer to solving the Tree Distance Problem at $v$ as simply \emph{solving $v$}.

As a sanity check, let us confirm that an algorithm for the Tree Distance Problem can distinguish edit distances with a polylogarithmic gap. Assume that we solve the Tree Distance Problem for the root node $v$, and let $\Delta = \Delta_{v, 0}$. Then, \Cref{eq:delta-approx} implies that \makebox{$0.1 \cdot \ED^{\leq K}(X, Y) - 0.001 K \leq \Delta \leq 10 \cdot \TD^{\leq K}(X, Y) + 0.001 K$}. Consequently, we can distinguish whether the edit distance $\ED(X, Y)$ is at most $k = K / (1000 B \log_B(n))$ or at least~$K$: On the one hand, if $\ED(X, Y) \leq k$ then \cref{lem:equivalence-ed-td,lem:capped-td-equivalence} imply that $\Delta \leq 0.02 K + 0.001 K = 0.021 K$. On the other hand, if $\ED(X, Y) \geq K$ then $\Delta \geq 0.099 K$. 

\paragraph{Second Ingredient: The Precision Sampling Lemma}
The next step is to assign appropriate sampling rates~$r_v$ to every node in the partition tree. The rate at any node indicates that we can afford to read an~$r_v$-fraction of the characters in the computation of that entire subtree (at the root, we will set $r_v = 1000/k$ so we read $\Order(n/k)$ characters in total). 

Focus on some node with rate $r_v$. Due to the sampling rate, we are bound to incur an \emph{additive} error of at least $1/r_v$. The challenge is now the following: We want to assign rates to the $B$ children of $v$ in such a way that we can obtain a good approximation of the tree distance at $v$ after combining the results from the children. Naively assigning the same rate to all children incurs an additive error of at least $B / r_v$, which is too large. To decrease the error we can increase the rate, but this results in reading more than the $r_v \cdot |X_v|$ characters we are aiming for.

Roughly speaking we have an instance of the following problem: There are unknown numbers $A_1, \dots, A_n \in \Real$. We can specify \emph{precisions} $u_1, \dots, u_n$ and obtain estimates $\widetilde A_i$ such that $|A_i - \widetilde A_i| \leq u_i$, where the \emph{cost} of each estimate is $1/u_i$ (in our setting the cost corresponds to the number of characters we read). The goal is to set the precisions appropriately to be able to distinguish whether $\sum_i A_i < 0.1$ or $\sum_i A_i > 10$, say, and minimize the total cost~$\sum_i 1/u_i$. If we set the precisions equally, we would need to have~$u_i < 10/n$ (otherwise we cannot distinguish the case where $A_i = 10/n$ for all $i$ from the case $A_i = 0$ for all $i$), which incurs in total cost $\Omega(n^2)$. Andoni, Krauthgamer and Onak~\cite{AndoniKO10} give a very elegant randomized solution to this problem with total cost $\widetilde\Order(n)$ and good error probability.

\begin{restatable}[Precision Sampling Lemma]{lemma}{lempsl}\label{lem:precision-sampling}
Let $\varepsilon, \delta > 0$. There is a distribution $\mathcal D = \mathcal D(\varepsilon, \delta)$ supported over the real interval $(0,1]$ with the following properties:
\begin{itemize}
\itemdesc{Accuracy:} Let $A_1, \dots, A_n \in \Real$ and let $u_1, \dots, u_n \sim {\cal D}$ be sampled independently. There is a recovery algorithm with the following guarantee: Given $(a, b \cdot u_i)$\=/approximations~$\widetilde A_i$ of~$A_i$, the algorithm $((1+\varepsilon) \cdot a, b)$\=/approximates $\sum_i A_i$ in time $\Order(n \cdot \varepsilon^{-2} \log(\delta^{-1}))$, with probability at least $1 - \delta$ and for any parameters $a \geq 1$ and $b \geq 0$.
\itemdesc{Efficiency:} Sample $u \sim {\cal D}$. Then, for any $N \geq 1$ there is an event $E = E(u)$ happening with probability at least $1 - 1/N$, such that $\Ex_{u \sim \cal D}[\,1/u \mid E\,] \leq \widetilde{\Order}(\varepsilon^{-2} \log(\delta^{-1}) \log N)$.
\end{itemize}
\end{restatable}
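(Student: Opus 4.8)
The plan is to define $\mathcal D$ by a scaling trick: let $u = \varepsilon^2 \cdot u' / (C\log(\delta^{-1}))$ where $u'$ has a heavy-tailed distribution on $(0,1]$, and $C$ is a large constant; rescaling the $A_i$ and the final sum by the same factor means we may as well analyze the raw distribution. The canonical choice is to sample $u'$ with $u' = 1/E^{\,2}$ (truncated to $(0,1]$) for $E$ exponentially distributed, or equivalently to take $u'$ so that $\Pr[u' \le t]$ behaves like $\sqrt t$ near $0$; the key properties I will need are that (i) $\max_i(A_i/u_i)$ concentrates — with good probability it is within a constant factor of $\sum_i A_i$ up to the additive slack — and (ii) $\mathbb E[1/u' \mid u' \ge \tau] = O(\log(1/\tau))$ when we condition on the lower-truncation event $E(u)$ that discards the smallest $1/N$ probability mass. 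I would introduce $\tau$ as that truncation threshold, so $\Pr[u' < \tau] = 1/N$, giving $\tau = \Theta(1/N^{2})$ and hence $\mathbb E[1/u' \mid E] = O(\log N)$ after the $\varepsilon^{-2}\log(\delta^{-1})$ rescaling; this yields the \emph{Efficiency} claim more or less directly, modulo choosing the tail exponent correctly.

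For the \emph{Accuracy} claim, the recovery algorithm is simply: form $\widetilde A_i / u_i$ for each $i$, take the maximum (or a suitably normalized order statistic), and rescale by the fixed constant coming from $\mathcal D$. I would prove the two-sided bound in two halves. For the upper bound $\sum_i A_i \le (1+\varepsilon) a \cdot (\text{estimate}) + b$: since $\widetilde A_i \le a A_i + b u_i$, we get $\widetilde A_i/u_i \le a A_i/u_i + b$, and it suffices to show $\max_i A_i/u_i \ge \Omega(\sum_i A_i)$ with probability $\ge 1-\delta/2$ — this is where the heavy tail of $1/u_i$ is used, via a second-moment or direct tail computation showing $\Pr[\,\forall i: A_i/u_i < c\sum_j A_j\,]$ is exponentially small in the number of "conditioning repetitions," which is why $\mathcal D$ carries the $\log(\delta^{-1})$ factor (one takes the max over $\Theta(\log(\delta^{-1}))$ independent copies, or equivalently inflates the tail). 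For the lower bound $\sum_i A_i \ge \frac{1}{(1+\varepsilon)a}(\text{estimate}) - b$: here I use $\widetilde A_i \ge A_i/a - b u_i$, so $\widetilde A_i/u_i \ge A_i/(a u_i) - b$; the worry is that some single $u_i$ is so tiny that $A_i/u_i$ blows up past $\sum_j A_j$ by more than a $(1+\varepsilon)$ factor — so I must show the top order statistic of $\{A_i/u_i\}$ is, with probability $\ge 1-\delta/2$, \emph{not too much larger} than $\sum_j A_j$. This is a union bound over $i$ against the event $u_i < A_i/((1+\varepsilon)\sum_j A_j \cdot \text{const})$, whose probability is $\approx \sqrt{A_i/\sum_j A_j}/(\cdot)$ — and summing $\sqrt{A_i/\sum A_j}$ over $i$ is $\le \sqrt{n}$, not $o(1)$, which is the crux.

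The main obstacle, therefore, is exactly this last point: controlling the \emph{over}-estimate caused by an anomalously small $u_i$. The fix — and this is the cleverness in the Andoni–Krauthgamer–Onak construction — is to not just take the maximum but to use a slightly more robust statistic (e.g.\ the maximum after discarding, or a clipped sum), together with choosing the tail of $\mathcal D$ to be $\Pr[u < t] \sim t^{1/2}$ rather than something lighter, so that the relevant union bound becomes $\sum_i \Pr[u_i < \beta A_i/\sum_j A_j]$ with each term $\lesssim \sqrt{\beta}\sqrt{A_i/\sum_j A_j}$ and one instead argues about the \emph{expected number} of indices that exceed the threshold being $O(\sqrt\beta \cdot \sqrt n)$ — then normalizing the estimator by this expected count, or by using the $\varepsilon^{-2}$-many conditioning copies to drive a Chernoff bound, absorbs the $\sqrt n$ into the claimed $\varepsilon$-slack and $\delta$-probability. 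I would carry out the steps in the order: (1) define $\mathcal D$ and the rescaling, (2) prove Efficiency by the truncation computation, (3) prove the easy (upper-bound) direction of Accuracy, (4) prove the hard (lower-bound / over-estimate) direction using the robustified estimator and the $\log(\delta^{-1})$ repetitions, (5) assemble the running-time bound $O(n\varepsilon^{-2}\log(\delta^{-1}))$ from the cost of computing $n$ ratios and one linear-time aggregation. I expect step (4) to consume the bulk of the proof.
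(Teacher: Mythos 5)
Your plan correctly isolates the central difficulty---controlling the over-estimate from an anomalously small $u_i$---but your distribution choice does not resolve it, and the crux (your step~4) is left as a sketch: you note that the union bound produces $\sqrt n$ and gesture at absorbing it via rescaling or Chernoff, which does not work. The missing idea is \emph{min-stability of the exponential distribution}. The paper sets $\mathcal D = \Exp(\lambda)$. By scaling, $u_i/A_i \sim \Exp(\lambda A_i)$, and by min-stability $\min_i(u_i/A_i) \sim \Exp(\lambda\sum_i A_i)$. Hence $\max_i A_i/u_i = 1/\min_i(u_i/A_i)$ has an \emph{exactly characterized} distribution depending only on $\sum_i A_i$; in particular, $\ln 2$ times this maximum has median exactly $\sum_i A_i$. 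The recovery algorithm samples $\lambda = \Theta(\varepsilon^{-2}\log(\delta^{-1}))$ independent copies $u_{i,j}\sim\Exp(1)$, sets $u_i = \min_j u_{i,j}$ (so that the additive error $b\,u_i \le b\,u_{i,j}$ is small enough for every copy $j$), computes $\widetilde M_j = \max_i \widetilde A_i/u_{i,j}$, and returns $\ln 2$ times the median of the $\widetilde M_j$. Each $M_j\ln 2$ lies within a $(1\pm\varepsilon)$ factor of $\sum_i A_i$ with probability $\tfrac12 + \Theta(\varepsilon)$ on each side, and the median over $\lambda$ copies boosts this to $1-\delta$ by Chernoff. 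There is no union bound over indices, and the $\sqrt n$ obstacle you identified never arises. Your Efficiency computation, by contrast, is essentially on the right track once the distribution is fixed to $\Exp(\lambda)$: truncate at $u > 1/(\lambda N)$, which happens with probability $e^{-1/N}\ge 1-1/N$, and a direct integral gives $\Ex[1/u\mid E] = \Order(\lambda\log(\lambda N)) = \widetilde\Order(\varepsilon^{-2}\log(\delta^{-1})\log N)$.

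Your proposed tail exponent is wrong. With $\Pr[u' \le t]\sim\sqrt t$ near $0$ (presumably you meant $u' = E^2$ rather than $1/E^2$, since $\Pr[1/E^2 \le t] = e^{-1/\sqrt t}$, which is extremely light, not heavy), take all $A_i = S/n$ with $S = \sum_j A_j$: then $\Pr[\max_i A_i/u_i \le x] = (1-\sqrt{S/(nx)})^n$, which is bounded away from $0$ only when $x = \Theta(nS)$. So the maximum overshoots $\sum_i A_i$ by a factor $n$, not a factor $(1+\varepsilon)$, and no fixed rescaling or robustified order statistic recovers from this. The correct small-$t$ behavior is $\Pr[u \le t] \approx \lambda t$ (exponent~$1$), which is what $\Exp(\lambda)$ provides, and which is precisely what makes $\prod_i \Pr[u_i \ge A_i/x] = e^{-\lambda(\sum_i A_i)/x}$ depend only on $\sum_i A_i$.
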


The Precision Sampling Lemma was first shown in~\cite{AndoniKO10} and later refined and simplified in~\cite{AndoniKO11,Andoni17}. For completeness, we give a full proof in~\cref{sec:psl}.

\paragraph{Third Ingredient: A Range Minimum Problem}
The final ingredient is an efficient algorithm to combine the recursively computed tree distances. Specifically, the following subproblem can be solved efficiently:

\begin{restatable}[A Range Minimum Problem]{lemma}{lemrangeminimum} \label{lem:range-minimum}
There is an $\Order(K)$-time algorithm for the following problem: Given integers $A_{-K}, \dots, A_K$, compute for all $s \in \set{-K, \dots, K}$:
\begin{equation*}
    B_s = \min_{-K \leq s' \leq K} A_{s'} + 2 \cdot |s - s'|.
\end{equation*}
\end{restatable}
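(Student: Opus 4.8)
The plan is to reduce the problem to two independent one-dimensional sweeps, exploiting the fact that the weight $2\cdot|s-s'|$ decomposes into a ``left'' contribution and a ``right'' contribution depending on the sign of $s-s'$. Concretely, write $B_s = \min(L_s, R_s)$ where $L_s = \min_{s' \le s} (A_{s'} + 2(s - s'))$ and $R_s = \min_{s' \ge s} (A_{s'} + 2(s' - s))$, both indices ranging over $\set{-K,\dots,K}$. For the left part, observe that $L_s = 2s + \min_{s' \le s}(A_{s'} - 2s')$, so if we maintain the running prefix minimum $m_s := \min_{-K \le s' \le s}(A_{s'} - 2s')$ as $s$ increases from $-K$ to $K$, we get $L_s = 2s + m_s$ with $O(1)$ work per index via the recurrence $m_s = \min(m_{s-1}, A_s - 2s)$. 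Symmetrically, $R_s = -2s + \min_{s' \ge s}(A_{s'} + 2s')$ is computed by a single right-to-left pass maintaining a suffix minimum. Finally we set $B_s = \min(L_s, R_s)$ for each $s$. Each of the three passes touches $\Theta(K)$ indices and does constant work at each, giving total time $\Order(K)$.

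The first step I would carry out is to verify the decomposition $B_s = \min(L_s, R_s)$: every $s'$ in the range satisfies either $s' \le s$ or $s' \ge s$ (the case $s' = s$ lies in both and contributes $A_s$ to each, which is harmless), so the minimum over all $s'$ equals the smaller of the two restricted minima. The second step is the algebraic separation $A_{s'} + 2|s - s'| = (A_{s'} - 2s') + 2s$ when $s' \le s$, which makes the $s$-dependent term $2s$ a constant that factors out of the minimization; this is the key observation that turns a naive $\Order(K^2)$ double loop into linear time. The third step is to implement the two prefix/suffix-minimum scans and a final combining pass, and to argue the running time by noting that the index set has size $2K+1$.

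I do not expect a genuine obstacle here; the only point requiring a little care is bookkeeping at the boundary indices $s' = \pm K$ and ensuring the prefix-minimum recurrence is correctly initialized (e.g.\ $m_{-K} = A_{-K} + 2K$). One should also note that the $A_s$ are integers and all intermediate quantities stay integral, so there are no precision issues, and the output values $B_s$ are exactly the claimed minima rather than approximations. If one wants to be slightly more careful about the model, each arithmetic operation is on numbers of magnitude $\Order(K + \max_s |A_s|)$, which fits in $\Order(\log K + \log\max_s|A_s|)$ bits and thus in $\Order(1)$ machine words in the standard word-RAM model, so the $\Order(K)$ bound is honest.
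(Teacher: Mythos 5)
Your proposal is correct and takes essentially the same approach as the paper: the paper also splits $B_s$ into a left part over $s' \le s$ and a right part over $s' \ge s$, computing each by a linear sweep and combining with a pointwise minimum. Your ``maintain $m_s = \min(m_{s-1}, A_s - 2s)$ and set $L_s = 2s + m_s$'' is algebraically identical to the paper's recurrence $B_s^L \gets \min(B_{s-1}^L + 2, A_s)$, just with the $2s$ factored out.
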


In~\cite{AndoniKO10}, the authors use efficient Range Minimum queries (for instance implemented by segment trees) to \emph{approximately} solve this problem with a polylogarithmic overhead. We present a simpler, faster and \emph{exact} algorithm; see \cref{sec:range-minimum} for the pseudocode and proof.

\paragraph{The Framework}
We are ready to assemble the three ingredients in a baseline algorithm which solves the Tree Distance Problem; see \cref{alg:ako} for the pseudocode. We will first sketch the correctness of this algorithm. Later (in \cref{sec:overview:sec:ako-algorithm,sec:overview:sec:alg}) we will improve this algorithm to first obtain the Andoni-Krauthgamer-Onak algorithm, and second our sublinear-time version.

\begin{algorithm}[t]
\caption{The framework} \label{alg:ako}
\begin{algorithmic}[1]
\Input{Strings $X, Y$, a node $v$ in the partition tree $T$ and a sampling rate $r_v > 0$}
\Output{$\Delta_{v, s}$ for all shifts $s \in \set{-K, \dots, K}$}
\medskip
\If{$v$ is a leaf} \label{alg:ako:line:test-1-condition}
    \State\Return $\Delta_{v, s} = \ED(X_v, Y_{v, s})$ for all $s \in \set{-K, \dots, K}$ \label{alg:ako:line:test-1}
\EndIf
\ForEach{$i \in \rangezero B$} \label{alg:ako:line:iter-children}
    \State Let $v_i$ be the $i$-th child of $v$ and sample $u_{v_i} \sim \mathcal D$ (with appropriate parameters) \label{alg:ako:line:precision}
    \State Recursively compute $\Delta_{v_i, s}$ with rate $r_v / u_{v_i}$ for all $s \in \set{-K, \dots, K}$ \label{alg:ako:line:recursion}
    \State Compute $\widetilde A_{i, s} = \min_{s' \in \set{-K, \dots, K}} \Delta_{v_i, s'} + 2 \cdot |s - s'|$ using \cref{lem:range-minimum} \label{alg:ako:line:range-minimum}
\EndForEach
\ForEach{$s \in \set{-K, \dots, K}$} \label{alg:ako:line:iter-output}
    \State Let $\Delta_{v, s}$ be the result of the recovery algorithm (\cref{lem:precision-sampling}) applied to  \label{alg:ako:line:recovery}
    \Statex[1] $\widetilde A_{0, s}, \dots, \widetilde A_{B-1, s}$ with  precisions $u_{v_0}, \dots, u_{v_{B-1}}$
\EndForEach
\State\Return $\min(\Delta_{v, s}, K)$ for all $s \in \set{-K, \dots, K}$ \label{alg:ako:line:return}
\end{algorithmic}
\end{algorithm}

We quickly discuss the correctness of \cref{alg:ako}. In \cref{alg:ako:line:test-1-condition,alg:ako:line:test-1} we test whether the node $v$ can be solved trivially: If $v$ is a leaf, then we have reached the base case of \cref{def:tree-distance} and computing the tree distances boils down to computing the edit distance. In \crefrange{alg:ako:line:iter-children}{alg:ako:line:return} we use the approximations recursively computed by $v$'s children to solve~$v$. The idea is to approximately evaluate the following expression for the tree distance (which is equivalent to \Cref{eq:capped-td}):
\begin{equation*}
    \TD^{\leq K}_{v, s}(X, Y) = \min\left(\sum_{i \in \rangezero B} A_{i, s},\,K\right)\!, \quad A_{i, s} = \min_{-K \leq s' \leq K} (\TD^{\leq K}_{v_i, s'}(X, Y) + 2 \cdot |s - s'|).
\end{equation*}
For each child $v_i$ of $v$ the algorithm first recursively computes an approximation~$\Delta_{v_i, s'}$ of $\TD^{\leq K}_{v_i, s'}(X, Y)$ in \cref{alg:ako:line:recursion}. Then, in \cref{alg:ako:line:range-minimum}, we exactly evaluate
\begin{equation*}
    \widetilde A_{i, s} = \min_{-K \leq s' \leq K} \Delta_{v_i, s'} + 2 \cdot |s - s'|
\end{equation*}
for all $s$ using \cref{lem:range-minimum}. In the next step, the Precision Sampling Lemma comes into play. The recursive call returns approximations $\Delta_{v_i, s'}$ with multiplicative error $\alpha_{v_i}$ and additive error $1/r_{v_i} = u_{v_i} / r_v$. Hence, $\widetilde A_{i, s}$ is also an approximation of $A_{i, s}$ with multiplicative error~$\alpha_{v_i}$ and additive error~$u_{v_i} / r_v$. In this situation, the Precision Sampling Lemma (\cref{lem:precision-sampling}) allows to approximate the sum~$\sum_i A_{i, s}$ (for some fixed $s$) with multiplicative error $(1 + \varepsilon) \cdot \alpha_{v_i}$ and additive error $1/r_v$---just as required if we set $\alpha_v > (1 + \varepsilon) \cdot \alpha_{v_i}$. Hence, the values~$\Delta_{v, s}$ computed in \cref{alg:ako:line:recovery} are as claimed (up to taking the minimum with $K$ in order to obtain estimates for the capped tree distance, see \cref{alg:ako:line:return}).

In the next \cref{sec:overview:sec:ako-algorithm,sec:overview:sec:alg} we will instantiate this framework by adding \emph{pruning rules} to \cref{alg:ako}. That is, we design rules to directly solve certain nodes $v$ without having to recur on $v$'s children. In this way we will reduce the number of recursive calls and thereby the running time of \cref{alg:ako}.

\subsection{The Andoni-Krauthgamer-Onak Algorithm}\label{sec:overview:sec:ako-algorithm}
For completeness (and also to showcase how to instantiate the previous framework), we quickly demonstrate how a single pruning rule leads to the algorithm by Andoni, Krauthgamer and Onak~\cite{AndoniKO10}. Specifically, we add the following pruning rule to \cref{alg:ako}: \emph{If~$|X_v| \leq 1/r_v$, then return $\Delta_{v, s} = 0$ for all $s \in \set{-K, \dots, K}$.} This rule is correct, since the edit distance $\ED(X_v, Y_{v, s})$ is at most $|X_v|$. Hence, returning zero satisfies the condition in \Cref{eq:delta-approx}.

It remains to analyze the running time. In this regard the analysis differs quite substantially from our new sublinear-time algorithm. We will therefore be brief here and refer to \cref{sec:ako} for the complete proof. A single execution of \cref{alg:ako} (ignoring the recursive calls and lower-order factors such as $B$) takes roughly time $\Order(K)$. However, note that the recursive calls do not necessarily reach every node in the partition tree: Some nodes~$v$ are trivially solved by the new rule and thus their children are never explored. Let us call a node $v$ \emph{active} if the recursive computation reaches~$v$. One can bound the number of active nodes by $n / K \cdot (\log n)^{\Order(\log_B(n))}$ hence obtaining the time bound~$n \cdot (\log n)^{\Order(\log_B(n))}$.

\subsection{Our Algorithm} \label{sec:overview:sec:alg}
We are finally ready to describe the pruning rules leading to our sublinear-time algorithm. In contrast to the previous section, our pruning rules will allow us to bound the number of active nodes by $\poly(K)$. We will however spend more time for each active node: In the Andoni-Krauthgamer-Onak algorithm the running time per node is essentially~$K$, whereas in our version we run some more elaborate tests per node spending time proportional to~\makebox{$r_v |X_v| + \poly(K)$}. We will now progressively develop the pruning rules; the pseudocode is given at the end of this section.

\subsubsection{The Big Picture}
\paragraph{First Insight: Matching Substrings}
The first insight is that if $\ED(X, Y) \leq K$, then we can assume that for almost all nodes $v$ there exists a shift $s^* \in \set{-K, \dots, K}$ for which~\makebox{$X_v = Y_{v, s^*}$}. In this case, we say that the node~$v$ is \emph{matched}. The benefit is that if we know that $v$ is matched with shift $s^*$, then instead of approximating the edit distances between~$X_v$ and all shifts~$Y_{v,s}$, we can instead approximate the edit distance between $Y_{v, s^*}$ and all shifts~$Y_{v,s}$. That is, it suffices to compute the edit distances between a string and \emph{a shift of itself}.

To see that almost all nodes are matched, consider an optimal alignment between~$X$ and~$Y$ and recall that each level of the partition tree induces a partition of $X$ into substrings~$X_v$. The number of misaligned characters is bounded by $\ED(X, Y)$, thus there are at most $\ED(X, Y)$ parts~$X_v$ containing a misalignment. For all other parts, the complete part~$X_v$ is perfectly matched to some substring $Y_{v, s}$. We prove the claim formally in \cref{lem:alignment-many}. In light of this insight, we can assume that there are only $K$ unmatched nodes---otherwise, the edit distance and thereby also the tree distance between~$X$ and~$Y$ exceeds~$K$ and we can stop the algorithm. In the following we will therefore focus on matched nodes only.

For now we assume that we can efficiently test whether a node is matched. We justify this assumption soon (see the \emph{Matching Test} in~\cref{lem:alignment-test}) by giving a property tester for this problem in sublinear time.

\paragraph{Second Insight: Structure versus Randomness}
The second idea is to exploit a structure versus randomness dichotomy on strings: As the two extreme cases, a string is either periodic or random-like. The hope is that whenever~$Y_v$ falls into one of these extreme categories, then we efficiently approximate $\ED(Y_{v, s^*}, Y_{v, s})$ (without expanding~$v$'s children). Concretely, we use the following measure to interpolate between periodic and random-like:

\begin{definition}[Block Periodicity]
Let $Y$ be a string. The \emph{$K$-block periodicity $\bp_K(Y)$} of~$Y$ is the smallest integer $L$ such that $Y$ can be partitioned into $Y = \bigcirc_{\ell=1}^L Y_\ell$, where each substring~$Y_\ell$ is $K$-periodic (i.e., $Y_\ell$ is periodic with period length at most $K$).
\end{definition}

Suppose for the moment that we could efficiently compute the block periodicity of a string $Y$. Under this assumption, we first compute $\bp_{4K}(Y_v)$ for each matched node $v$ in the tree and distinguish three regimes depending on whether the block periodicity is small, large or intermediate. In the following section we discuss the pruning rules that we apply in these regimes.

\subsubsection{Pruning Rules} \label{sec:overview:sec:alg:sec:regimes}
\paragraph{The Periodic Regime: $\bp_{4K}(Y_v) = 1$}
For this case, we can approximate the edit distances via the following lemma, where you can think of $Y = Y_v$ and $Y_s = Y_{v, s}$.

\begin{restatable}[Periodic Rule]{lemma}{lemperiodicrule}\label{lem:periodicity-rule}
Let $Y$ be a string and write $Y_s = Y \range{K + s}{|Y| - K + s}$. If~$Y$ is periodic with primitive period $P$ and $|Y| \geq |P|^2 + 2K$, then for all~$s, s' \in \set{-K, \dots, K}$:
\begin{equation*}
    \ED(Y_s, Y_{s'}) = 2 \cdot \min_{j \in \Int} \big|\, s - s' + j|P| \,\big|.
\end{equation*}
\end{restatable}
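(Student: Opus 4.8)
The plan is to reduce to a clean statement about windows of the periodic string and then handle the two inequalities separately; the lower bound is where the real work is.

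\textbf{Setup and reduction.} Write $p = |P|$ and $m = |Y|-2K$; since $|Y|\ge p^2+2K$ we have $m\ge p^2$, and for every $s\in\set{-K,\dots,K}$ the substring $Y_s = Y\range{K+s}{|Y|-K+s}$ is well defined, has length exactly $m$, and satisfies $Y_s\access{i}=Y\access{K+s+i}=P^*\access{K+s+i}$. Thus $Y_s$ is the length\=/$m$ window of $P^*$ starting at position $K+s$, and — as $P^*$ is $p$\=/periodic — as a string it depends only on $(K+s)\bmod p$. Hence $\ED(Y_s,Y_{s'})$ depends only on $d:=s-s'$ modulo $p$. Set $\delta := \min_{j\in\Int}|d+jp|$, so that the claimed identity reads $\ED(Y_s,Y_{s'})=2\delta$. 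If $\delta=0$ then $p\mid d$, so $Y_s=Y_{s'}$ as strings and both sides vanish; so from now on assume $1\le\delta\le p/2$. (To avoid index\=/out\=/of\=/bounds bookkeeping it is convenient to place $Y$ in the middle of a sufficiently long power of $P$, so that the windows defining $Y_s,Y_{s'}$ may be shifted by any multiple of $p$ without leaving the string; this is a routine technicality.)

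\textbf{Upper bound $\ED(Y_s,Y_{s'})\le 2\delta$.} Choose $j$ with $|d+jp|=\delta$. Shifting the window defining $Y_{s'}$ by $jp$ positions does not change it as a string but moves its start to within $\delta$ of the start of $Y_s$. So $Y_s$ and $Y_{s'}$ can be realized as two length\=/$m$ windows of $P^*$ whose starting positions differ by exactly $\delta\le p/2\le m$; such windows share a common substring of length $m-\delta$, occurring as a prefix of one and a suffix of the other. Deleting $\delta$ characters from one end and inserting $\delta$ at the other turns one string into the other, giving $\ED(Y_s,Y_{s'})\le 2\delta$.

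\textbf{Lower bound $\ED(Y_s,Y_{s'})\ge 2\delta$.} This is the heart of the proof. Fix an optimal alignment and view it as a monotone lattice path from $(0,0)$ to $(m,m)$, with $Y_s$ on the horizontal and $Y_{s'}$ on the vertical axis; its steps are deletions ($\to$), insertions ($\uparrow$), and diagonal steps ($\nearrow$), the latter being either \emph{matches} or \emph{substitutions}. Since $|Y_s|=|Y_{s'}|=m$, the path uses the same number $t$ of deletions and insertions, whence its cost is $e:=\ED(Y_s,Y_{s'})=2t+(\#\text{substitutions})$; in particular $t\le e/2$, and since at any point the path has done at most $t$ deletions and at most $t$ insertions, the displacement $x-y$ along the path has absolute value at most $t$. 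Suppose for contradiction that $e\le 2\delta-1$. Then $e\le p-1$, the number of matched diagonal steps equals $m+t-e\ge m-e\ge p^2-p$, and these steps split into at most $e+1\le p$ maximal runs; by averaging, some run has length strictly more than $(p^2-p)/p=p-1$, hence length at least $p$. Along this run the displacement is a constant $c$ with $|c|\le t\le e/2\le \delta-\nicefrac12$, so $|c|\le \delta-1$. On the other hand, the $p$ matched character equalities in the run say that two length\=/$p$ windows of $P^*$ agree, i.e.\ two rotations of $P$ are equal; by primitivity of $P$ their phases agree modulo $p$, which, unwinding the definitions of the two windows, forces $c$ to be congruent to $\pm d$ modulo $p$, whence $|c|\ge\min_{j}|d+jp|=\delta$. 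This contradicts $|c|\le\delta-1$, so $e\ge2\delta$, completing the proof.

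\textbf{Expected main obstacle.} The one delicate point is the displacement estimate $|c|\le t\le e/2$ inside the long matched run: the naive bound $|c|\le e$ only yields $\ED(Y_s,Y_{s'})\ge\delta$, off by the crucial factor of $2$. The improvement hinges on $|Y_s|=|Y_{s'}|$, which equalizes the numbers of insertions and deletions. The two hypotheses are each used exactly once: $|Y|\ge p^2+2K$ ensures a matched run long enough (length $\ge p$) to contain a full period and thereby pin down a rotation of $P$, and primitivity of $P$ ensures that this rotation determines the phase modulo $p$.
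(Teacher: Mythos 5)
Your proof is correct and is essentially the paper's argument, rephrased: the paper also finds a length-$|P|$ error-free stretch (by partitioning $Y_s$ into $p$ fixed blocks rather than by pigeonhole on maximal match-runs), applies primitivity to pin down the shift modulo $p$, and invokes a small-stretch bound $|i-A(i)|\le\tfrac12\ED(Y_s,Y_{s'})$ (Proposition~\ref{prop:alignment-stretch}), which is exactly your displacement estimate $|c|\le t\le e/2$ in alignment rather than lattice-path language. The differences are cosmetic; the three key ingredients — a full period of exact matches, primitivity of $P$, and the factor-of-two stretch bound — coincide.
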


Note that if some node $v$ that is matched and we know that $Y_v$ is periodic, then \cref{lem:periodicity-rule} allows us to return $\Delta_{v, s} = 2 \cdot \min_{j \in \Int} \big|\, s - s^* + j|P| \,\big|$ as the desired estimates for each shift~$s$. In this way we do not recur on $v$'s children and thereby prune the entire subtree below $v$.

To get some intuition for why \cref{lem:periodicity-rule} holds, observe that $2 \cdot \min_j \big|\, s - s' + j|P| \,\big|$ is exactly the cost of aligning both $Y_s$ and $Y_{s'}$ in such a way that all occurrences of the period~$P$ match (i.e., we shift both strings to the closest-possible occurrence of~$P$). The interesting part is to prove that this alignment is best-possible. We give the complete proof in \cref{sec:main:sec:rules}.

\paragraph{The Random-Like Regime: $\bp_{4K}(Y_v) > 10K$}
Next, we give the analogous pruning rule for the case when the block periodicity of $Y_v$ is large. We show that in this case, the best possible way to align any two shifts~$Y_{v, s}$ and~$Y_{v, s'}$ is to insert and delete $|s - s'|$ many characters. In this sense, $Y_v$ behaves like a \emph{random} string.

\begin{restatable}[Random-Like Rule]{lemma}{lemrandomrule}\label{lem:random-rule}
Let $Y$ be a string and write $Y_s = Y \range{K + s}{|Y| - K + s}$. If $\bp_{4K}(Y) > 10K$, then for all~$s, s' \in \set{-K, \dots, K}$:
\begin{equation*}
    \ED(Y_s, Y_{s'}) = 2 \cdot |s - s'|.
\end{equation*}
\end{restatable}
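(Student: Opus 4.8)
The plan is to prove both inequalities in $\ED(Y_s, Y_{s'}) = 2|s - s'|$. The upper bound $\ED(Y_s, Y_{s'}) \le 2|s-s'|$ is immediate: we may delete the first $|s-s'|$ characters of the string with the smaller shift and insert them at the end (or symmetrically), which realizes $Y_s$ from $Y_{s'}$ (up to boundary overlap within the window $Y \range{K-K}{|Y|-K+K}$, which is why we have the $\pm K$ padding in the definition of $Y_s$). So the entire content of the lemma is the lower bound $\ED(Y_s, Y_{s'}) \ge 2|s-s'|$, equivalently: one cannot do better than naive shifting. Without loss of generality take $s > s'$ and set $d = s - s'$.

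The key step will be a contrapositive/counting argument: if $\ED(Y_s, Y_{s'}) < 2d$, then $Y$ must contain a long run that is $4K$-periodic, contradicting $\bp_{4K}(Y) > 10K$. Concretely, fix an optimal alignment $A$ between $Y_{s'}$ and $Y_s$ of cost $< 2d$. Since $Y_s$ is $Y_{s'}$ shifted right by $d$, the position $i$ of $Y_{s'}$ corresponds to position $i - d$ of the common string $Y$, while position $i$ of $Y_s$ corresponds to position $i + s - (s'+d)+\dots$ — more cleanly, think of $A$ as matching positions of $Y$ against positions of $Y$ displaced by $d$. The "displacement" of the alignment starts at $0$ on one end and must reach $\pm$ something on the other, but more importantly: each aligned pair $(p, q)$ with $Y\access p = Y\access q$ gives an equation on $Y$; if the alignment has few edits, then long stretches are aligned by substitution-free matches with slowly-varying displacement, and each maximal such stretch of displacement exactly $\delta \ne 0$ forces a $\delta$-periodic (hence $\le$ some bounded length if $|\delta| \le$ something) region of $Y$. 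I would make this precise by tracking the alignment's displacement function $\delta(i) = (\text{position in }Y_s) - (\text{position in }Y_{s'})$ along the alignment path, noting $\delta$ changes by $\pm 1$ at each insertion/deletion, $\delta$ must take the value $d$ "for free" at both endpoints (since $Y_s$ is the $d$-shift of $Y_{s'}$), and an optimal alignment with cost $c < 2d$ can make $\delta$ deviate from its natural value only within a total variation budget of $c$.

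The heart of the argument is then: partition the alignment path into maximal segments on which $\delta$ is constant; on each such segment with $\delta$-value $\delta_j$, all matched (non-substituted) positions $p$ of $Y$ satisfy $Y\access p = Y\access{p + (\text{const})}$, so after removing the $O(c)$ substituted positions, that segment of $Y$ is (close to) periodic with period $|\delta_j - d|$ or similar; crucially each such period length is at most $2K$ (since $|\delta_j|, d \le 2K$, because shifts range in $[-K,K]$), hence $\le 4K$. The number of segments is $O(c) = O(d) \le O(K)$, and the substitutions contribute at most $O(K)$ more breakpoints, so $Y$ decomposes into $O(K)$ pieces each $4K$-periodic — i.e. $\bp_{4K}(Y) = O(K)$. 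Choosing constants carefully (this is where the specific constant $10K$ and the factor $4$ in $\bp_{4K}$ must be matched to the bookkeeping) yields $\bp_{4K}(Y) \le 10K$, contradicting the hypothesis. Therefore $\ED(Y_s,Y_{s'}) \ge 2d = 2|s-s'|$, completing the proof.

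\textbf{Main obstacle.} The delicate part is the bookkeeping that converts "an optimal alignment of small cost" into "a small number of periodic pieces", making sure (i) the periods produced never exceed $4K$ — this relies on $|s|,|s'| \le K$ so all relevant displacements are $O(K)$ — and (ii) that the total number of pieces, counting both the displacement-plateaus and the substitution-induced mismatches, comes out below $10K$ with the constants as stated; one has to be careful that boundary effects at the two ends of the window (the reason for the $K$-padding in $Y_s = Y\range{K+s}{|Y|-K+s}$) do not leak extra edits or extra pieces. A secondary subtlety is handling substitutions vs.\ indels uniformly in the displacement-function argument; I would fold substitutions into the "breakpoint" count rather than the "total variation of $\delta$" count and verify the two budgets do not double-count.
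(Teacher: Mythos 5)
Your proposal is correct and follows essentially the same route as the paper. The paper packages your displacement-plateau counting argument into a standalone lemma (\cref{lem:shift-block-periodic}: if $\ED(X\range{0}{|X|-s}, X\range{s}{|X|}) < 2s$ then $\bp_{2s}(X) \leq 4s$) applied to the window $Z = Y\range{K+s}{|Y|-K+s'}$ containing both $Y_s$ and $Y_{s'}$, then adds $2$ for the two short ends of $Y$ outside $Z$, giving $\bp_{4K}(Y) \leq 8K + 2 \leq 10K$ and settling the constant bookkeeping you flag as the main obstacle.
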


Again, this rule can be used to solve a matched node $v$ directly, pruning the subtree below $v$. We give the proof in \cref{sec:main:sec:rules}.

\paragraph{The Intermediate Regime: $1 < \bp_{4K}(Y_v) \leq 10K$}
We cannot directly approximate the edit distance $\ED(Y_{v, s^*}, Y_{v, s})$ in this case. Instead, we exploit the following lemma to argue that the branching procedure below $v$ is computationally cheap:

\begin{restatable}[Intermediate]{lemma}{lemintermediate} \label{lem:intermediate-efficient}
Let $v$ be a node in the partition tree. Then, in any level in the subtree below $v$, for all but at most $2\bp_{4K}(Y_v)$ nodes $w$ the string $Y_w$ is $4K$-periodic.
\end{restatable}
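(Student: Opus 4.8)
The plan is to unfold the recursive definition of $\bp_{4K}$ along the partition tree and argue that block-periodicity can only ``spread out'' by a bounded amount as we descend to a fixed level. Fix a node $v$ and a level $i$ in the subtree below $v$; let $w_1, \dots, w_m$ (in left-to-right order) be the nodes at that level, so that $I_{w_1}, \dots, I_{w_m}$ partition $I_v$ into consecutive intervals and hence $Y_{w_1}, \dots, Y_{w_m}$ are (essentially) consecutive substrings of $Y_v$. The key observation is that if $Y_w$ is \emph{not} $4K$-periodic then $Y_w$ cannot lie inside a single one of the $\bp_{4K}(Y_v)$ periodic pieces guaranteed by the definition of block periodicity; so the boundary between two of those pieces must fall strictly inside $I_w$ (more precisely, inside the part of $I_w$ that actually sits inside $I_v$).

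First I would set up the bookkeeping carefully around the ``fringe''. Recall $Y_w = Y\range{i_w - K}{j_w + K}$ where $I_w = \range{i_w}{j_w}$, and similarly $Y_v = Y\range{i_v-K}{j_v+K}$; so $Y_w$ is a substring of $Y_v$ whenever $\range{i_w-K}{j_w+K} \subseteq \range{i_v-K}{j_v+K}$, which holds for all but possibly the first and last node of the level — for those two, $Y_w$ pokes outside $Y_v$ by at most $K$ characters on one side. Set these aside (they contribute a constant to the count, which is absorbed into the slack between $\bp_{4K}(Y_v)$ and $2\bp_{4K}(Y_v)$). For every other $w$ at the level, write $L := \bp_{4K}(Y_v)$ and fix a partition $Y_v = \bigcirc_{\ell=1}^{L} Z_\ell$ into $4K$-periodic pieces. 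This induces at most $L-1$ internal ``break positions'' in $Y_v$.

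The crux of the argument is then: if none of these $L-1$ break positions lies in the window of $Y_v$ corresponding to $Y_w$, then $Y_w$ is entirely contained in one piece $Z_\ell$, and a substring of a $4K$-periodic string is $4K$-periodic, so $Y_w$ is $4K$-periodic. Contrapositive: every $w$ at the level with $Y_w$ not $4K$-periodic ``uses up'' at least one break position. Since the windows of $Y_w$ for distinct $w$ at a fixed level overlap only in their length-$2K$ fringes, I need to be slightly careful that one break position is not charged to too many $w$'s; a break position at coordinate $t$ can only fall in $Y_w$'s window for $w$'s whose intervals are within distance $K$ of $t$, and consecutive intervals at a level are disjoint, so at most a bounded number (in fact at most $2$, accounting for the $\pm K$ fringe on either side) of nodes can be charged to the same break. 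This gives $\#\{w : Y_w \text{ not } 4K\text{-periodic}\} \le 2(L-1) + O(1) \le 2L = 2\bp_{4K}(Y_v)$, as claimed.

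I expect the main obstacle to be exactly this double-counting / fringe issue: because each $Y_w$ is padded by $K$ extra characters on each side (the $\range{i_w-K}{j_w+K}$ convention), the windows are not a clean partition and a single period-break can be ``seen'' by more than one node, and the two extreme nodes of the level see characters outside $Y_v$ altogether. The resolution is the factor-$2$ slack in the statement ($2\bp_{4K}$ rather than $\bp_{4K}$) together with the fact that the $4K$-periodicity threshold is deliberately larger than the $K$-scale fringe (so a piece $Z_\ell$ longer than $2K$ still genuinely certifies $4K$-periodicity of any sub-window), which is presumably why the lemma uses $4K$-block periodicity rather than $K$-block periodicity. Everything else — substrings of periodic strings are periodic, consecutive level intervals are disjoint, the root-to-level intervals nest — is routine from Definitions~\ref{def:partition-tree} and the block-periodicity definition.
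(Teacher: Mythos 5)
Your proposal follows essentially the same route as the paper: you work with a fixed optimal decomposition $Y_v = \bigcirc_{\ell=1}^{L} Z_\ell$ into $4K$-periodic pieces, observe that a non-$4K$-periodic window $Y_w$ must straddle a break, and then bound how many windows one break can be charged to. The paper phrases the last step as splitting the level into two families of pairwise-disjoint windows rather than charging breaks, but this is the same counting.

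There is, however, a gap in your charging step. You assert that a break at position $t$ can be charged to at most two nodes, ``accounting for the $\pm K$ fringe''. As stated this is false: the nodes whose window contains $t$ are exactly those whose interval $I_w$ meets $\range{t-K}{t+K+1}$, and since the $I_w$ are merely disjoint (possibly very short), up to $\Theta(K)$ of them could meet that length-$O(K)$ region. The fix, which the paper makes explicit and you only allude to obliquely, is to first discard all nodes with $|Y_w| \leq 4K$: such a window is automatically $4K$-periodic (any string of length at most $p$ is trivially $p$-periodic), so it never needs to be counted. After this discard, every remaining $I_w$ has length $> 2K$, and then at most two disjoint intervals of length $> 2K$ can meet a region of length $2K+1$, which gives exactly your ``at most 2''. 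Without stating this discard, the argument does not go through. One further small inaccuracy: your worry about the first and last node of the level ``poking outside $Y_v$'' is unfounded --- since the level intervals partition $I_v$, the leftmost window $Y_{w_1}$ begins at $i_v - K$, the same index where $Y_v$ begins (and symmetrically on the right), so all windows lie inside $Y_v$ and no boundary nodes need special treatment.
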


Indeed, since any node $v$ for which $Y_w$ is $4K$-periodic can be solved by the Periodic Rule, this lemma implies that there are at most $20K$ active nodes on any level in the partition subtree below any matched node $v$.

\subsubsection{String Property Testers} \label{sec:overview:sec:alg:sec:detecting-periodicity}
Next, we describe how to remove the assumptions that we can efficiently test whether a node is matched and that we can compute block periodicities. We start with the second task.

\paragraph{Computing Block Periodicities?}
The most obvious approach would be to show how to compute (or appropriately approximate) the block periodicity. This is indeed possible, but leads to a more complicated and slower algorithm (in terms of $\poly(K)$).

Instead, we twist the previous argument: We first show how to detect whether a string is periodic (in the straightforward way, see the following \cref{lem:periodicity-test}). For any matched node~$v$ we then run the following procedure: If $Y_v$ is $4K$-periodic, then we solve $v$ according to \cref{lem:periodicity-rule}. Otherwise, we continue to explore $v$'s children---with the following constraint: If at some point there are more than $20K$ active nodes which are not $4K$-periodic on any level in the recursion tree below $v$, then we interrupt the computation of this subtree and immediately solve $v$ according to \cref{lem:random-rule}. This approach is correct, since by \cref{lem:intermediate-efficient} witnessing more than $20K$ active nodes which are not $4K$-periodic on any level serves as a certificate that $\bp_{4K}(Y_v)$ is large. To test whether $Y_v$ is $4K$-periodic, we use the following tester:

\begin{restatable}[Periodicity Test]{lemma}{lemperiodicitytest} \label{lem:periodicity-test}
Let $X$ be a string, and let $r > 0$ be a sampling rate. There is an algorithm which returns one of the following two outputs:
\begin{itemize}
\item \Close[P], where $P$ is a primitive string of length $\leq K$ with $\HD(X, P^* \range{0}{|X|}) \leq 1/r$.
\item \Far, in which case $X$ is not $K$-periodic.
\end{itemize}
The algorithm runs in time~$\Order(r |X| \log(\delta^{-1}) + K)$ and is correct with probability $1 - \delta$.
\end{restatable}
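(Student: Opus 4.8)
The plan is to first recover a candidate period from a small sample of $X$, and then verify it by Hamming-distance estimation. For the \emph{recovery} step, I would sample the first $O(r|X| + K)$ characters of $X$ -- actually it suffices to read the prefix $X\range0{2K}$ entirely, assuming $|X| \ge 2K$ (the case $|X| < 2K$ can be handled directly by reading all of $X$ in time $O(K)$). From this prefix I would compute, using a standard linear-time string-periodicity algorithm (e.g.\ via the failure function / KMP, or Main--Lorentz), the shortest period $p \le K$ such that $X\range0{2K}$ is exactly periodic with some period of length $p$; set $P := X\range0p$, which is primitive by minimality. If no such $p \le K$ exists, then certainly $X$ is not $K$-periodic -- because if $X$ were periodic with period length $\le K$, its prefix of length $2K \ge 2p$ would be too -- so we can safely output \Far. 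This establishes that whenever the algorithm proceeds past recovery, we have a concrete primitive candidate $P$ of length $\le K$, and whenever $X$ genuinely is $K$-periodic, the recovered $P$ is (a rotation of) its true primitive period.

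For the \emph{verification} step, I would estimate $\HD(X, P^*\range0{|X|})$ by random sampling: draw $\Theta(r|X|\log(\delta^{-1}))$ indices $i$ uniformly from $\rangezero{|X|}$ (reading the single needed character $X\access i$ each time, while $P^*\access i = P\access{i \bmod |P|}$ is known for free), and let $\widehat h$ be $|X|$ times the empirical fraction of mismatches. By a Chernoff bound, with probability $\ge 1 - \delta$ we can decide whether $\HD(X, P^*\range0{|X|}) \le \tfrac{1}{2r}$ or $\HD(X, P^*\range0{|X|}) > \tfrac1r$; if the former, output \Close[P]; if the latter, output \Far. This is consistent with the lemma's guarantees: in the \Close case we have indeed produced a primitive $P$ of length $\le K$ with $\HD(X, P^*\range0{|X|}) \le 1/r$; in the \Far case, $\HD(X, P^*\range0{|X|}) > 0$, and since $P$ was extracted as the prefix period, this forces $X$ to \emph{not} be exactly periodic with period $P$, and by minimality of $p$ it is not $K$-periodic at all. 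The running time is $O(K)$ for the prefix read plus the linear-time period computation on a length-$2K$ string, plus $O(r|X|\log(\delta^{-1}))$ for the sampling, matching the claimed bound $O(r|X|\log(\delta^{-1}) + K)$.

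The main obstacle is making the \emph{soundness of the \Far verdict precise}: we must ensure that ``$X$ is not $K$-periodic'' really follows whenever we output \Far in the verification step, not merely ``$X$ is not periodic with the \emph{particular} period $P$''. The subtle point is that $X$ might be periodic with some period length $p' \le K$ with $p' \ne p$ -- but this cannot happen once we know $X\range0{2K}$ forced the shortest prefix period to be exactly $p$: if $X$ were $p'$-periodic, its prefix of length $2K \ge 2p'$ would be $p'$-periodic, contradicting minimality of $p$ unless $p' = p$ and $P$ is a rotation of $X\range0{p'}$; and then $X$ being $p'$-periodic would make $X\range0{|X|} = P^*\range0{|X|}$, contradicting $\HD > 0$. (Here one uses that a primitive string has a unique rotation that serves as the period of any given periodic string -- a consequence of the Fine--Wilf theorem.) A secondary, milder obstacle is the boundary case $|X| < 2K$, which is handled separately by brute force within the allotted $O(K)$ time. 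Finally, one should double-check the constant slack between the $1/(2r)$ decision threshold used internally and the $1/r$ guarantee stated in the lemma, which is what gives the Chernoff bound a constant-factor gap to exploit.
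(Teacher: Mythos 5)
Your proposal is essentially the same as the paper's proof: read the length-$2K$ prefix, extract its smallest period $\leq K$ via a linear-time pattern-matching computation (reporting far if none exists), and verify globally by random sampling. The only minor difference is that the paper uses the one-sided Equality Test of \cref{lem:equality-test} --- which rejects only upon witnessing an actual mismatch $X\access{i} \neq P^*\access{i}$ --- rather than a two-sided Chernoff threshold, and your Fine--Wilf argument for the soundness of the far verdict makes explicit a detail the paper leaves implicit.
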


Note that as we are shooting for a sublinear-time algorithm we have to resort to a property tester which can only distinguish between \emph{close} and \emph{far} properties (in this case: periodic or \emph{far from periodic}). The proof of \cref{lem:periodicity-test} is simple, see \cref{sec:main:sec:property-tests} for details.

\paragraph{Testing for Matched Nodes}
We need another algorithmic primitive to test whether a node is matched. Again, since our goal is to design a sublinear-time algorithm, we settle for the following algorithm which distinguishes whether $v$ is matched or \emph{far from matched}.

\begin{restatable}[Matching Test]{lemma}{lemalignmenttest} \label{lem:alignment-test}
Let $X, Y$ be strings such that $|Y| = |X| + 2K$, and let $r > 0$ be a sampling rate. There is an algorithm which returns one of the following two outputs:
\begin{itemize}
\item \Close[s^*], where $s^* \in \set{-K, \dots, K}$ satisfies $\HD(X, Y \range{K + s^*}{|X| + K + s^*}) \leq 1/r$.
\item \Far, in which case there is no $s^* \in \set{-K, \dots, K}$ with $X = Y \range{K + s^*}{|X| + K + s^*}$.
\end{itemize}
The algorithm runs time $\Order(r |X| \log(\delta^{-1}) + K \log |X|)$ and is correct with probability $1 - \delta$.
\end{restatable}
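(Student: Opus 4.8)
The plan is to reduce the Matching Test to two subtasks: generating a short list of candidate shifts, and testing a candidate by estimating a Hamming distance via uniform sampling. Fix a suitable threshold $c = \Theta(K)$ with $c > 2K$. If $|X| \le c$, we read all of $X$ and $Y$ (costing $\Order(K)$ time) and, by a linear-time exact string-matching algorithm, search for an occurrence of $X$ in $Y$ starting at a position in $\set{0,\dots,2K}$; we report that position minus $K$ if one exists and \Far otherwise. So assume $|X| > c$. For candidate generation we read $X \range 0c$ and $Y \range 0{c+2K}$ (again $\Order(K)$ reads) and compute, by a linear-time algorithm, all positions $p$ where $X \range 0c$ occurs in $Y \range 0{c+2K}$; the candidate shifts are the values $s = p - K$. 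Whenever $X = Y \range{K+s^*}{|X|+K+s^*}$ exactly, then $X \range 0c = Y \range{K+s^*}{K+s^*+c}$, so $s^*$ is a candidate. By the Fine--Wilf periodicity lemma any two occurrences of $X \range 0c$ in $Y \range 0{c+2K}$ differ by a period of $X \range 0c$ (using $c > 2K$), so either (a) there is at most one candidate, or (b) there are several candidates, all of the form $s_0 + jq$ with $0 \le j \le t$, where $q \le 2K$ is the shortest period of $X \range 0c$ and $tq \le 2K$; moreover in case (b) the read window $Y \range{K+s_0}{K+s_0+tq+c}$ is periodic with period length $q$.

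To test a single candidate shift $s$ we sample $\Theta(r|X|\log\delta^{-1})$ uniformly random positions $i$, read $X \access i$ and $Y \access{K+s+i}$, and use the empirical mismatch frequency; a Chernoff bound gives, with probability $1-\delta$, an estimate of $\HD(X, Y \range{K+s}{|X|+K+s})$ with additive error $\Order(1/r)$ --- enough to distinguish ``$\HD = 0$'' from ``$\HD > 1/r$''. This costs $\Order(r|X|\log\delta^{-1})$ time and only $\Order(1)$ reads of $Y$ per sampled position. In case (a) we run this test on the single candidate (and report \Far if there is none). The difficulty is case (b), where there may be $\Theta(K)$ candidates, so testing them one by one would be too slow. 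We therefore first run the Periodicity Test (\cref{lem:periodicity-test}) on $X$ with an appropriate rate and period bound $2K$. If it reports \Close[P], then, since $X \range 0c$ is exactly $q$-periodic and $c = \Theta(K)$ is large enough, we may take $|P| = q$ and obtain $\HD(X, P^* \range 0{|X|}) = \Order(1/r)$; we then prove a \emph{collapsing lemma}: if $X = Y \range{K+s_j}{|X|+K+s_j}$ for any candidate $s_j$, then also $\HD(X, Y \range{K+s_0}{|X|+K+s_0}) = \Order(1/r)$. The proof rewrites the mismatches of the $s_0$-window against $X$ as a self-overlap of $Y$ at a shift that is a multiple of $q$, and bounds their number using the exact $q$-periodicity of the read prefixes together with the global approximate $q$-periodicity of $X$: every such mismatch forces a deviation of $X$ from $P^*$, which is charged to the $\Order(1/r)$ budget. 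Hence it suffices to test the single candidate $s_0$.

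If instead the Periodicity Test reports \Far, then $X$ is not $2K$-periodic, and we use the structural fact that \emph{at most one} candidate $s_{j^*}$ can satisfy $X = Y \range{K+s_{j^*}}{|X|+K+s_{j^*}}$: if two distinct candidates did, the resulting two equal windows of $Y$ would force $X$ to be periodic with period $\le 2K$, a contradiction. To pin down this unique viable candidate without the $\Theta(K)$ blow-up, we exploit that all candidates agree with $X$ throughout the long ``periodic interior'' of $X$ (every position that is more than $\Theta(K)$ before the periodicity breakpoint $\beta$ of $X$), so that candidates are distinguishable only at positions $i \gtrsim \beta$; a single shared sample, together with an approximate knowledge of $\beta$, then both restricts attention to $\Order(1)$ relevant candidates and charges the $Y$-reads only to the comparatively few sampled positions past $\beta$. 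The main obstacle is exactly this bookkeeping: organizing the search so that the unique candidate is identified within $\Order(r|X|\log\delta^{-1} + K\log|X|)$ total time, since a naive implementation pays an extra factor of $K$, either in the number of Hamming tests or in the number of $Y$-characters queried at the shifted sample positions across all candidates. Once this is handled, a union bound over the $\Order(\log|X|)$ tests performed controls the failure probability, and summing the per-phase costs yields the claimed running time.
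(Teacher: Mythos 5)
Your Step 1 (candidate generation via exact matching of an $\Order(K)$-length prefix) and your observation that candidates form an arithmetic progression with common difference the period $q$ of the prefix are both on the right track and essentially match the paper's approach. Your handling of the periodic branch is also plausible, though more roundabout than needed: the paper simply runs the Equality Test on $X$ vs.\ $P^*$ and on $Y_s$ vs.\ $P^*$ for a single candidate $s$ and concludes $\HD(X, Y_s) \leq 1/r$ by the triangle inequality, whereas you invoke a ``collapsing lemma'' whose correctness additionally hinges on the Periodicity Test's returned period $P$ coinciding with (or at least dividing) the prefix period $q$ --- a point you assert (``we may take $|P| = q$'') but do not justify; the returned primitive period of a short prefix of $X$ need not equal $q$ without further argument.

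The genuine gap is in your non-periodic branch. You correctly observe that at most one candidate shift can give an exact match when $X$ is not $2K$-periodic, but you do not give a concrete algorithm to \emph{find} that candidate within the time budget. The phrases ``a single shared sample, together with an approximate knowledge of $\beta$'' and ``Once this is handled'' signal exactly the missing piece, and the object $\beta$ itself is not well-defined: if $X$ is far from $P^*$-periodic there is no single ``periodicity breakpoint,'' only a first deviation, and deviations may be dense. What the paper does here is define a \emph{leading mismatch} (a mismatch against $P^*$, in $X$ or in $Y_s$, with no mismatches in the $2K$ positions to its left), locate one by a binary search that maintains ``$\range{L}{L+2K}$ is mismatch-free'' and ``$R$ is a mismatch,'' probing an $\Order(K)$-length window at the midpoint in each of $\Order(\log|X|)$ rounds --- giving the $\Order(K\log|X|)$ term --- and then derive from that leading mismatch a \emph{unique} candidate shift $s^* = s + j - i$ by aligning the first deviations of $X$ and $Y_s$ to the right of it. That produces a single shift to feed into the Equality Test, eliminating the $\Theta(K)$ blow-up you flag. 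Without an argument of this kind (or an equivalent one), your running-time claim for the non-periodic branch is unsupported, so the proof as written is incomplete.
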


The proof of \cref{lem:alignment-test} is non-trivial. It involves checking whether $X$ and $Y$ follow a common period---in this case we can simply return any shift respecting the periodic pattern. If instead we witness errors to the periodic pattern, then we try to identify a shift under which also the errors align. See \cref{sec:main:sec:property-tests} for the details.

\subsubsection{Putting the Pieces Together} \label{sec:overview:sec:alg:sec:assemble}
We finally assemble our complete algorithm. The pseudocode is given in \cref{alg:main}. In this section we will sketch that \cref{alg:main} correctly and efficiently solves the Tree Distance Problem. The formal analysis is deferred to~\cref{sec:main}.

\begin{algorithm}[t]
\caption{} \label{alg:main}
\begin{algorithmic}[1]
\Input{Strings $X, Y$, a node $v$ in the partition tree $T$ and a rate $r_v$}
\Output{$\Delta_{v, s}$ for all shifts $s \in \set{-K, \dots, K}$}
\medskip
\If{$v$ is a leaf \OR{} $|X_v| \leq 100 K^2$} \label{alg:main:line:test-short-condition}
    \State\Return $\Delta_{v, s} = \ED^{\leq K}(X_v, Y_{v, s})$ (or compute $2$-approximations using \cref{thm:ed-shifts}) \label{alg:main:line:test-short}
\EndIf
\State Run the Matching Test (\cref{lem:alignment-test}) for $X_v, Y_v$ (with $r = 3r_v$ and $\delta = 0.01 \cdot K^{-100}$) \label{alg:main:line:alignment-test}
\State Run the $4K$-Periodicity Test (\cref{lem:periodicity-test}) for $Y_v$ (with $r = 3r_v$ and $\delta = 0.01 \cdot K^{-100}$) \label{alg:main:line:periodicity-test}
\If{the Matching Test returns \Close[s^*]} \label{alg:main:line:alignment-test-condition}
    \If{the Periodicity Test returns \Close[P]} \label{alg:main:line:periodicity-test-condition}
        \State\Return $\Delta_{v, s} = 2 \cdot \min_{j \in \Int} |\,s - s^* + j |P|\,|$ \label{alg:main:line:periodicity-rule}
    \Else
        \State Continue in \cref{alg:main:line:iter-children} with the following exception: If at some point during the \label{alg:main:line:random-rule}
        \Statex[2] recursive computation there is some level containing more than $20K$ active 
        \Statex[2] nodes below $v$ for which the Periodicity Test (in \cref{alg:main:line:periodicity-test}) reports \Far, then
        \Statex[2] interrupt the recursive computation and \Return $\Delta_{v, s} = 2 \cdot |s - s^*|$
    \EndIf
\EndIf

\medskip
\ForEach{$i \in \rangezero B$} \label{alg:main:line:iter-children}
    \State Let $v_i$ be the $i$-th child of $v$ and sample $u_{v_i} \sim \mathcal D((200 \log K)^{-1}, 0.01 \cdot K^{-101})$
    \State Recursively compute $\Delta_{v_i, s}$ with rate $r_v / u_{v_i}$ for all $s \in \set{-K, \dots, K}$ \label{alg:main:line:recursion}
    \State Compute $\widetilde A_{i, s} = \min_{s' \in \set{-K, \dots, K}} \Delta_{v_i, s'} + 2 \cdot |s - s'|$ using \cref{lem:range-minimum} \label{alg:main:line:range-minimum}
\EndForEach
\ForEach{$s \in \set{-K, \dots, K}$}
    \State Let $\Delta_{v, s}$ be the result of the recovery algorithm (\cref{lem:precision-sampling}) applied to  \label{alg:main:line:recovery}
    \Statex[1] $\widetilde A_{0, s}, \dots, \widetilde A_{B-1, s}$ with  precisions $u_{v_0}, \dots, u_{v_{B-1}}$
\EndForEach
\State\Return $\min(\Delta_{v, s}, K)$ for all $s \in \set{-K, \dots, K}$ \label{alg:main:line:return}
\end{algorithmic}
\end{algorithm}

\paragraph{Correctness}
We first sketch the correctness of \cref{alg:main}. The recursive calls in \crefrange{alg:main:line:iter-children}{alg:main:line:return} are essentially copied from \cref{alg:ako} (except for differences in the parameters which are not important here) and correct by the same argument as before (using the Precision Sampling Lemma as the main ingredient). The interesting part happens in \crefrange{alg:main:line:test-short-condition}{alg:main:line:random-rule}. In \cref{alg:main:line:test-short-condition,alg:main:line:test-short} we test whether the strings are short enough so that we can afford to compute the edit distances $\ED(X_v, Y_{v, s})$ by brute-force. If not, we continue to run the Matching Test for $X_v$ and $Y_v$ and the Periodicity Test for $Y_v$.

There are two interesting cases---both assume that the Matching Test reports \Close[s^*] and therefore~$X_v \approx Y_{v, s^*}$ where $\approx$ denotes equality up to $1/(3r_v)$ Hamming errors. If also the Periodicity Test returns \Close[P] then we are in the situation that $X_v \approx Y_{v, s^*} \approx P^*$. Moreover~$Y_{v, s}$ is $\approx$-approximately equal to a shift of $P^*$. \cref{lem:periodicity-rule} implies that the edit distance between~$X_v$ and $Y_{v, s}$ is approximately $2 \cdot \min_{j \in \Int} |\,s - s^* + j|P|\,|$. We lose an additive error of $1/(3r_v)$ for each of the three $\approx$ relations, hence the total additive error is $1/r_v$ as hoped, and we do not introduce any multiplicative error.

Next, assume that the Periodicity Test reports \Far. Then, as stated in \cref{alg:main:line:random-rule} we continue the recursive computation, but with an exception: If at some point during the recursive computation there are more than $20K$ active nodes on any level for which the Periodicity Test reports \Far, then we interrupt the computation and return $\Delta_{v, s} = 2 \cdot |s - s^*|$. Suppose that indeed this exception occurs. Then there are more than $20K$ nodes $w$ on one level of the partition tree below~$v$ for which $Y_w$ is \emph{not} $4K$-periodic. Using \cref{lem:intermediate-efficient} we conclude that~$\bp_{4K}(Y_v) > 10K$, and therefore returning $\Delta_{v, s} = 2 \cdot |s - s^*|$ is correct by \cref{lem:random-rule}.

\paragraph{Running Time}
We will first think of running \cref{alg:main} with $T$ being a balanced $B$-ary partition tree with $n$ leaves, just as as in the Andoni-Krauthgamer-Onak algorithm (we will soon explain why we need to modify this). To bound the running time of \cref{alg:main} we first prove that the number of active nodes in the partition tree is bounded by $\poly(K)$. One can show that there are only $\poly(K)$ unmatched nodes in the tree (see \cref{lem:alignment-many}), so we may only focus on the matched nodes. Each matched node, however, is either solved directly (in \cref{alg:main:line:test-short} or in \cref{alg:main:line:periodicity-rule}) or continues the recursive computation with at most $\poly(K)$ active nodes (in \cref{alg:main:line:random-rule}). In \cref{sec:main} we give more details.

Knowing that the number of active nodes is small, we continue to bound the total expected running time of \cref{alg:main}. It is easy to check that a single execution of \cref{alg:main} (ignoring the cost of recursive calls) is roughly in time $r_v |X_v| + \poly(K)$. Using the Precision Sampling Lemma, we first bound $r_v$ by $1/K \cdot (\log n)^{\Order(\log_B(n))}$ in expectation, for any node~$v$. Therefore, the total running time can be bounded by
\begin{align*}
    \sum_{\text{$v$ active}} (r_v |X_v| + \poly(K))
    &\leq \frac1K \cdot (\log n)^{\Order(\log_B(n))} \cdot \sum_v |X_v| + \poly(K) \\
    &\leq \frac nK \cdot (\log n)^{\Order(\log_B(n))} + \poly(K).
\end{align*}
Here we used that $\sum_w |X_w| = n$ where the sum is over all nodes~$w$ on any fixed level in the partition tree. It follows that $\sum_v |X_v| \leq n \log n$, summing over \emph{all} nodes $v$.

\paragraph{Optimizing the Lower-Order Terms}
This running time bound does not match the claimed bound in \cref{thm:main}: The overhead $(\log n)^{\Order(\log_B(n))}$ should rather be $(\log K)^{\Order(\log_B(K))}$ and not depend on $n$. If $n \leq K^{100}$, say, then both terms match. But we can also reduce the running time in the general case by ``cutting'' the partition tree at depth $\log_B(K^{100})$. That is, we delete all nodes below that depth from the partition tree and treat the nodes at depth $\log_B(K^{100})$ as leaves in the algorithm. The remaining tree is still a partition tree, according to \cref{def:partition-tree}. The correctness argument remains valid, but we have to prove that the running time of \cref{alg:main:line:test-short} does not explode. For each leaf at depth~$\log_B(K^{100})$ we have that~$|X_v| \leq n / K^{100}$ and for that reason computing $\ED^{\leq K}(X_v, Y_{v, s})$ for all shifts $s$ (say with the Landau-Vishkin algorithm) takes time $\Order(n / K^{99} + K^3)$. Since there are much less than~$K^{99}$ active nodes, the total contribution can again be bounded by~$\Order(n / K + \poly(K))$.

\paragraph{Optimizing the Polynomial Dependence on $K$}
Finally, let us pinpoint the exponent of the polynomial dependence on $K$. In the current algorithm we can bound the number of active nodes by roughly $K^2$ (there are at most $K$ nodes per level for which the matching test fails, and the subtrees rooted at these nodes contain at most $K$ active nodes per level). The most expensive step in \cref{alg:main} turns out to be the previously mentioned edit distance computation in \cref{alg:main:line:test-short}. Using the Landau-Vishkin algorithm (with cap~$K$) for~$\Order(K)$ shifts~$s$, the running time of \cref{alg:main:line:test-short} incurs a cubic dependence on $K$, and therefore the total dependence on $K$ becomes roughly $K^5$.

We give a simple improvement to lower the dependence to $K^4$ and leave further optimizations of the $\poly(K)$ dependence as future work. The idea is to use the following result on approximating the edit distances \emph{for many shifts $s$:}

\begin{restatable}[Edit Distance for Many Shifts]{theorem}{thmedshifts} \label{thm:ed-shifts}
Let $X, Y$ be strings with $|Y| = |X| + 2K$. We can compute a (multiplicative) $2$-approximation of $\ED^{\leq K}(X, Y \range{K + s}{|X| + K + s})$, for all shifts~$s \in \set{-K, \dots, K}$, in time $\Order(|X| + K^2)$.
\end{restatable}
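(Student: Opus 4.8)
**The plan is to reduce the problem to a handful of exact capped-edit-distance computations between carefully chosen strings, and then transfer those answers to all shifts $s$ by a triangle-inequality argument that loses only a factor $2$.**

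First I would observe that computing $\ED^{\leq K}(X, Y\range{K+s}{|X|+K+s})$ exactly for all $2K+1$ shifts via Landau--Vishkin would cost $\Order((|X|+K)\cdot K)$ (each capped computation is $\Order(|X|+K)$, or even $\Order(K^2)$ with the diagonal-band trick, but we would still pay for $\Theta(K)$ shifts). To beat this, the key structural fact I would exploit is: if $s$ and $s'$ are two shifts, then writing $Y_s = Y\range{K+s}{|X|+K+s}$ we have $\ED(Y_s, Y_{s'}) \le 2|s-s'|$, simply by deleting $|s-s'|$ characters from one end and inserting $|s-s'|$ at the other. Hence $\ED^{\leq K}(X,Y_s)$ and $\ED^{\leq K}(X,Y_{s'})$ differ by at most $2|s-s'|$. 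So if I partition the shift range $\{-K,\dots,K\}$ into blocks of length roughly $\Theta(K/\log K)$ --- no wait, that still gives a multiplicative error that is too large when the true distance is small. Instead I would partition into blocks and handle each block with a single \emph{exact} computation at the block's representative shift together with a correction: within a block of radius $\rho$, the additive slack $2\rho$ can be absorbed into a multiplicative $2$-approximation only when the true answer is $\gtrsim 2\rho$, so I would instead compute the answer exactly whenever it is below a threshold and use the block representative otherwise.

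Concretely, here is the approach I would carry out. Run the Landau--Vishkin / Myers-style algorithm on $X$ against $Y\range{0}{|Y|}$ but restricted to the \emph{union} of all $2K+1$ diagonal bands of width $\Order(K)$: since $|Y|=|X|+2K$ there are only $\Order(K)$ relevant diagonals total, and the standard $\Order(nd)$ edit-distance algorithm restricted to a band of $m$ diagonals and distance cap $K$ runs in time $\Order((|X|+m)\cdot K)$. That is still $\Order(|X|K + K^2)$, not $\Order(|X|+K^2)$. To remove the extra factor of $K$ on the $|X|$ term, I would instead use the alignment-graph shortest-path formulation with the observation that we only need a $2$-approximation: run the algorithm with distance cap $K$ but stop the DP front as soon as it would leave the band of the current shift, and \emph{reuse} the computation across shifts because the bands overlap. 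The cleanest way is a sweep: maintain the set of ``furthest-reaching'' positions on each of the $\Order(K)$ diagonals as a function of the number of edits $d=0,1,\dots,K$; this is exactly the Landau--Vishkin table of size $\Order(K)\times\Order(K)$ and is computed in $\Order(K^2)$ time \emph{after} an $\Order(|X|+K)$-time preprocessing for longest-common-extension queries (e.g.\ via a suffix tree or the Kangaroo method). From this table one reads off, for every diagonal $\delta$, the smallest $d$ with the front reaching the end, which gives $\ED^{\leq K}(X, Y_{s})$ exactly for the shift $s$ corresponding to diagonal $\delta=s$. So in fact we get all shifts exactly, not just a $2$-approximation, in time $\Order(|X|+K^2)$ --- the $2$-approximation in the statement is just slack we are allowed.

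\textbf{The main obstacle} is making the longest-common-extension (LCE) preprocessing cost only $\Order(|X|+K)$ rather than $\Order(|X|\log|X|)$ or $\Order(|Y|)=\Order(|X|+K)$ --- the latter is fine, so really the obstacle is more delicate: in the sublinear setting we have oracle access and cannot afford to build a suffix tree on all of $Y$ if $|Y|$ were much larger than $|X|$, but here $|Y|=|X|+2K$ so $\Order(|Y|)=\Order(|X|+K)$ preprocessing is exactly within budget, and LCE queries then take $\Order(1)$. The remaining care is bookkeeping the diagonal indexing: diagonal $\delta$ of the alignment grid for $X$ vs.\ $Y\range{0}{|Y|}$ corresponds to shift $s=\delta-K$ after accounting for the $K$-offset in the definition $Y_s=Y\range{K+s}{|X|+K+s}$, and one must verify that the Landau--Vishkin front on diagonal $\delta$, when it reaches row $|X|$, has column index exactly $|X|+\delta$, i.e.\ it certifies an alignment of $X$ with $Y\range{\delta}{|X|+\delta}=Y_{\delta-K}$. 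I would then conclude: cap at $K$ throughout, so every reported value is $\min(\ED,K)=\ED^{\leq K}$, the running time is $\Order(|X|+K)$ for preprocessing plus $\Order(K^2)$ for filling the $\Order(K)\times\Order(K)$ Landau--Vishkin table plus $\Order(K)$ to read off all answers, giving $\Order(|X|+K^2)$ total, which even gives exact values and hence certainly a $2$-approximation as claimed.
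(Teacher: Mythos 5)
There is a genuine gap, and it lies exactly where you flagged the issue as a mere ``bookkeeping'' detail. You claim that the standard Landau--Vishkin front for $X$ versus $Y$, upon reaching row $|X|$ on diagonal $\delta$ after $d$ edits, ``certifies an alignment of $X$ with $Y\range{\delta}{|X|+\delta}$.'' It does not. The standard DP (and the Landau--Vishkin front computed from it) has base case $D[0,0]=0$, $D[0,j]=j$, so the entry at $(|X|,\,|X|+\delta)$ is $\ED(X,\,Y\range{0}{|X|+\delta})$ --- the distance to a \emph{prefix} of $Y$, with the start pinned at position~$0$. That quantity is not the same as $\ED(X,\,Y\range{\delta}{|X|+\delta})$, and it is not even within a constant factor of it. Concretely, take $X$ to be a string with no long runs of a symbol $\sigma$, and $Y=\sigma^{K}\circ X\circ\sigma^{K}$. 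Then $\ED(X,Y_0)=\ED(X,Y\range{K}{|X|+K})=0$, but $\ED(X,Y\range{0}{|X|+K})=\Theta(K)$ because the pinned alignment must pay to consume the leading $\sigma^{K}$. So your scheme would report $\Theta(K)$ for a shift whose true capped distance is $0$; it is not exact, and it is not a $2$-approximation either.

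The paper avoids this by changing the DP's base case, not by reusing the unmodified Landau--Vishkin table: it sets $D[0,j]=0$ for \emph{all} $j\in\{0,\dots,2K\}$ (a ``fuzzy start'' as in approximate pattern matching), so that $D[i,j]=\min_{0\le s'\le 2K}\ED(X\range{0}{i},Y\range{s'}{j})$. With that base case, Landau--Vishkin seeds its $d=0$ frontier on $\Theta(K)$ diagonals simultaneously, which is exactly what keeps the table size $\Order(K)\times\Order(K)$ and the running time $\Order(|X|+K^2)$. The price is that $D[|X|,\,|X|+K+s]$ is only a \emph{lower bound} on $\ED(X,Y_s)$ (the minimization over $s'$ may undershoot). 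The paper then recovers the statement by a triangle-inequality argument showing $\ED(X,Y_s)\le 2\,D[|X|,\,|X|+K+s]$, using that the minimizing start $s'$ can be off by at most $D[|X|,\,|X|+K+s]$ positions and that shifting the window by $|s-s'|$ costs at most $|s-s'|$ further edits. So the factor $2$ in the statement is not slack: it is genuinely needed once you switch to the fuzzy start, and your proposal --- which keeps the pinned start --- does not produce correct values at all for shifts $s$ far from the one the pinned start happens to favor.
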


This result can be proven by a modification of the Landau-Vishkin algorithm~\cite{LandauV88}; we provide the details in \cref{sec:ed-shifts}. It remains to argue that computing a $2$-approximation in \cref{alg:main:line:test-short} does not mess up the correctness proof. This involves the multiplicative approximation guarantee of our algorithm which we entirely skipped in this overview, and for this reason we defer the details to \cref{sec:main}.

\paragraph{Implementation Details}
We finally describe how to implement the interrupt condition in \cref{alg:main:line:random-rule}. Note that the order of the recursive calls in \cref{alg:main:line:recursion} is irrelevant. In the current form the algorithm explores the partition tree in a depth-first search manner, but we might as well use breadth-first search. For any node $v$ which reaches \cref{alg:main:line:random-rule} we may therefore continue to compute all recursive computations using breadth-first search. If at some point we encounter one search level containing more than $200K$ active nodes for which the Periodicity Test reports \Far, we stop the breadth-first search and jump back to \cref{alg:main:line:random-rule}. With this modification the algorithm maintains one additional counter which does not increase the asymptotic time complexity.

\section{Our Algorithm in Detail} \label{sec:main}
In this section we give the formal analysis of \cref{alg:main}. We split the proof into the following parts: In \cref{sec:main:sec:periodicity} we prove some lemmas about periodic and block-periodic strings. In \cref{sec:main:sec:rules} we give the structural lemmas about edit distances in special cases. In \cref{sec:main:sec:property-tests} we prove the correctness of the string property testers (the ``Matching Test'' and ``Periodicity Test'').  In \cref{sec:main:sec:assemble} we finally carry out the correctness and running time analyses for \cref{alg:main}, and in \cref{sec:main:sec:main-theorem} we give a formal proof of our main theorem.

In the following proofs we will often use the following simple proposition.

\begin{proposition}[Alignments Have Small Stretch] \label{prop:alignment-stretch}
Let $X, Y$ be strings of equal length. If~$A$ is an optimal alignment between $X$ and $Y$, then $|i - A(i)| \leq \frac12 \ED(X, Y)$ for all~$0 \leq i \leq |X|$.
\end{proposition}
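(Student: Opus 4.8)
The plan is to show that if the alignment $A$ were to "stretch" too much at some index $i$ — meaning $A(i)$ is far from $i$ — then the cost charged by the alignment must already exceed $\ED(X,Y)$, contradicting optimality. The key observation is that since $|X| = |Y| =: n$, the alignment starts at $A(0) = 0$ and ends at $A(n) = n$, so any large deviation $A(i) - i$ must be "paid back" by the time we reach index $n$.

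Concretely, recall that the cost of the alignment is $\sum_{\ell=0}^{n-1} \ED(X\access \ell, Y\range{A(\ell)}{A(\ell+1)})$, and each term is at least $|A(\ell+1) - A(\ell)| - 1$ (deleting or inserting characters to reconcile a single character of $X$ against a block of $Y$ of length $A(\ell+1)-A(\ell)$ costs at least that block length minus one). Fix $i$ and suppose without loss of generality $A(i) \ge i$ (the case $A(i) \le i$ is symmetric, using the suffix instead of the prefix). Summing the lower bound over $\ell \in \range{0}{i}$ gives
\begin{equation*}
    \sum_{\ell=0}^{i-1} \ED(X\access \ell, Y\range{A(\ell)}{A(\ell+1)}) \;\ge\; \sum_{\ell=0}^{i-1} \big(|A(\ell+1)-A(\ell)| - 1\big) \;\ge\; (A(i) - A(0)) - i \;=\; A(i) - i,
\end{equation*}
where we used monotonicity of $A$ to drop the absolute values and telescope. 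Symmetrically, summing over $\ell \in \range{i}{n}$ and using $A(n) = n$ gives
\begin{equation*}
    \sum_{\ell=i}^{n-1} \ED(X\access \ell, Y\range{A(\ell)}{A(\ell+1)}) \;\ge\; (A(n) - A(i)) - (n - i) \;=\; (n - A(i)) - (n-i) \;=\; i - A(i).
\end{equation*}
Only one of these two bounds is useful (depending on the sign of $A(i) - i$), but since both summations are disjoint parts of the total cost, and every omitted term is nonnegative, we get $\ED(X,Y) \ge |A(i) - i|$. This already gives the bound with constant $1$ rather than $\frac12$; to recover the factor $\frac12$ one combines the two halves: the prefix up to $i$ contributes at least $A(i) - i$ toward reaching height $A(i)$ from $0$ having consumed $i$ columns, and then to return to the diagonal by column $n$ the suffix must contribute at least $A(i) - i$ as well, so $\ED(X,Y) \ge 2|A(i)-i|$, i.e. $|i - A(i)| \le \frac12 \ED(X,Y)$.

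The main obstacle — really the only subtlety — is making the "paid back" accounting precise: one must verify that the per-column cost lower bound $\ED(X\access \ell, Y\range{A(\ell)}{A(\ell+1)}) \ge |A(\ell+1) - A(\ell)| - 1$ is correct (immediate, since aligning one character to a length-$m$ block costs at least $m-1$), and that summing the "$-1$" terms does not erode the bound. Here the second summation's "$-1$"s are canceled because moving from $A(i)$ back down to $n$ over $n-i$ columns while $A$ is monotone forces $\sum_{\ell=i}^{n-1}|A(\ell+1)-A(\ell)| = A(n)-A(i) = n - A(i)$ when $A(i)\le n$... and here we need $A(i) \ge i$ precisely so that the suffix, spanning $n-i$ columns, must cover a $Y$-range of length $n - A(i) \le n-i$, meaning at least $i - A(i) + (n-i) - (n - A(i))$... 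This bookkeeping is where care is needed, but it is entirely elementary; no structural string insight is required.
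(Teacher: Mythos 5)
Your per-column accounting differs from the paper's approach, which is far more direct: the paper simply observes that since $A$ is an optimal alignment, the edit distance decomposes at column $i$ as $\ED(X,Y) = \ED(X\range{0}{i}, Y\range{0}{A(i)}) + \ED(X\range{i}{|X|}, Y\range{A(i)}{|Y|})$, and each summand is at least the length difference of its arguments, which in both cases equals $|i - A(i)|$. That gives $\ED(X,Y) \ge 2|i - A(i)|$ in two lines, with no column-by-column bookkeeping.

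Your approach could in principle work, but there is a concrete error that causes exactly the awkwardness you noticed. You state the per-column bound as $\ED(X\access\ell, Y\range{A(\ell)}{A(\ell+1)}) \ge |A(\ell+1) - A(\ell)| - 1$; since $A$ is non-decreasing this equals $A(\ell+1) - A(\ell) - 1$, which is $-1$ (vacuous) when $A(\ell+1) = A(\ell)$. The correct length-difference bound is $\ED(X\access\ell, Y\range{A(\ell)}{A(\ell+1)}) \ge |A(\ell+1) - A(\ell) - 1|$ (absolute value \emph{enclosing} the $-1$), which gives $1$ in that case — precisely the cost of deleting $X\access\ell$. This matters: with your weaker bound, the suffix sum telescopes to $i - A(i)$, which is non-positive when $A(i) \ge i$, and that is why ``only one of the two bounds is useful.'' Your attempted repair — counting how the suffix ``returns to the diagonal'' — is the right intuition but is left unfinished (it trails off twice and the final expression $i - A(i) + (n-i) - (n - A(i))$ evaluates to $0$). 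To complete your argument you should use the corrected bound together with the triangle inequality:
\begin{equation*}
\sum_{\ell=i}^{n-1} \big|A(\ell+1)-A(\ell)-1\big| \;\ge\; \Big|\sum_{\ell=i}^{n-1}\big(A(\ell+1)-A(\ell)-1\big)\Big| \;=\; \big|(n-A(i)) - (n-i)\big| \;=\; |A(i)-i|,
\end{equation*}
and symmetrically for the prefix; adding the two then gives $\ED(X,Y)\ge 2|A(i)-i|$ regardless of the sign of $A(i)-i$. As written, however, the proof has a genuine gap at this step.
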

\begin{proof}
Since $A$ is an optimal alignment between $X$ and $Y$, we can write the edit distance $\ED(X, Y)$ as the sum $\ED(X \range 0i, Y \range0{A(i)}) + \ED(X \range i{|X|}, Y \range{A(i)}{|Y|})$. Both edit distances are at least $|i - A(i)|$ which is the length difference of these strings, respectively. It follows that $\ED(X, Y) \geq 2 \cdot |i - A(i)|$, as claimed.
\end{proof}

\subsection{Some Facts about Periodicity} \label{sec:main:sec:periodicity}
We prove the following two lemmas, both stating roughly that \emph{if a string $X$ closely matches a shift of itself, then $X$ is close to periodic.} The first lemma is easy and well-known. The second lemma is new.

\begin{lemma}[Self-Alignment Implies Periodicity] \label{lem:shift-periodic}
Let $X$ be a string. For any shift~$s > 0$, if $X \range{0}{|X| - s} = X \range{s}{|X|}$ then $X$ is $s$-periodic (with period $X \range{0}{s}$).
\end{lemma}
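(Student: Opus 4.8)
The plan is to reduce the statement to a single pointwise identity: I would show that $X \access i = X \access{i \bmod s}$ for every index $i \in \rangezero{|X|}$. Since $P := X \range 0s$ has length at most $s$ (exactly $s$ when $|X| \ge s$, and equal to $X$ by the clamping convention when $|X| < s$), this identity is precisely the assertion $X = P^* \range{0}{|X|}$, i.e.\ that $X$ is $s$-periodic with period $P$. So the whole lemma boils down to establishing this identity.

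I would prove the identity by strong induction on $i$. If $i < s$ then $i \bmod s = i$ and there is nothing to show. If $i \ge s$, then $0 \le i - s$, and since $i \le |X| - 1$ we also have $i - s \le |X| - s - 1$, so the index $i-s$ lies in the window where the hypothesis $X \range{0}{|X| - s} = X \range{s}{|X|}$ applies; comparing the $(i-s)$-th character of the left side with the $i$-th character of the right side gives $X \access i = X \access{i - s}$. Applying the induction hypothesis to $i - s < i$ then yields $X \access{i - s} = X \access{(i - s) \bmod s} = X \access{i \bmod s}$, which closes the induction.

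It then remains to unwind the definition of periodicity from the preliminaries. With $P = X \range 0s$: if $|X| \ge s$, then $|P| = s$ and the identity just proved reads $X \access i = P \access{i \bmod |P|}$ for all $i \in \rangezero{|X|}$, which is exactly $X = P^* \range{0}{|X|}$; if $|X| < s$, then $P = X$ by the out-of-bounds convention, so $X = P^* \range{0}{|X|}$ holds trivially. In either case $|P| \le s$, so $X$ is $s$-periodic with period $X \range 0s$, as claimed.

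I do not anticipate a genuine obstacle: this is a one-line induction. The only points that need a little care are the index-bound bookkeeping (verifying that $i - s$ falls inside the window governed by the hypothesis, i.e.\ $i - s \le |X| - s - 1$) and the degenerate case $|X| < s$, where the hypothesis is vacuous and one must fall back on the clamping convention for $X \range 0s$.
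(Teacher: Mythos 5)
Your proof is correct and follows essentially the same approach as the paper's: translate the hypothesis into the pointwise identity $X \access j = X \access{s+j}$ and iterate it to get $X \access i = X \access{i \bmod s}$, hence $X = P^*\range{0}{|X|}$ with $P = X\range{0}{s}$. The paper simply asserts the iteration in one sentence; you spell out the strong induction and the index bookkeeping explicitly and also handle the degenerate case $|X| < s$, which the paper leaves implicit — both reasonable choices for a lemma this elementary.
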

\begin{proof}
By assumption we have that $X[j] = X[s + j]$ for each $j \in \rangezero{|X|-s}$. It follows that for $P = X \range 0s$ we have $X \access j = P \access{j \bmod s}$ for all indices $j \in \rangezero{|X|}$ and thus $X = P^* \range0{|X|}$.
\end{proof}

\begin{lemma}[Self-Alignment Implies Small Block Periodicity] \label{lem:shift-block-periodic}
Let $X$ be a string. For any shift $s > 0$, if $\ED(X \range{0}{|X| - s}, X \range{s}{|X|}) < 2s$ then $\bp_{2s}(X) \leq 4s$.
\end{lemma}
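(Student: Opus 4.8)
The plan is to extract local periodicity of $X$ from an optimal alignment of $X$ against itself, and then observe that this periodicity can fail only in a few places. First I would dispose of the degenerate regime: if $|X| \le 2s$ then $X$ itself is a period of length $\le 2s$, so $\bp_{2s}(X) = 1 \le 4s$ and we are done. So assume $|X| > 2s$; write $n = |X|$ and $U = X\range{0}{n-s}$, $V = X\range{s}{n}$, so that $|U| = |V| = n - s \ge 1$ and, by hypothesis, $\ED(U, V) \le 2s - 1$. Fix an optimal alignment $A \colon \set{0, \dots, n-s} \to \set{0, \dots, n-s}$ between $U$ and $V$. Call an index $i \in \range{0}{n-s}$ \emph{happy} if $A(i+1) = A(i)+1$ and $U\access i = V\access{A(i)}$, and \emph{unhappy} otherwise. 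For every unhappy $i$ the corresponding term $\ED(U\access i, V\range{A(i)}{A(i+1)})$ in the defining sum of an optimal alignment is at least $1$ (the string $V\range{A(i)}{A(i+1)}$ is either of length $\ne 1$, or of length $1$ but unequal to $U\access i$), so there are at most $\ED(U, V) \le 2s - 1$ unhappy indices, and hence at most $2s$ maximal runs of consecutive happy indices.

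The key structural step is to show that each maximal happy run $\range{a}{b}$ spans a $2s$-periodic substring of $X$. Along such a run the offset $d := A(i) - i$ is constant, since it changes by $0$ at every happy step, and $|d| < s$ by \cref{prop:alignment-stretch} (as $\ED(U,V)/2 < s$); in particular $p := s + d$ lies in $\set{1, \dots, 2s-1}$. Translating $U\access i = V\access{A(i)}$ into a statement about $X$ (using $U\access i = X\access i$ and $V\access{A(i)} = X\access{s + A(i)} = X\access{i+p}$) gives $X\access i = X\access{i+p}$ for all $i \in \range{a}{b}$. After a short index-bound check — happiness of $b-1$ forces $A(b-1) \le |V| - 1$, hence $b + p = s + A(b-1) + 1 \le n$ — this lets me apply \cref{lem:shift-periodic} to the substring $X\range{a}{b+p}$ with shift $p$, concluding that $X\range{a}{b+p}$, and therefore its prefix $X\range{a}{b}$, is $p$-periodic, hence $2s$-periodic. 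Each unhappy index $i$ contributes instead the length-$1$ block $X\range{i}{i+1}$, which is trivially $2s$-periodic.

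It remains to assemble the partition. The maximal happy runs and the unhappy singletons tile $\range{0}{n-s}$ into at most $2s + (2s - 1) = 4s - 1$ consecutive $2s$-periodic blocks; appending the suffix $X\range{n-s}{n}$, which has length $s \le 2s$ and is therefore a single $2s$-periodic block, gives a partition of all of $X$ into at most $4s$ such blocks, so $\bp_{2s}(X) \le 4s$. I expect the main obstacle to be pinning down the constant rather than any single conceptual ingredient: the length-$s$ suffix of $X$ is never touched by the self-alignment, and treating it naively would cost $s$ additional singleton blocks and push the bound to about $5s$ — the argument only closes because a string of length at most $2s$ is itself one $2s$-periodic block. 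A secondary point requiring attention is to check that every shifted index appearing in the argument ($s + A(i)$, $i + p$, $b + p$, and the arguments handed to \cref{lem:shift-periodic}) stays within $\range{0}{n}$, which is precisely where happiness of the relevant indices is used.
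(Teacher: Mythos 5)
Your proof is correct and takes essentially the same route as the paper's: fix an optimal self-alignment $A$ of $X\range{0}{|X|-s}$ against $X\range{s}{|X|}$, locate the at most $2s-1$ positions where $A$ incurs cost (your ``unhappy'' indices are exactly the positions $i_{\ell+1}-1$ in the paper's greedy sequence), apply \cref{lem:shift-periodic} on each maximal error-free stretch, and assemble the partition from those periodic stretches, the singletons at error positions, and the length-$s$ suffix. The only cosmetic difference is that the paper packages the error positions via a greedy sequence $0 = i_0 < \dots < i_L = |Y|$ and writes the partition as $\bigl(\bigcirc_{\ell} X\range{i_\ell}{i_{\ell+1}-1}\circ X\access{i_{\ell+1}-1}\bigr)\circ X\range{|X|-s}{|X|}$, whereas you phrase it directly in terms of happy runs and unhappy singletons; the counting and the use of \cref{prop:alignment-stretch} to bound the local shift are identical.
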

\begin{proof}
Let $Y = X \range{0}{|X| - s}$, $Z = X \range{s}{|X|}$ and let $A$ denote an optimal alignment between~$Y$ and~$Z$. We will greedily construct a sequence of indices $0 = i_0 < \dots < i_L = |Y|$ as follows: Start with $i_0 = 0$. Then, having assigned $i_\ell$ we next pick the smallest index~$i_{\ell+1} > i_\ell$ for which~$Y \range{i_\ell}{i_{\ell+1}} \neq Z \range{A(i_\ell)}{A(i_{\ell+1})}$. Using that $A$ is an optimal alignment, we have constructed a sequence of $L < 2s$ indices, since
\begin{equation*}
    2s > \ED(Y, Z) = \sum_{\ell=0}^{L-1} \ED(Y \range{i_\ell}{i_{\ell+1}}, Z \range{A(i_\ell)}{A(i_{\ell+1})}) \geq L.
\end{equation*}
Moreover, by \cref{prop:alignment-stretch} we have that $|i - A(i)| < s$ for all $i$. The greedy construction guarantees that $Y \range{i_\ell}{i_{\ell+1} - 1} = Z \range{A(i_\ell)}{A(i_{\ell+1}-1)}$. Therefore, and since $Y, Z$ are substrings of~$X$, we have $X \range{i_\ell}{i_{\ell+1} - 1} = X \range{s + A(i_\ell)}{s + A(i_{\ell+1}-1)}$. We will now apply \cref{lem:shift-periodic} to these substrings of $X$ with shift $s' = s + A(i_\ell) - i_\ell$. Note that $0 < s' < 2s$ (which satisfies the precondition of \cref{lem:shift-periodic}) and thus $X \range{i_\ell}{i_{\ell+1} - 1}$ is $2s$-periodic.

Finally, consider the following partition of $X$ into $2L+1$ substrings 
\begin{equation*}
    X = \left(\bigcirc_{\ell=0}^{L-1} X \range{i_\ell}{i_{\ell+1} - 1} \circ X \access{i_{\ell+1} - 1}\right) \circ X \range{|X| - s}{|X|}.
\end{equation*}
We claim that each of these substrings is $2s$-periodic: For $X \range{i_\ell}{i_{\ell+1}-1}$ we have proved this in the previous paragraph, and the strings $X \access{i_{\ell+1} - 1}$ and $X \range{|X| - s}{|X|}$ have length less than~$2s$ and are thus trivially $2s$-periodic. This decomposition certifies that $\bp_{2s}(X) \leq 2L + 1 \leq 4s$.
\end{proof}

\subsection{Edit Distances between Periodic and Random-Like Strings} \label{sec:main:sec:rules}

The goal of this section is to prove the structural \cref{lem:periodicity-rule,lem:random-rule} which determine the edit distance between certain structured strings.

\lemperiodicrule*
\begin{proof}
For simplicity set $\Delta = 2 \cdot \min_{j \in \Int} \big| s - s' + j|P|\big|$ and $p = |P|$. We will argue that~$\Delta$ is both an upper bound and lower bound for $\ED(Y_s, Y_{s'})$. The upper bound is simple: Note that due to the periodicity of $Y$, we can transform $Y_{s'}$ into~$Y_{s}$ by deleting and inserting $\min_{j \in \Int} \big|\,s - s' + j|P|\,\big|$ many characters. Thus, $\ED(Y_s, Y_{s'}) \leq \Delta$.

Next, we prove the lower bound. For contradiction suppose that $\ED(Y_s, Y_{s'}) < \Delta$. We assumed that $|Y_s| = |Y| - 2K \geq p^2$, and we can therefore split $Y_s$ into $p$ parts of length $p$ plus some rest. Let $i_\ell = \ell \cdot p$ for all $i \in \set{0, \dots, p}$ and let $i_{p+1} = |Y_s|$; we treat $Y_s \range{i_\ell}{i_{\ell+1}}$ as the $\ell$-th part. Let~$A$ denote an optimal alignment between $Y_s$ and $Y_{s'}$; we have
\begin{equation*}
    \ED(Y_s, Y_{s'}) = \sum_{\ell=0}^p \ED(Y_s \range{i_\ell}{i_{\ell+1}}, Y_{s'} \range{A(i_\ell)}{A(i_{\ell+1})}).
\end{equation*}
We assumed that $\ED(Y_s, Y_{s'}) < \Delta \leq |P|$ (the latter inequality is by the definition of $\Delta)$ and therefore at least one of the first $p$ summands must be zero, say the $\ell$-th one, $\ell < p$. It follows that the two strings $Y_s \range{i_\ell}{i_{\ell+1}}$ and $Y_{s'} \range{A(i_\ell)}{A(i_{\ell+1})}$ are equal. Both are length-$p$ substrings of $Y$ and thus rotations of the global period $P$. We assumed that $P$ is primitive (i.e., $P$ is not equal to any of its non-trivial rotations) and therefore $s + i_\ell \equiv s' + A(i_\ell) \mod p$. By the definition of $\Delta$ we must have that $|A(i_\ell) - i_\ell| \geq \Delta / 2$. But this contradicts \cref{prop:alignment-stretch} which states that $|A(i_\ell) - i_\ell| \leq \ED(Y_s, Y_{s'}) / 2 < \Delta / 2$.
\end{proof}

\lemrandomrule*
\begin{proof}
Let $\Delta = 2 \cdot |s-s'|$. First note that $\ED(Y_s, Y_{s'}) \leq \Delta$ since we can transform $Y_s$ into $Y_{s'}$ by simply inserting and deleting $|s - s'|$ symbols. For the lower bound, suppose that~$\ED(Y_s, Y_{s'}) < \Delta$. Therefore, we can apply \cref{lem:shift-block-periodic} for an appropriate substring of~$Y$: Assume without loss of generality that $s \leq s'$ and let $Z = Y\range{K+s}{|Y|-K+s'}$. Then clearly $Y_s = Z \range{0}{|Z|-s'+s}$ and $Y_{s'} = Z \range{s'-s}{|Z|}$, so~\cref{lem:shift-block-periodic} applied to~$Z$ with shift~$s'-s$ yields that $\bp_{2(s'-s)}(Y') \leq 4 \cdot (s'-s)$. Using that $s' - s \leq 2K$ we conclude that $\bp_{4K}(Y') \leq 8K$. We can obtain $Y$ by adding at most $K$ characters to the start and end of $Z$. It follows that $\bp_{4K}(Y) \leq \bp_{4K}(Z) + 2 \leq 10K$. This contradicts the assumption in the lemma statement, and therefore $\ED(Y_s, Y_{s'}) \geq \Delta$.
\end{proof}

\lemintermediate*

\begin{proof}
Focus on some level of the computation subtree below $v$. We will bound the number of nodes $w$ in this level for which $Y_w$ is not $4K$-periodic. Note that if $Y_v$ was partitioned into~$Y_v = \bigcirc_w Y_w$, then by the definition of block periodicity we would immediately conclude that at most $\bp_{4K}(Y_v)$ many parts $Y_w$ are not $4K$-periodic. However, recall that for a node~$w$ with associated interval $I_w = \range{i}{j}$, we defined $Y_w$ as $Y_w = Y\range{i - K}{j + K}$. This means that the substrings $Y_w$ overlap with each other and therefore do \emph{not} partition $Y_v$. 

To deal with this, note that we can assume that $|Y_w| > 4K$, since otherwise $Y_w$ is trivially $4K$-periodic. Hence, each $Y_w$ can overlap with at most two neighboring nodes (since the intervals $I_w$ are disjoint). Therefore, we can divide the $w$'s in two groups such that the $Y_w$'s in each group do not not overlap with each other. For each group, we apply the argument from above to derive that there are at most $\bp_{4K}(Y_v)$ many $Y_w$'s which are not $4K$-periodic. In this way, we conclude that there are at most $2 \bp_{4K}(Y_v)$ nodes in the level which are not $4K$-periodic, as desired.
\end{proof}

\subsection{Some String Property Testers} \label{sec:main:sec:property-tests}
The main goal of this section is to formally prove \cref{lem:alignment-test,lem:periodicity-test}, that is, the Matching Test and Periodicity Test. As a first step, we need the following simple lemma about testing equality of strings.

\begin{lemma}[Equality Test] \label{lem:equality-test}
Let $X, Y$ be two strings of the same length, and let $r > 0$ be a sampling rate. There is an algorithm which returns of the following two outputs:
\begin{itemize}
\item \Close, in which case $\HD(X, Y) \leq 1/r$.
\item \Far[i], in which case $X \access i \neq Y \access i$.
\end{itemize}
The algorithm runs in time $\Order(r |X| \log(\delta^{-1}))$ and is correct with probability $1 - \delta$.
\end{lemma}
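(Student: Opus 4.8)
The plan is a standard sampling argument. Write $n = |X|$ and let $d = \HD(X,Y)$ denote the (unknown) Hamming distance. First I would dispose of the degenerate case $rn < 1$: then $d \le n < 1/r$ automatically, so the algorithm may simply return \Close and be correct with certainty. From now on assume $rn \ge 1$. The algorithm draws $m := \lceil rn \ln(\delta^{-1}) \rceil$ indices $i_1,\dots,i_m \in \rangezero n$ independently and uniformly at random (with replacement) and queries the pair $X\access{i_t}, Y\access{i_t}$ for each $t$. If for some $t$ it observes $X\access{i_t} \ne Y\access{i_t}$, it outputs \Far[i_t]; if no mismatch shows up among all $m$ samples, it outputs \Close. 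Each sample costs $\Order(1)$ time via oracle access, so the total running time is $\Order(m) = \Order(rn\log(\delta^{-1}))$, matching the claimed bound.

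For correctness, note that the answer \Far[i] is \emph{always} valid, since it is produced only after explicitly checking $X\access i \ne Y\access i$. Hence the entire failure probability is the probability of a false \Close, i.e.\ outputting \Close while $d > 1/r$. In that bad case a single sample is a mismatch with probability $d/n > 1/(rn)$, so the probability of seeing no mismatch in $m$ independent samples is at most
\begin{equation*}
  \Bigl(1 - \frac{1}{rn}\Bigr)^m \;\le\; e^{-m/(rn)} \;\le\; e^{-\ln(\delta^{-1})} \;=\; \delta ,
\end{equation*}
using $1 + x \le e^x$ and the choice of $m$. Therefore the algorithm is correct with probability at least $1 - \delta$.

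I do not expect a genuine obstacle here. The only points that require a little care are (i) recognizing that a \Far[i] answer can never be wrong, so the whole error budget $\delta$ may be spent on excluding a false \Close, and (ii) the minor bookkeeping of the regime $rn < 1$ and of rounding $rn\ln(\delta^{-1})$ up to an integer (which, under the usual convention $\delta \le 1/e$, does not affect the asymptotic running time).
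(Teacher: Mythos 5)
Your proposal is essentially the paper's own proof: the same uniform sampling of $\Theta(r|X|\ln\delta^{-1})$ positions, the same observation that a returned \Far[i] is always correct so all error budget goes to a false \Close, and the same $(1-1/(r|X|))^{m}\le\delta$ bound. The only differences are cosmetic (handling $r|X|<1$ and taking the ceiling explicitly), so the argument matches.
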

\begin{proof}
The idea is standard: For $r |X| \ln(\delta^{-1})$ many random positions $i \in \rangezero{|X|}$, test whether $X \access i = Y \access i$. If no error is found, then we report \Close. This equality test is clearly sound: If~$X = Y$, then it will never fail. It remains to argue that if $\HD(X, Y) > 1/r$ then the test fails with probability at least $1 - \delta$. Indeed, each individual sample finds a Hamming error with probability $(r |X|)^{-1}$. Hence, the probability of not finding any Hamming error across all samples is at most
\begin{equation*}
    \left(1 - \frac{1}{r |X|}\right)^{r |X| \ln(\delta^{-1})} < \exp(-\ln(\delta^{-1})) = \delta.
\end{equation*}
The running time is bounded by $\Order(r |X| \log(\delta^{-1}))$.
\end{proof}

\lemperiodicitytest*
\begin{proof}
We start analyzing the length-$2K$ prefix $Y = X \range{0}{2K}$. In time $\Order(K)$ we can compute the smallest period~$P$ such that $Y = P^* \range{0}{|Y|}$ by searching for the first match of $Y$ in~$Y \circ Y$, e.g.\ using the Knuth-Morris-Pratt pattern matching algorithm~\cite{KnuthMP77}. If no such match exists, we can immediately report~\Far. So suppose that we find a period~$P$. It must be primitive (since it is the smallest such period) and it remains to test whether~$X$ globally follows the period. For this task we use the Equality Test (\cref{lem:equality-test}) with inputs~$X$ and $P^*$ (of course, we cannot write down the infinite string~$P^*$, but we provide oracle access to~$P^*$ which is sufficient here). On the one hand, if~$X$ is indeed periodic with period~$P$, then the Equality Test reports \Close. On the other hand, if $X$ is $1/r$-far from any periodic string, then it particular $\HD(X, P^*) > 1/r$ and therefore the Equality Test reports \Far. The only randomized step is the Equality Test. We therefore set the error probability of the Equality Test to $\delta$ and achieve total running time $\Order(r |X| \log(\delta^{-1}) + K)$.
\end{proof}

\lemalignmenttest*
\begin{proof}
For convenience, we write $Y_s = Y \range{K + s}{|X| + K + s}$. Our goal is to obtain a single \emph{candidate shift $s^*$} (that is, knowing $s^*$ we can exclude all other shifts from consideration). Having obtained a candidate shift, we can use the Equality Test (\cref{lem:equality-test} with parameters~$r$ and $\delta/3$) to verify whether we indeed have $X = Y_{s^*}$. In the positive case, \cref{lem:equality-test} implies that $\HD(X, Y_{s^*}) \leq 1/r$, hence returning $s^*$ is valid. The difficulty lies in obtaining the candidate shift. Our algorithm proceeds in three steps:
\begin{enumerate}[itemsep=\smallskipamount]
\itemdesc{Aligning the Prefixes:} We start by computing the set $S$ consisting of all shifts $s$ for which $X \range{0}{2K} = Y_s \range{0}{2K}$. One way to compute this set in linear time $\Order(K)$ is by using a pattern matching algorithm with pattern $X \range{0}{2K}$ and text $Y \range{0}{4K}$ (like the Knuth-Morris-Pratt algorithm~\cite{KnuthMP77}). It is clear that~$S$ must contain any shift~$s$ for which globally $X = Y_s$. For that reason we can stop if $|S| = 0$ (in which case we return~\Far) or if $|S| = 1$ (in which case we test the unique candidate shift $s^* \in S$ and report accordingly).
\itemdesc{Testing for Periodicity:} After running the previous step we can assume that $|S| \geq 2$. Take any elements $s < s'$ from $S$; we have that $X \range{0}{2K} = Y_s \range{0}{2K} = Y_{s'} \range{0}{2K}$. It follows that $X \range{0}{2K - s' + s} = X \range{s' - s}{2K}$, and thus by \cref{lem:shift-periodic} we conclude that $X \range{0}{2K}$ is periodic with period $P = X \range{0}{s' - s}$, where $|P| \leq s' - s \leq 2K$. Obviously the same holds for $Y_s \range{0}{2K}$ and $Y_{s'} \range{0}{2K}$.

We will now test whether $X$ and $Y_s$ are also globally periodic with this period~$P$. To this end, we apply the Equality Test two times (each time with parameters~$2r$ and~$\delta/3$) to check whether $X = P^* \range{0}{|X|}$ and $Y_s = P^* \range{0}{|Y_s|}$. If both tests return \Close, then \cref{lem:equality-test} guarantees that~$\HD(X, P^* \range{0}{|X|}) \leq 1/(2r)$ and \makebox{$\HD(Y_s, P^* \range{0}{|Y_s|}) \leq 1/(2r)$} and hence, by the triangle inequality, $\HD(X, Y_s) \leq 1/r$. Note that we have witnessed a matching shift $s^* = s$.
\itemdesc{Aligning the Leading Mismatches:} Assuming that the previous step did not succeed, one of the Equality Tests returned \Far[i_0] for some position $i_0 > 2K$ with~$X \access{i_0} \neq P^* \access{i_0}$ or~$Y_s \access{i_0} \neq P^* \access{i_0}$. Let us refer to these indices as \emph{mismatches}. Moreover, we call a mismatch $i$ a \emph{leading mismatch} if the $2K$ positions to the left of~$i$ are not mismatches. We continue in two steps: First, we find a leading mismatch. Second, we turn this leading mismatch into a candidate shift.
\begin{enumerate}[itemsep=\smallskipamount, topsep=\smallskipamount, label=3\alph*]
\itemdesc{Finding a Leading Mismatch:} To find a leading mismatch, we use the following binary search-style algorithm: Initialize $L \gets 0$ and $R \gets i_0$. We maintain the following two invariants:~(i)~All positions in $\range{L}{L + 2K}$ are not mismatches, and (ii) $R$ is a mismatch. Both properties are initially true. We will now iterate as follows: Let $M \gets \ceil{(L + R) / 2}$ and test whether there is a mismatch~$i \in \range{M}{M + 2K}$. If there is such a mismatch~$i$, we update $R \gets i$. Otherwise, we update $L \gets M$. It is easy to see that in both cases both invariants are maintained. Moreover, this procedure is guaranteed to make progress as long as $L + 4K < R$. If at some point~$R \leq L + 4K$, then we can simply check all positions in $\range{L}{R}$---one of these positions must be a leading mismatch $i$.
\itemdesc{Finding a Candidate Shift:} Assume that the previous step succeeded in finding a leading mismatch $i$. Then we can produce a single candidate shift as follows: Assume without loss of generality that $X \access i \neq P^* \access i$, and let $i \leq j$ be the smallest position such that $Y_s \access j \neq P^* \access j$. Then $s^* = s + j - i$ is the only candidate shift (if it happens to fall into the range $\set{-K, \dots, K}$).

Indeed, for any $s'' > s^*$ we can find a position where $X$ and $Y_{s''}$ differ. To see this, we should assume that $s''$ respects the period (i.e., $P^* = P^* \range{K + s''}{\infty}$), since otherwise we find a mismatch in the length-$2K$ prefix. But then
\begin{align}
    Y_{s''} \access{j + s - s''} &= Y_s \access j \label{lem:matching-test:eq:1} \\
    &\neq P^* \access j \label{lem:matching-test:eq:2} \\
    &= P^* \access{j + s - s''} \label{lem:matching-test:eq:3} \\
    &= X \access{j + s - s''}, \label{lem:matching-test:eq:4}
\end{align}
which proves that $X \neq Y_{s''}$ and thereby disqualifies $s''$ as a feasible shift. Here we used \eqref{lem:matching-test:eq:1} the definition of $Y_s$, \eqref{lem:matching-test:eq:2} the assumption that $Y_s \access j \neq P^* \access j$, \eqref{lem:matching-test:eq:3} the fact that both $s$ and $s''$ respect the period $P$ and \eqref{lem:matching-test:eq:4} the assumption that $i$ was a leading mismatch which implies that $X$ matches $P^*$ at the position $j + s - s'' < i$.

A similar argument works for any shift $s'' < s^*$. In this case one can show that~$X \access i \neq P^* \access i = Y_{s''} \access i$ which also disqualifies $s''$ as a candidate shift.
\end{enumerate}
\end{enumerate}
We finally bound the error probability and running time of this algorithm. We only use randomness when calling the Equality Test which runs at most three times. Since each time we set the error parameter to $\delta / 3$, the total error probability is $\delta$ as claimed. The running time of the Equality Tests is bounded by $\Order(r |X| \log(\delta^{-1}))$ by \cref{lem:equality-test}. In addition, steps~1 and 2 take time $\Order(K)$. Step~3 iterates at most $\log |X|$ times and each iteration takes time~$\Order(K)$. Thus, the total running time is $\Order(r |X| \log(\delta^{-1}) + K \log|X|)$.
\end{proof}

\subsection{Putting The Pieces Together} \label{sec:main:sec:assemble}
\paragraph{Setting the Parameters}
Throughout this section we assume that $T$ is a balanced $B$-ary partition tree with $\min(n, K^{100})$ leaves, where each leaf $v$ is labeled with an interval $I_v$ of length $|I_v| \approx \max(1, n / K^{100})$. In particular there are at most $2 \cdot K^{100}$ nodes in the tree and its depth is bounded by $\ceil{\log_B \min(n, K^{100})}$. We also specify the following parameters for every node~$v$ in the partition tree:
\begin{itemize}
\item Rate $r_v$: If $v$ is the root then we set $r_v = 1000/K$. Otherwise, if $v$ is a child of $w$ then sample~$u_v \sim \mathcal D((200 \log K)^{-1}, K^{-200})$ independently and set $r_v = r_w / u_v$. (This assignment matches the values in \cref{alg:main}.) 
\item Multiplicative accuracy $\alpha_v = 10 \cdot (1 - (200 \log K)^{-1})^d$ (where $d$ is the depth of $v$). Note that $\alpha_v \geq 5$, since $d \leq \log(K^{100})$.
\end{itemize}
For these parameters our goal is to compute $\Delta_{v, -K}, \dots, \Delta_{v, K}$ in the sense of \cref{def:problem}.

\paragraph{Correctness}
We start with the correctness proof.

\begin{lemma}[Correctness of \cref{alg:main}] \label{lem:main-correctness}
Let $X, Y$ be strings. Given any node $v$ in the partition tree, \cref{alg:main} correctly solves the Tree Distance Problem.
\end{lemma}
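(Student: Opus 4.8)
The plan is to prove correctness by induction on the height of the node $v$ in the partition tree, showing that the returned values $\Delta_{v,-K},\dots,\Delta_{v,K}$ satisfy the bound in \eqref{eq:delta-approx} with the parameters $r_v,\alpha_v$ fixed above. I would first dispose of the base case in \cref{alg:main:line:test-short-condition,alg:main:line:test-short}: if $v$ is a leaf or $|X_v|\le 100K^2$, we either compute $\ED^{\le K}(X_v,Y_{v,s})$ exactly, in which case \eqref{eq:delta-approx} holds trivially since $\ED\le\TD$ (Lemma~\ref{lem:equivalence-ed-td}) and $\alpha_v\ge 1$; or we compute a $2$-approximation via \cref{thm:ed-shifts}, and I need to check that the factor $2$ is absorbed by the multiplicative slack $\alpha_v\ge 5$ and introduces no additive error. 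This is the place where the (otherwise suppressed) multiplicative guarantee of the algorithm actually gets used, so I would spell out the inequality $\frac12\ED^{\le K}\le\widetilde\Delta\le 2\,\ED^{\le K}\le 2\,\TD^{\le K}$ and note $\alpha_v\ge 2$, $1/r_v\ge 0$.

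Next I would handle the two pruning rules. Condition on the (high-probability) event that the Matching Test and Periodicity Test behave as specified in Lemmas~\ref{lem:alignment-test} and~\ref{lem:periodicity-test}; the failure probability bookkeeping ($\delta = 0.01\cdot K^{-100}$ over $O(K^{100})$ nodes) I would defer to the running-time/probability paragraph, stating here only that we argue on the correct-behaviour event. For the periodic case (\cref{alg:main:line:periodicity-rule}): the Matching Test gives $s^*$ with $\HD(X_v,Y_{v,s^*})\le 1/(3r_v)$, the Periodicity Test gives a primitive $P$, $|P|\le 4K$, with $\HD(Y_v, P^*\range0{|Y_v|})\le 1/(3r_v)$; chaining the triangle inequality through $X_v \approx Y_{v,s^*}\approx P^*\approx Y_{v,s}$ (the last because $Y_{v,s}$ is a length-matching window of the same near-periodic $Y_v$, shifted) gives $\HD$-closeness, hence $\ED(X_v,Y_{v,s})$ is within additive $1/r_v$ of $\ED(Y_{v,s^*\!,\text{period}}, Y_{v,s,\text{period}})$; then Lemma~\ref{lem:periodicity-rule} (applicable since $|Y_v|\ge 100K^2\ge|P|^2+2K$) evaluates the latter exactly as $2\min_j|s-s^*+j|P||$, matching the returned value up to additive $1/r_v$ and no multiplicative error, so \eqref{eq:delta-approx} holds. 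For the random-like interrupt in \cref{alg:main:line:random-rule}: if the interrupt fires, there is a level with $>20K$ active nodes $w$ below $v$ whose $Y_w$ are not $4K$-periodic (the Periodicity Test is sound: \Far\ implies genuinely not $K$-periodic); Lemma~\ref{lem:intermediate-efficient} contrapositively forces $\bp_{4K}(Y_v)>10K$, so Lemma~\ref{lem:random-rule} gives $\ED(Y_{v,s^*},Y_{v,s})=2|s-s^*|$, and again combining with $X_v\approx Y_{v,s^*}$ (additive $1/(3r_v)\le 1/r_v$) validates the returned $\Delta_{v,s}=2|s-s^*|$.

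Finally, the inductive step for the general recursion in \crefrange{alg:main:line:iter-children}{alg:main:line:return}, which is essentially the correctness argument already sketched for \cref{alg:ako}. By the induction hypothesis each child $v_i$ returns $\Delta_{v_i,s'}$ that is an $(\alpha_{v_i}, 1/r_{v_i})$-approximation of the corresponding capped tree distance, and $1/r_{v_i} = u_{v_i}/r_v$; the exact range-minimum step (Lemma~\ref{lem:range-minimum}) turns this into an $(\alpha_{v_i}, u_{v_i}/r_v)$-approximation $\widetilde A_{i,s}$ of $A_{i,s}=\min_{s'}(\TD^{\le K}_{v_i,s'}+2|s-s'|)$, uniformly in $i$ since $\alpha_{v_i}=\alpha_w$ for all children $w$ of $v$ (same depth); then the recovery algorithm of the Precision Sampling Lemma (Lemma~\ref{lem:precision-sampling}) with $a=\alpha_{v_i}$, $b=1/r_v$, $\varepsilon=(200\log K)^{-1}$ produces a $((1+\varepsilon)\alpha_{v_i},\,1/r_v)$-approximation of $\sum_i A_{i,s}$; since $\alpha_v = (1+\varepsilon)\alpha_{v_i}$ by definition, taking the minimum with $K$ in \cref{alg:main:line:return} gives exactly an $(\alpha_v,1/r_v)$-approximation of $\TD^{\le K}_{v,s}$, and the lower bound $\ED^{\le K}\le\TD^{\le K}$ closes \eqref{eq:delta-approx}. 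The main obstacle I anticipate is not any single step but the bookkeeping where $\HD$-closeness (what the testers deliver) must be converted into $\ED$-closeness of the \emph{windows} $Y_{v,s}$ for every shift $s$ simultaneously in the periodic and random-like rules — one must be careful that the $1/(3r_v)$ Hamming slack on $Y_v$ really does bound the edit distance between $X_v$ and each $Y_{v,s}$ after the exact formula of Lemma~\ref{lem:periodicity-rule} or~\ref{lem:random-rule} is applied to the truly-periodic (resp.\ truly-block-aperiodic) idealization, and that the additive errors add up to at most $1/r_v$ rather than a constant multiple of it.
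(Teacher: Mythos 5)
Your proposal matches the paper's proof in all essentials: the short-string base case absorbing the factor-$2$ error into $\alpha_v\ge 5$, the periodic case via the triangle inequality chained through the idealized periodic string $P^*\range{0}{|Y_v|}$ and Lemma~\ref{lem:periodicity-rule}, the random-like interrupt via Lemma~\ref{lem:intermediate-efficient} and Lemma~\ref{lem:random-rule}, the inductive step via the Precision Sampling Lemma as in the AKO analysis, and a union bound over nodes for the failure probability. The only small inaccuracy is writing $\alpha_v=(1+\varepsilon)\alpha_{v_i}$; the paper defines $\alpha_{v_i}=(1-\varepsilon)\alpha_v$, which gives the needed inequality $(1+\varepsilon)\alpha_{v_i}\le\alpha_v$ rather than equality, but this does not affect the argument.
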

\begin{proof}
The analysis of \crefrange{alg:main:line:iter-children}{alg:main:line:return} (that is, combining the recursive computations) is precisely as in \cref{lem:ako-correctness}. We therefore omit the details an assume that these steps succeed. In this proof we show that \crefrange{alg:main:line:test-short-condition}{alg:main:line:random-rule} are correct as well. (We postpone the error analysis to the end of the proof.) There are three possible cases:
\begin{itemize}
\itemdesc{The Strings are Short:} First assume that $|X_v| \leq 100 K^2$ or that $v$ is a leaf node, in which case the condition in \cref{alg:main:line:test-short-condition} triggers. The algorithm computes and returns a multiplicative $2$\=/approximation $\Delta_{v, s}$ of $\ED(X_v, Y_{v, s})$ for all shifts~$s$ using \cref{thm:ed-shifts}. We need to justify that $2 \leq \alpha_v$ so that $\Delta_{v, s}$ is a valid approximation in the sense of \Cref{eq:delta-approx}. Indeed, using the parameter setting in the previous paragraph we have~$\alpha_v \geq 5$.
\end{itemize}
If the algorithm does not terminate in this first case, we may assume that $|X_v| \geq 100 K^2$. The algorithm continues running and applies the Matching Test (\cref{lem:alignment-test}) to $X_v, Y_v$ and the $4K$-Periodicity Test (\cref{lem:periodicity-test}) to $Y_v$, both with rate parameter~$r = 3r_v$ and error parameter~$\delta = 0.01 \cdot K^{-100}$. We again postpone the error analysis and assume that both tests returned a correct answer. We continue analyzing the remaining two cases:
\begin{itemize}[itemsep=\smallskipamount]
\itemdesc{The Strings are Periodic:} Assume that the Matching Test reports \Close[s^*] and that the $4K$-Periodicity Test reports \Close[P], where $P$ is a primitive string with length \makebox{$|P| \leq 4K$}. The algorithm reaches \cref{alg:main:line:periodicity-rule} and returns $\Delta_{v, s} = 2 \cdot \min_{j \in \Int} \big|\,s - s^* - j|P| \,\big|$. We argue that this approximation is valid using \cref{lem:periodicity-rule} and by applying the triangle inequality three times. To this end we define $Z = P^* \range{0}{|Y_v|}$ (that is, $Z$ is equal to $Y_v$ after ``correcting'' the periodicity errors) and $Z_s = Z \range{K + s}{|Z| - K + s}$. By \cref{lem:periodicity-rule} we have that
\begin{equation*}
    \ED(Z_{s^*}, Z_s) = 2 \cdot \min_{j \in \Int} \big|\, s - s^* - j|P| \,\big| = \Delta_{v, s},
\end{equation*}
for all shifts~$s$. Here we use the assumption that $|X_v| \geq 100K^2 \geq |P|^2$ and its consequence $|Z| = |Y_v| \geq |P|^2 + 2K$ to satisfy the precondition of \cref{lem:periodicity-rule}. Using that the Periodicity Test reported \Close[P], we infer that \makebox{$\HD(Y_{v, s}, Z_s) \leq 1/(3r_v)$} for all shifts~$s$, and using that the Matching Test reported \Close[s^*] we obtain $\HD(X_v, Y_{v, s^*}) \leq 1/(3r_v)$. By applying the triangle inequality three times we conclude that
\begin{align*}
    \Delta_{v, s} &= \ED(Z_{s^*}, Z_s) \\
    &\leq \ED(Z_{s^*}, Y_{v, s^*}) + \ED(Y_{v, s^*}, X_v) + \ED(X_v, Y_{v, s}) + \ED(Y_{v, s}, Z_s) \\
    &\leq \ED(X_v, Y_{v, s}) + 1/r_v,
\end{align*}
and similarly $\Delta_{v, s} \geq \ED(X_v, Y_{v, s}) - 1/r_v$. It follows that $\Delta_{v, s}$ is an additive $1/r_v$-approximation of $\ED(X_v, Y_{v, s})$, as required in \Cref{eq:delta-approx}. (Here, we do not suffer any multiplicative error.)

\itemdesc{The Strings are Random-Like:} Finally assume that the Matching Test reports \Close[s^*], but the $4K$-Periodicity Test reports \Far. In this case the algorithm reaches \cref{alg:main:line:random-rule} and continues with the recursive computation (in \cref{alg:main:line:iter-children}). However, if at any level in the computation subtree rooted at $v$ there are more than $20K$ active nodes for which the Periodicity Test (in \cref{alg:main:line:periodicity-test}) reports \Far, then the recursive computation is interrupted and we return $\Delta_{v, s} = 2 \cdot |s - s^*|$. We have already argued that the unrestricted recursive computation is correct, but it remains to justify why interrupting the computation makes sense.

So suppose that the recursive computation is interrupted, i.e., assume that there are more than $20K$ descendants $w$ of $v$ at some level for which the Periodicity Test reported~\Far. Assuming that all Periodicity Tests computed correct outputs, we conclude that for all these descendants $w$ the strings $Y_v$ are not $4K$-periodic. From \cref{lem:intermediate-efficient} we learn that necessarily~$\bp_{4K}(Y_v) > 10K$. Hence \cref{lem:random-rule} applies and yields that
\begin{equation*}
    \ED(Y_{v, s^*}, Y_{v, s}) = 2 \cdot |s - s^*| = \Delta_{v, s}.
\end{equation*}
Using again the triangle inequality and the assumption that $\ED(Y_{v, s^*}, X_v) \leq 1/r_v$ (by the Matching Test), we derive that
\begin{align*}
    \Delta_{v, s} &= \ED(Y_{v, s^*}, Y_{v, s}) \\
    &\leq \ED(Y_{v, s^*}, X_v) + \ED(X_v, Y_{v, s}) \\
    &\leq \ED(X_v, Y_{v, s}) + 1/r_v.
\end{align*}
The lower bound can be proved similarly and therefore $\Delta_{v, s}$ is an additive $1/r_v$-approximation of $\ED(X_v, Y_{v, s})$.
\end{itemize}
We finally analyze the error probability of \cref{alg:main}. There are three sources of randomness in the algorithm: The Matching and Periodicity Tests in \cref{alg:main:line:alignment-test,alg:main:line:periodicity-test} and the application of the Precision Sampling Lemma. For each node, therefore have three error events: With probability at most $2\delta = 0.02 \cdot K^{-100}$ one of the property tests fails. We apply the Precision Sampling Lemma with $\delta = 0.01 \cdot K^{-101}$ for $2K$ shifts in every node, hence the error probability is bounded by $0.02 \cdot K^{-100}$ as well. In summary: The error probability per node is $0.04 \cdot K^{100}$. Recall that there are at most $2K^{100}$ nodes in the partition tree, and thus the total error probability is bounded by $0.08 \leq 0.1$.
\end{proof}

\paragraph{Running Time}
This concludes the correctness part of the analysis and we continue bounding the running time of \cref{alg:main}. We proceed in two steps: First, we give an upper bound on the number of active nodes in the partition tree (see \cref{lem:alignment-many,lem:main-active-nodes}). Second, we bound the expected running time of a single execution of \cref{alg:main} (ignoring the cost of recursive calls). The expected running time is bounded by their product.

Recall that a node $v$ is \emph{matched} if there is some $s \in \set{-K, \dots, K}$ such that $X_v = Y_{v, s}$. Moreover, we say that $v$ is \emph{active} if the recursive computation of \cref{alg:main} reaches $v$.

\begin{lemma}[Number of Unmatched Nodes] \label{lem:alignment-many}
Assume that $\ED(X, Y) \leq K$. If the partition tree has depth $D$, then there are at most $K D$ nodes which are not matched.
\end{lemma}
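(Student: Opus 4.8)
The plan is to fix an optimal alignment $A$ between $X$ and $Y$ and use it to certify that most nodes on each level are matched. First I would record the key structural fact: for a node $v$ with interval $I_v = \range ij$, the node is matched precisely when there is a shift $s \in \set{-K,\dots,K}$ with $X\range ij = Y\range{i+s}{j+s}$, i.e.\ when the block $X_v$ of $X$ is aligned, as a whole unit, to a contiguous block of $Y$ of the same length. So I want to show that on any fixed level of the partition tree, all but at most $K$ of the blocks are perfectly matched in this sense.

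The main step is a level-by-level charging argument. Fix a level of $T$; its nodes partition $\rangezero n$ into consecutive intervals $I_{v_1},\dots,I_{v_m}$, and correspondingly $X$ is partitioned into $X_{v_1} \circ \dots \circ X_{v_m}$. Now use the optimal alignment $A$: since $A$ has cost $\ED(X,Y) \le K$, the alignment performs at most $K$ ``nontrivial'' edit operations, meaning there are at most $K$ indices $i \in \rangezero n$ at which $X\access i$ is not matched to a single equal character of $Y$ (each insertion, deletion or substitution contributes at most one such index, and by $\ED(X,Y)\le K$ there are at most $K$ of them). Call a block $X_{v_\ell}$ \emph{clean} if it contains none of these bad indices. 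Then at most $K$ blocks on the level are not clean. For a clean block $X_{v_\ell} = X\range ij$, every character $X\access t$ with $t\in\range ij$ is matched by $A$ to an equal character of $Y$, and moreover no insertions or deletions happen strictly inside the block; hence $A$ restricted to the block is a shift, i.e.\ $A(t) = t + s$ for a fixed $s$ and all $t \in \set{i,\dots,j}$, giving $X\range ij = Y\range{i+s}{j+s}$. It remains to check that $s \in \set{-K,\dots,K}$: by \cref{prop:alignment-stretch} applied to $X$ and $Y$ (after padding to equal length, or directly since $|A(i) - i| \le \tfrac12\ED(X,Y) \le K/2$ by the same argument), we get $|s| = |A(i) - i| \le K$. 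So every clean block is matched, and at most $K$ blocks per level are unmatched.

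Finally, summing over the $D$ levels of the partition tree (the root level has a single matched node trivially, and each deeper level contributes at most $K$ unmatched nodes) gives at most $KD$ unmatched nodes in total. The main obstacle is the bookkeeping in the middle step: making precise that ``$A$ restricted to a clean block is a shift'' really does follow from the block containing no bad index --- one has to argue carefully that if a block avoids all substitution/insertion/deletion positions then $A$ is affine with slope $1$ on it, using monotonicity of $A$ and the definition of optimal alignment (the summands $\ED(X\access t, Y\range{A(t)}{A(t+1)})$ must all vanish and the lengths must add up). I would also double-check the edge effect that insertions/deletions occurring exactly at a block boundary are charged to at most one of the two adjacent blocks, which only costs a constant factor and is absorbed since we are being generous with the bound $K$ per level.
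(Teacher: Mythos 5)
Your proposal is correct and takes essentially the same approach as the paper: fix an optimal alignment $A$, argue that on each of the $D$ levels at most $K$ blocks fail to be exactly matched (with a shift of magnitude at most $K$ via \cref{prop:alignment-stretch}), and sum over levels. The paper handles the counting step you flag as delicate more cleanly, by writing $\ED(X,Y)=\sum_\ell \ED(X\range{i_\ell}{i_{\ell+1}}, Y\range{A(i_\ell)}{A(i_{\ell+1})})$ directly from the definition of optimal alignment and observing that at most $K$ summands are nonzero, which sidesteps any per-character ``bad index'' bookkeeping and any concern about edits at block boundaries.
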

\begin{proof}
Focus on any level in the partition tree and let $0 = i_0 < \dots < i_w = n$ denote the partition induced by that level, i.e., let $\range{i_\ell}{i_{\ell+1}} = I_v$ where $v$ is the $\ell$-th node in the level (from left to right). Let $A$ be an optimal alignment between $X$ and $Y$, then:
\begin{equation*}
    \ED(X, Y) = \sum_{\ell=0}^{w-1} \ED(X \range{i_\ell}{i_{\ell+1}}, Y \range{A(i_\ell)}{A(i_{\ell+1})}).
\end{equation*}
Since we assumed that $\ED(X, Y) \leq K$, there can be at most $K$ nonzero terms in the sum. For any zero term we have that $X \range{i_\ell}{i_{\ell+1}} = Y \range{A(i_\ell)}{A(i_{\ell+1})}$ and therefore the $\ell$-th node in the current level is matched with shift $A(i_\ell) - i_\ell$. By \cref{prop:alignment-stretch} we have that $|A(i_\ell) - i_\ell| \leq \ED(X, Y) \leq K$. This completes the proof.
\end{proof}

\begin{lemma}[Number of Active Nodes] \label{lem:main-active-nodes}
Assume that $\ED(X, Y) \leq K$. If the partition tree has depth $D$, then there are at most $\Order((K D B)^2)$ active nodes, with probability~$0.98$.
\end{lemma}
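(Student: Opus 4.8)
The plan is to bound the active nodes level by level, combining the bound on unmatched nodes (\cref{lem:alignment-many}) with an analysis of what happens below matched nodes. First I would fix a level of the partition tree and separate the active nodes into two kinds: those whose nearest \emph{matched} ancestor is the node itself (i.e.\ the node is matched, or its parent was unmatched), and those that were activated by the recursive computation started in \cref{alg:main:line:random-rule} at some matched ancestor. By \cref{lem:alignment-many} there are at most $KD$ unmatched nodes in the whole tree, hence at most $KDB$ children of unmatched nodes; together with at most one ``top'' matched node per unmatched-subtree and the matched nodes themselves, this already shows the number of ``roots of recursive computations'' is $\Order(KDB)$. (A node that is matched and passes the Periodicity Test is solved immediately in \cref{alg:main:line:periodicity-rule} and spawns no children, so it contributes nothing further.)

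Next I would bound the active nodes inside a single subtree rooted at a matched node $v$ that reached \cref{alg:main:line:random-rule}. The key is the interrupt condition: the recursion is cut as soon as some level below $v$ contains more than $20K$ active nodes for which the Periodicity Test reported \Far. I would argue that, until the interrupt fires, \emph{every} active node $w$ strictly below $v$ either has the Periodicity Test report \Close, in which case $w$ is solved on the spot (\cref{alg:main:line:periodicity-rule}) and spawns no children, or has it report \Far. Therefore on any level below $v$ the active nodes that have children are exactly the \Far\ nodes, of which there are at most $20K$ before the interrupt (and the interrupt guarantees that at most one further level is expanded). Hence each level of the subtree below $v$ contains at most $20K \cdot B$ active nodes, and since the depth is at most $D$, the subtree below $v$ contains $\Order(KDB)$ active nodes — actually $\Order(KB)$ per level.

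Putting the two parts together: summing $\Order(KB)$ active nodes per level over the $\Order(KDB)$ recursion-root subtrees, restricted to each of the $D$ levels, gives $\Order(KDB \cdot KB) = \Order((KDB)^2)$ active nodes total. More carefully, each level of the tree gets $\Order(KB)$ nodes from each of the $\Order(KDB)$ recursion roots whose subtree reaches that level, but I can instead just bound directly: there are $\Order(KDB)$ recursion roots, each contributing $\Order(KDB)$ active descendants, for $\Order((KDB)^2)$ in total. The last ingredient is the probabilistic qualifier ``with probability $0.98$'': this is where I invoke the \emph{Efficiency} clause of the Precision Sampling Lemma (\cref{lem:precision-sampling}) together with \cref{lem:alignment-many}, which only holds when $\ED(X,Y)\le K$; the property-test failures are already absorbed into the correctness statement, so here the only randomness that can blow up the node count is the sampling rates. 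I would note that the $u_v$'s only affect \emph{which} nodes the Precision-Sampling recursion touches through the rates $r_v$, but in fact the set of active nodes in \cref{alg:main} is determined by the Matching Test, Periodicity Test and interrupt counters, not by the rates — so conditioned on all property tests succeeding (an event we may assume by the correctness analysis), the bound is deterministic. The $0.98$ then comes purely from a union bound over the $\Order(K^{100})$ nodes' property tests, each failing with probability $\Order(K^{-100})$.

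The main obstacle I expect is making the ``recursion-root'' accounting airtight: one must be careful that an active node can have \emph{several} matched ancestors and only the \emph{lowest} one that reached \cref{alg:main:line:random-rule} is the relevant recursion root, and that the interrupt at an inner matched node correctly caps its own subtree without interfering with the outer count. Handling the boundary effect — that the interrupt fires \emph{after} witnessing $20K$ bad nodes, so one extra level with up to $\Order(KB)$ children may be fully expanded — is routine but needs to be stated. Once the structural claim ``every active node below a matched recursion-root is a child of a \Far\ node, and there are $\le 20K$ such per level'' is established, the counting is a short calculation.
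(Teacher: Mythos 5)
Your proof takes essentially the same approach as the paper: decompose active nodes by whether they or an ancestor are matched, bound the first group via the $\Order(KD)$ unmatched nodes and their $B$ children each, and bound each subtree hanging off a ``recursion root'' by $\Order(KB)$ nodes per level via the interrupt rule, for $\Order((KDB)^2)$ total. Your probability discussion meanders a bit before landing on the right place: the Precision Sampling Lemma is not needed here at all, and in fact the paper gets $0.98$ by conditioning \emph{only} on the Matching Tests being correct (the Periodicity Tests need not be correct --- the argument uses only \emph{which} answer they gave, since a \Close\ answer prunes the node and a \Far\ answer counts against the interrupt), whereas conditioning on all property tests as you do would give the slightly weaker constant $0.96$ with the paper's $\delta$. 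Your caveats about nested matched ancestors and the one-extra-level boundary effect are legitimate housekeeping points that the paper handles implicitly.
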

\begin{proof}
Recall that (unconditionally) there are at most $2K^{100}$ nodes in the partition tree. Hence, by a union bound, all Matching Tests in \cref{alg:main:line:alignment-test} succeed with probability at least $1 - 0.02 = 0.98$. We will condition on this event throughout the proof. We distinguish between three kinds of nodes $v$:
\begin{enumerate}
\item $v$ itself and all of $v$'s ancestors are not matched,
\item $v$ itself is matched, but all of $v$'s ancestors are not matched,
\item some ancestor of $v$ (and therefore also $v$ itself) is matched.
\end{enumerate}

By the previous lemma we know that there are at most $K D$ nodes which are not matched. It follows that there are at most $K D$ nodes of the first kind.

It is also easy to bound the number of nodes $v$ of the second kind: Observe that $v$'s parent is a node of the first kind. Hence, there can be at most $KD \cdot B$ nodes of the second kind.

Finally, we bound the number of nodes of the third kind. Any such node $v$ has a unique ancestor $w$ of the second kind. There are two cases for~$w$: Either the condition in \cref{alg:main:line:periodicity-test-condition} succeeds and the algorithm directly solves $w$. This is a contradiction since we assumed that $v$ (a descendant of $w$) is active. Or this condition fails, and the algorithm continues branching with the exception that if in the subtree below $w$ there are more than $20K$ active nodes per level for which the Periodicity Test in \cref{alg:main:line:periodicity-test} reports \Far, then we interrupt the recursive computation. We claim that consequently in the subtree below~$w$ (consisting only of nodes of the third kind), there are at most~$20 K B$ active nodes per level. Indeed, suppose there were more than~$20 K B$ active nodes on some level. Then consider their parent nodes; there must be more than~$20K$ parents. For each such parent $u$ the Matching Test reported \Close{} (since~$u$ is a matched node, and we assumed that all Matching Tests succeed) and the Periodicity Test reported \Far{} (since otherwise the condition in \cref{alg:main:line:periodicity-test} triggers and solves $u$ directly, but we assumed that $u$ is the parent of some other active node). Note that we have witnessed more than $20K$ nodes on one level below $w$ for which the Periodicity Test reported \Far. This is a contradiction.

In total the number of active nodes below $w$ is bounded by $20 K B \cdot D$. Recall that there are at most $KD \cdot B$ nodes $w$ of the second kind, hence the total number of active nodes of the third kind is $20 (K D B)^2$. Summing over all three kinds, we obtain the claimed bound.
\end{proof}

We are ready to bound the total running time. In the following lemma we prove that the algorithm is efficient \emph{assuming that the edit distance between $X$ and $Y$ is small}. This assumption can be justified by applying \cref{alg:main} with the following modification: We run \cref{alg:main} with a time budget and interrupt the computation as soon as the budget is depleted. In this case we can immediately infer that the edit distance between $X$ and $Y$ must be large.

\begin{lemma}[Running Time of \cref{alg:main}] \label{lem:main-time}
Let $X, Y$ be strings with $\ED(X, Y) \leq K$. Then \cref{alg:main} runs in time $n / K \cdot (\log K)^{\Order(\log_B(K))} + \widetilde\Order(K^4 B^2)$, with constant probability $0.9$.
\end{lemma}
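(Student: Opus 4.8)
The plan is to combine the two preceding lemmas on the structure of the recursion tree (the bound on active nodes, \cref{lem:main-active-nodes}, and the setup from \cref{lem:alignment-many}) with the efficiency guarantee of the Precision Sampling Lemma (\cref{lem:precision-sampling}). The running time decomposes as a sum over active nodes $v$ of the ``local'' cost of one execution of \cref{alg:main} at $v$ — that is, ignoring the recursive calls but including the Matching Test, the Periodicity Test, the Range Minimum computation, the recovery algorithm, and, at leaves, the edit-distance-for-many-shifts subroutine. Inspecting the pseudocode and the cited primitives, this local cost is $\Order(r_v\,|X_v|\log(\delta^{-1}) + K\log|X_v|) + \widetilde\Order(K) + \Order(|X_v| + K^2)$ at an internal node, and $\Order(|X_v| + K^2)$ at a leaf; with $\delta = K^{-\Order(1)}$ and, for the cut tree, $|X_v| \le n/K^{100} + \Order(1)$ at the leaves, every leaf contributes at most $\Order(n/K^{99} + K^2)$. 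So the total is at most $\sum_{v\text{ active}}\bigl(r_v\,|X_v|\cdot\polylog(K) + \widetilde\Order(K^2)\bigr)$ plus the leaf contributions.

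First I would handle the additive $\poly(K)$ part: by \cref{lem:main-active-nodes} there are $\Order((KDB)^2)$ active nodes with probability $0.98$, and since the tree is cut at depth $D = \Order(\log_B(K^{100})) = \Order(\log_B K)$, we get $\Order((KB\log_B K)^2) = \widetilde\Order(K^2 B^2)$ active nodes. Multiplying by the per-node overhead $\widetilde\Order(K^2)$ gives $\widetilde\Order(K^4 B^2)$, and the leaf contributions $\widetilde\Order(K^2 B^2)\cdot \Order(n/K^{99}+K^2)$ are absorbed into $\Order(n/K) + \widetilde\Order(K^4 B^2)$ since there are far fewer than $K^{99}$ active nodes. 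The interesting part is the term $\sum_{v\text{ active}} r_v\,|X_v|\cdot\polylog(K)$. Here I would bound $\Ex[r_v]$ for each node $v$ using the Efficiency property of \cref{lem:precision-sampling}: since $r_v = r_{\text{root}} \cdot \prod 1/u_w$ over the ancestors $w$ of $v$, and each $u_w$ is an independent sample from $\mathcal D$, conditioning on the efficiency events (which all hold simultaneously with high probability after a union bound over the $\le 2K^{100}$ nodes, absorbing $\log N = \Order(\log K)$) gives $\Ex[1/u_w \mid E] \le \polylog(K)$ per ancestor. Since $v$ has depth $\le D = \Order(\log_B K)$, we get $\Ex[r_v] \le \tfrac1K \cdot (\log K)^{\Order(\log_B K)}$.

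Then I would sum: $\Ex\bigl[\sum_{v\text{ active}} r_v\,|X_v|\bigr] \le \tfrac1K (\log K)^{\Order(\log_B K)} \cdot \sum_{v} |X_v|$, where the last sum is over \emph{all} nodes of the (cut) tree. Because the intervals $I_v$ on any fixed level partition $\rangezero n$, we have $\sum_{v\text{ on level }i}|I_v| = n$; and although $|X_v| = |I_v| + 2K$ is slightly larger than $|I_v|$, the extra $2K$ per node contributes only $\Order(K^{101})$ in total (absorbed again), so $\sum_v |X_v| \le nD + \Order(K^{101}) = \Order(n\log_B K) + \Order(K^{101})$. This yields $\tfrac nK \cdot (\log K)^{\Order(\log_B K)}$ for the main term. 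Adding the pieces and applying Markov's inequality to convert the expectation bounds into a high-probability statement (combined with the $0.98$-probability active-node bound, for overall constant probability $0.9$) completes the proof.

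The main obstacle I anticipate is the careful bookkeeping around the two sources of randomness interacting: the active-node count depends on the Matching Tests succeeding, while the $r_v$ bound depends on the precision-sampling efficiency events, and the running-time sum ranges over a \emph{random} set of active nodes whose identity correlates with the $u_w$'s. I would sidestep this by bounding $\sum_{v}|X_v|$ over \emph{all} nodes (a deterministic quantity, independent of which nodes are active), so that $r_v$ only needs its expectation bound node-by-node and no joint reasoning about "active $\times$ $r_v$" is needed; the active-node bound is then only invoked for the additive $\poly(K)$ term, where $r_v$ does not appear. A secondary subtlety is making sure the union bound over efficiency events uses $N$ large enough (e.g.\ $N = K^{200}$) that the conditioning holds for all nodes simultaneously while still keeping $\log N = \Order(\log K)$.
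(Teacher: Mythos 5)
Your proposal is correct and follows essentially the same route as the paper's proof: bound the per-node local cost, use the Precision Sampling Lemma's efficiency property with $N = \poly(K)$ to bound $\Ex[r_v \mid E]$, charge the $r_v|X_v|$ term against the deterministic quantity $\sum_v |X_v| \le nD$ over all nodes of the cut tree, and charge the additive $\widetilde\Order(K^2)$ per node against the $\Order((KDB)^2)$ active-node bound from \cref{lem:main-active-nodes}, finishing with Markov and a union over the conditioning events. The one point you flag as a subtlety — decoupling the random rates $r_v$ from the random set of active nodes by summing $|X_v|$ over all nodes — is exactly the (unstated) trick the paper relies on, so you have, if anything, made the argument slightly more transparent than the original.
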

\begin{proof}
We first bound the expected running time of a single execution of \cref{alg:main} where we ignore the cost of recursive calls. Let $D$ denote the depth of the partition tree. We proceed in the order of the pseudocode:
\begin{itemize}
\item \cref{alg:main:line:test-short-condition,alg:main:line:test-short}: If the test in \cref{alg:main:line:test-short-condition} succeeds, then \cref{alg:main:line:test-short} takes time~$\Order(|X_v| + K^2)$ by \cref{thm:ed-shifts}, where $|X_v| \leq 100K^2$ or $|X_v| \leq \Order(n / K^{100})$. The total time of this step is therefore bounded by~$\Order(K^2 + n / K^{100})$.
\item \cref{alg:main:line:alignment-test,alg:main:line:periodicity-test}: Running the Matching and Periodicity Tests (\cref{lem:alignment-test,lem:periodicity-test}) takes time $\Order(r_v |X_v| \log K + K \log |X_v|)$. We want to replace the $\log|X|$ by $\log K$ here, so assume that the second summand dominates, i.e., \makebox{$r_v |X_v| \log K \leq K \log |X_v|$}. At any node $v$ the rate $r_v$ is always at least $1000/K$ (since at the root the rate is exactly $1000/K$ and below the root the rate never decreases), hence $|X_v| / \log |X_v| \leq \poly(K)$. It follows that we can bound the total time of this step indeed by $\Order(r_v |X_v| \log K + K \log K)$.
\item \crefrange{alg:main:line:alignment-test-condition}{alg:main:line:random-rule}: Here we merely produce the output according to some fixed rules. The time of this step is bounded by $\Order(K)$.
\item \crefrange{alg:main:line:iter-children}{alg:main:line:return}: This step takes time $\Order(K B)$ and the analysis is exactly as in \cref{lem:ako-time}.
\end{itemize}
In total, the time of a single execution is bounded by $\Order(r_v |X_v| \log K + K^2 + n/K^{100})$. We will simplify this term by plugging in the (expected) rate~$r_v$ for any node $v$ (in a way similar to \cref{lem:ako-time}).

Recall that~$r_v = 1000 \cdot (K \cdot u_{v_1} \dots u_{v_d})^{-1}$ where $v_0, v_1, \dots, v_d = v$ is the root-to-node path leading to $v$ and each $u_i$ is sampled from \makebox{$\mathcal D(\varepsilon = (200 \log K)^{-1}, \delta = 0.01 \cdot K^{-101})$}, independently. Using \cref{lem:precision-sampling} there exist events $E_w$ happening each with probability $1 - 1/N$ such that
\begin{equation*}
    \Ex(1 / u_w \mid E_w) \leq \widetilde{\Order}(\varepsilon^{-2} \log(\delta^{-1}) \log N) \leq \polylog(K).
\end{equation*}
In the last step we set $N = 100 K^{100}$. Taking a union bound over all active nodes $w$ (there are at most $2K^{100}$ many), the event $E = \bigwedge_w E_w$ happens with probability at least $0.98$ and we will condition on $E$ from now on. Under this condition we have:
\begin{equation*}
    \Ex(r_v \mid E) = \frac{1000}K \prod_{i=1}^d \Ex(1 / u_{v_i} \mid E_{v_i}) \leq \frac{(\log K)^{\Order(d)}}K \leq \frac{(\log K)^{\Order(\log_B(K))}}K.
\end{equation*}

Finally, we can bound the total expected running time (conditioned on $E$) as follows, summing over all active nodes $v$:
\begin{equation*}
    \sum_v \Order\!\left(|X_v| \cdot \frac{(\log K)^{\Order(\log_B(K))}}k + K^2 + \frac{n}{K^{100}}\right)\!.
\end{equation*}
Using that $\sum_w |X_w| = n$ whenever $w$ ranges over all nodes on a fixed level in the partition tree, and thus $\sum_v |X_v| \leq n \cdot D$ where $v$ ranges over all nodes, we can bound the first term in the sum by $n / K \cdot (\log K)^{\Order(\log_B(K))}$. The second term can be bounded by $K^2$ times the number of active nodes. By \cref{lem:main-active-nodes} this becomes $\Order(K^4 D^2 B^2) = \widetilde\Order(K^4 B^2)$. By the same argument the third term becomes at most $n/K^{90}$ and is therefore negligible.

We conditioned on two events: The event $E$ and the event that the number of active nodes is bounded by $\Order(K^2 D^2 B^2)$ (\cref{lem:main-active-nodes}). Both happen with probability at least $0.98$, thus the total success probability is $0.96 \geq 0.9$.
\end{proof}

\subsection{Main Theorem} \label{sec:main:sec:main-theorem}
We finally recap and formally prove our main theorem and its two corollaries.

\thmmain*
\begin{proof}
For now we keep $K$ as a parameter and will later set $K$ in terms of $k$. We run \cref{alg:main} to compute an approximation~$\Delta = \Delta_{r, 0}$ where $r$ is the root node in the partition tree. By the correctness of \cref{alg:main} (\cref{lem:main-correctness}) we have that $0.1 \ED^{\leq K}(X, Y) - 0.001 K \leq \Delta \leq 10 \TD^{\leq K}(X, Y) + 0.001 K$, and using the equivalence of edit distance and tree distance (\cref{lem:equivalence-ed-td}) we conclude that $\Delta \leq 20 B D \cdot \ED^{\leq K}(X, Y) + 0.001 K$, where~$D \leq \log_B((K)^{100})$ is the depth of the partition tree. It follows that we can distinguish whether the edit distance $\ED(X, Y)$ is at most $K / (1000 BD)$ or at least $K$. Indeed:
\begin{itemize}
\item If $\ED(X, Y) \leq K / (1000 BD)$, then $\Delta \leq 0.02 K + 0.001K = 0.021 K$.
\item If $\ED(X, Y) \geq K$, then $\Delta \geq 0.1 K - 0.001 K = 0.099 K$. 
\end{itemize}

To bound the running time, we run the previous algorithm with time budget $n / K \cdot (\log K)^{\Order(\log_B(K))} + \widetilde\Order(K^4 B^2)$ (with the same constants as in \cref{lem:main-time}). If the algorithm exceeds the time budget, then we interrupt the computation and report that $\ED(X, Y) \geq K$. This is indeed valid, since \cref{lem:main-time} certifies that $\ED(X, Y) > K$ in this case.

To obtain the claimed statement, we have to pick $K$. We set $K = \Theta(k \log_B(k) \cdot B)$, where the constant is picked in such a way that $K / (1000 B D) \geq K / (1000 B \cdot \log_B((K)^{100})) \geq k$. Then the algorithm distinguishes edit distances $k$ versus $K = \Theta(k \log_B(k) \cdot B)$.
\end{proof}

\coroptimaltime*
\begin{proof}
Simply plugging $B = 2^{\sqrt{\log k}}$ into \cref{thm:main} leads to the correct gap, but we suffer a factor $k^{\order(1)}$ in the running time. For that reason, let $\bar k$ be a parameter to be specified later and apply \cref{thm:main} with parameter~$\bar k$ and $B = 2^{\sqrt{\log \bar k}}$. In that way we can distinguish the gap $\bar k$ versus \makebox{$\bar k \cdot 2^{\Theta(\sqrt{\log \bar k})}$} in time $\Order(n / \bar k \cdot (\log \bar k)^{\Order(\log_B(\bar k))} + \bar k^{4+\order(1)})$. This term can be written as \makebox{$\Order(n / \bar k \cdot 2^{\alpha(\bar k)} + \bar k^{4+\order(1)})$}, where $\alpha(\bar k) = \Order(\sqrt{\log \bar k} \log\log \bar k)$.

Finally, set $\bar k = k \cdot 2^{2\alpha(k)}$. For $k$ at least a sufficiently large constant we have that $2^{\alpha(\bar k)} \leq 2^{2\alpha(k)}$, and therefore the running time becomes $\Order(n/k + k^{4+\order(1)})$ as claimed. (For small constant $k$, we can exactly compute the $k$-capped edit distance in linear time $\Order(n)$ using the Landau-Vishkin algorithm~\cite{LandauV88}.) The algorithm distinguishes the gap $\bar k$ versus $\bar k \cdot 2^{\Order(\sqrt{\log k})} = k \cdot 2^{\widetilde\Theta(\sqrt{\log k})}$. Since $k \leq \bar k$, this is sufficient to prove the claim.
\end{proof}

\corpolyloggap*
\begin{proof}
Let $c$ be the constant so that the time bound in \cref{thm:main} becomes $n / k \cdot (\log k)^{c \log_B(k)} + \widetilde\Order(k^4 \poly(B))$. We apply \cref{thm:main} with parameter $B = (\log k)^{c / \varepsilon}$. Then the gap is indeed $k$ versus~$k \cdot (\log k)^{\Order(1/\varepsilon)}$ as claimed. Since $(\log k)^{c \log_B(k)} = k^\varepsilon$ the running time bound becomes~$\Order(n / k^{1-\varepsilon} + k^{4+\order(1)})$.
\end{proof}

\newpage
\appendix
\section{Equivalence of Edit Distance and Tree Distance} \label{sec:ed-td}
In this section we prove that the tree distance closely approximates the edit distance. The proof is an easy generalization of~\cite[Theorem~3.3]{AndoniKO10}. At the end of the section, we also provide a proof of \cref{lem:capped-td-equivalence}.
\lemequivedtd*

For the proof we continue using the notation $X_v$ and $Y_{v, s}$ as in the rest of the paper. That is, for a node $v$ with $I_v = \range ij$ we set $X_v = X \range ij$ and $Y_{v, s} = Y \range{i + s}{j + s}$.

\paragraph{The Lower Bound}
We prove the more general statement $\ED(X_v, Y_{v, s}) \leq \TD_{v, s}(X, Y)$. The proof is by induction on the depth of $T$. If $v$ is a leaf we have $\ED(X_v, Y_{v, s}) = \TD_{v, s}(X, Y)$ by definition. So focus on the case where $v$ is an internal node with children $v_0, \dots, v_{b-1}$. In this case we have:
\begin{align}
    \ED(X_v, Y_{v, s}) &\leq \sum_{\ell=0}^{b-1} \ED(X_{v_\ell}, Y_{v_\ell, s}) \label{lem:equiv-ed-td:eq:1} \\
    &\leq \sum_{\ell=0}^{b-1} \min_{s' \in \Int} \ED(X_{v_\ell}, Y_{v_\ell, s'}) + \ED(Y_{v_\ell, s'}, Y_{v_\ell, s}) \label{lem:equiv-ed-td:eq:2} \\
    &\leq \sum_{\ell=0}^{b-1} \min_{s' \in \Int} \ED(X_{v_\ell}, Y_{v_\ell, s'}) + 2 \cdot |s - s'| \label{lem:equiv-ed-td:eq:3} \\
    &\leq \sum_{\ell=0}^{b-1} \min_{s' \in \Int} \TD_{v_\ell, s'}(X, Y) + 2 \cdot |s - s'| \label{lem:equiv-ed-td:eq:4} \\
    &= \TD_{v, s}(X, Y). \label{lem:equiv-ed-td:eq:5}
\end{align}
Here we used \eqref{lem:equiv-ed-td:eq:1} the facts that $X_v = \bigcirc_\ell X_{v_\ell}$ and $Y_{v, s} = \bigcirc_\ell Y_{v_\ell, s}$, \eqref{lem:equiv-ed-td:eq:2} the triangle inequality, \eqref{lem:equiv-ed-td:eq:3} the observation that $\ED(Y_{w, s}, Y_{w, s'}) \leq 2 \cdot |s - s'|$ for all shifts $s, s' \in \Int$ (by deleting~$|s - s'|$ in the beginning and inserting~$|s - s'|$ characters at the end of the strings), \eqref{lem:equiv-ed-td:eq:4} the induction hypothesis, and finally \eqref{lem:equiv-ed-td:eq:5} the definition of the tree distance.

\paragraph{The Upper Bound}
Because of some technical complications we do not prove the upper bound by induction. Instead, we will unfold the definition of tree distance and prove the upper bound ``globally''. Specifically, it is easy to prove (by induction on the depth of the computation tree) that
\begin{equation*}
    \TD(X, Y) = \min_{\substack{s \in \Int^T\\s_r = 0}} \;\Bigg(\sum_{\substack{\text{internal}\\\text{nodes $v$}}} \;\sum_{\substack{\text{children $w$}\\\text{of $v$}}} 2 \cdot |s_v - s_w|\Bigg) + \sum_{\text{leaves $v$}} \ED(X_v, Y_{v, s_v}),
\end{equation*}
where we write $r$ for the root node. For clarity: Here, we minimize over values $s_v \in \Int$ for all nodes $v$ in the computation tree---except for the root node $r$ for which we fix $s_r = 0$.

To prove an upper bound on this expression, we first specify the values $s_v$. For this purpose, we let $A$ denote an optimal alignment between $X$ and $Y$. Recall that the computation tree determines an interval $I_v$ for each node $v$; we write $I_v = \range{i_v}{j_v}$. We can now pick the value $s_v = A(i_v) - i_v$. For convenience we also define $Y_v' = Y \range{A(i_v)}{A(j_v)}$. By the way we defined optimal alignments $A$, we have that $\ED(X, Y) = \sum_v \ED(X_v, Y_v')$, where the sum is over all nodes~$v$ in a cut through the partition tree (e.g., all leaves or all nodes at one specific level).

\begin{claim*}
$\ED(X_v, Y_{v, s_v}) \leq 2 \ED(X_v, Y_v')$.
\end{claim*}
\begin{proof}
By the triangle inequality, we can bound $\ED(X_v, Y_{v, s_v})$ by the sum of $\ED(X_v, Y_v')$ and $\ED(Y_v', Y_{v, s_v})$. But observe that both strings $Y_v'$ and $Y_{v, s_v}$ are substrings of $Y$ starting at the same position $A(i_v)$. Hence, one is a prefix of the other string and we can bound their edit distance by their length distance, $\ED(Y_v', Y_{v, s_v}) \leq |\, |Y_v'| - |Y_{v, s_v}| \,| = |\, |Y_v'| - |X_v| \,| \leq \ED(X_v, Y_v')$.
\end{proof}

\begin{claim*}
Let $w$ be a child of $v$. Then $|s_v - s_w| \leq \ED(X_v, Y_v')$.
\end{claim*}
\begin{proof}
Recall that $s_v = A(i_v) - i_v$ and $s_w = A(i_w) - i_w$. Using that the edit distance of two strings is at least their difference in length, we have that $|s_v - s_w| \leq \ED(X \range{i_v}{i_w}, Y \range{A(i_v)}{A(i_w)})$. Since $w$ is a child of $v$ we have that $i_w \leq j_v$. Thus we obtain $|s_v - s_w| \leq \ED(X \range{i_v}{i_w}, Y \range{A(i_v)}{A(i_w)}) = \ED(X_v, Y_v')$, using again that $A$ is an optimal alignment.
\end{proof}

Using the two claims, we can finally give an upper bound on the tree distance. In the following calculation we will use \eqref{lem:equiv-ed-td:eq:6} the tree distance characterization from before (plugging in our values $s_v$), \eqref{lem:equiv-ed-td:eq:7} the two claims, \eqref{lem:equiv-ed-td:eq:8} the assumption that each node has at most~$B$ children and the (obvious) fact that each node is either a leaf or internal, and finally \eqref{lem:equiv-ed-td:eq:9} that $\sum_v \ED(X_v, Y_v') = \ED(X, Y)$ whenever the sum is over all nodes $v$ at one specific level of the computation tree, hence $\sum_v \ED(X_v, Y_v') = D \cdot \ED(X, Y)$ if the sum is over all nodes~$v$: 
\begin{align}
    \TD(X, Y) &\leq \Bigg(\sum_{\substack{\text{internal}\\\text{nodes $v$}}} \;\sum_{\substack{\text{children $w$}\\\text{of $v$}}} 2 \cdot |s_v - s_w|\Bigg) + \sum_{\text{leaves $v$}} \ED(X_v, Y_{v, s_v}) \label{lem:equiv-ed-td:eq:6} \\
    &\leq \Bigg(\sum_{\substack{\text{internal}\\\text{nodes $v$}}} \;\sum_{\substack{\text{children $w$}\\\text{of $v$}}} 2 \cdot \ED(X_v, Y_v')\Bigg) + \sum_{\text{leaves $v$}} 2 \cdot \ED(X_v, Y_v') \label{lem:equiv-ed-td:eq:7} \\
    &\leq 2B \sum_{\substack{\text{all}\\\text{nodes $v$}}} \ED(X_v, Y_v') \label{lem:equiv-ed-td:eq:8} \\
    &\leq 2BD \cdot \ED(X, Y). \label{lem:equiv-ed-td:eq:9}
\end{align}
This completes the proof of \cref{lem:equivalence-ed-td}. It remains to prove \cref{lem:capped-td-equivalence}.

\lemcappedtdequivalence*
\begin{proof}
It is easy to prove by induction that $\TD_{v, s}^{\leq K}(X, Y) \geq \min(\TD_{v, s}(X, Y), K)$ for all nodes $v$ and all shifts $s$, however, the other direction is not necessarily true for all nodes.


We prove that nevertheless $\TD^{\leq K}(X, Y) \leq \min(\TD(X, Y), K)$. We may assume that $\TD(X, Y) < K$ as otherwise the statement is clear. Let $s_v$ denote the optimal shifts picked at all nodes $v$ in the tree distance definition~\eqref{eq:tree-distance}. That is, we have $s_r = 0$ for the root node~$r$ and for all nodes $v$ with children~$v_1, \dots, v_B$ the following equation is satisfied:
\begin{equation*}
    \TD_{v, s_v}(X, Y) = \sum_{i \in \rangezero B} (\TD_{v_i, s_{v_i}}(X, Y) + 2 \cdot |s_v - s_{v_i}|).
\end{equation*}
By induction one can easily verify that $|s_v| \leq \TD(X, Y) < K$ for all nodes $v$. But this implies that the same shifts can be picked in the $K$-capped tree distance definition~\eqref{eq:capped-td} to certify that $\TD^{\leq K}_{r, 0}(X, Y) \leq \TD_{r, 0}(X, Y)$, and thus $\TD^{\leq K}(X, Y) \leq \TD(X, Y)$. 
\end{proof}
\section{Precision Sampling Lemma}\label{sec:psl}
In this section, we give a proof of the Precision Sampling Lemma. Although originally used and stated by Andoni, Krauthgamer and Onak in~\cite{AndoniKO10}, our formulation is essentially taken from their follow-up paper~\cite{AndoniKO11}. We give a simpler proof using the exponential distribution, as suggested in a later paper by Andoni~\cite{Andoni17} (indeed, Andoni states a simplified version of the precision Sampling Lemma in~\cite{Andoni17} to show how exponential random variables simplify the proof). 

Recall that the exponential distribution with rate $\lambda > 0$ is a continuous distribution over the positive reals with probability density function $f(x) = \lambda e^{-\lambda x}$. If a random variable $u$ has this distribution, we write $u \sim \Exp(\lambda)$. We will use the following facts:

\begin{fact}[Properties of the Exponential Distribution]\label{fact:exponential-distribution}
    The following holds:
    \begin{itemize}
    \itemdesc{Scaling:} Let $u \sim \Exp(\lambda)$ and $\alpha > 0$, then $u/\alpha \sim \Exp(\alpha \cdot \lambda)$.
    \itemdesc{Min-Stability:} If $u_1,\dots,u_n$ are independent and $u_i \sim \Exp(\lambda_i)$, then $\min\{u_1,\dots,u_n\}$ is distributed as $\Exp(\sum_i \lambda_i)$.
    \end{itemize}
\end{fact}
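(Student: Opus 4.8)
The plan is to prove both properties directly from the tail (survival) function of the exponential distribution: if $u \sim \Exp(\lambda)$ then $\Pr[u > x] = e^{-\lambda x}$ for every $x \ge 0$ (and the tail equals $1$ for $x < 0$). Since the tail function uniquely determines a distribution supported on the positive reals, it suffices in each case to compute the tail function of the transformed variable and recognize it as that of an exponential with the claimed rate. This reduces both statements to one-line calculations, so there is no real obstacle; the only points requiring a moment's care are the degenerate range $x < 0$ (where every tail probability equals $1$ and the identities are trivial) and, in the scaling part, remembering that we are transforming the random variable rather than its density, so no Jacobian bookkeeping is needed when working with the tail function.

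For the scaling property, I would fix $\alpha > 0$ and $x \ge 0$ and write $\Pr[u/\alpha > x] = \Pr[u > \alpha x] = e^{-\lambda \alpha x} = e^{-(\alpha\lambda)x}$, which is exactly the tail function of $\Exp(\alpha\lambda)$; hence $u/\alpha \sim \Exp(\alpha\lambda)$.

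For the min-stability property, I would fix $x \ge 0$ and use that the event $\{\min_i u_i > x\}$ is the intersection $\bigcap_i \{u_i > x\}$, together with the independence of the $u_i$, to obtain
\begin{equation*}
    \Pr\!\left[\min\{u_1,\dots,u_n\} > x\right] = \prod_{i=1}^{n} \Pr[u_i > x] = \prod_{i=1}^{n} e^{-\lambda_i x} = e^{-\left(\sum_{i=1}^{n} \lambda_i\right) x},
\end{equation*}
which is the tail function of $\Exp\!\left(\sum_i \lambda_i\right)$. Therefore $\min\{u_1,\dots,u_n\} \sim \Exp\!\left(\sum_i \lambda_i\right)$, as claimed.
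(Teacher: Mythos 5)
Your proof is correct. Note that the paper states this as a \emph{Fact} and gives no proof at all — it is treated as a standard, well-known property of the exponential distribution — so there is no proof in the paper to compare against. Your argument via the survival function $\Pr[u > x] = e^{-\lambda x}$ is precisely the textbook derivation: scaling follows by substituting $\alpha x$ into the tail, and min-stability by multiplying the independent tails. Both steps are airtight, and you correctly invoke the fact that the tail function determines the distribution, so nothing is missing.
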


Recall that we say that $\widetilde x$ is an $(a,b)$-approximation of $x$ if $x/a - b \leq \widetilde x \leq ax + b$. We will use the following composition property which follows immediately:

\begin{proposition}[Composition of Approximations]\label{prop:approx-composition}
Let $\widetilde{x}$ be an $(a,b)$-approximation of $x$ and $\widetilde{y}$ be an $(a',b')$-approximation of $\widetilde{x}$. Then, $\widetilde{y}$ is an $(a \cdot a', a' \cdot b + b')$-approximation of $x$.
\end{proposition}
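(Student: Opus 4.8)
The plan is to unfold the definition of $(a,b)$-approximation into the two chains of inequalities supplied by the hypotheses and then compose them directly. Recall that by assumption $x/a - b \leq \widetilde{x} \leq ax + b$ and $\widetilde{x}/a' - b' \leq \widetilde{y} \leq a'\widetilde{x} + b'$, and the goal is to show $x/(aa') - (a'b + b') \leq \widetilde{y} \leq aa'x + (a'b + b')$. Both directions follow by substituting the appropriate bound on $\widetilde{x}$ into the appropriate bound on $\widetilde{y}$; since $a' > 0$, all divisions and multiplications by $a'$ preserve the inequalities, and there is nothing to verify beyond one trivial monotonicity observation.

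For the upper bound I would chain $\widetilde{y} \leq a'\widetilde{x} + b' \leq a'(ax + b) + b' = aa'x + (a'b + b')$, using only $a' > 0$. For the lower bound I would chain $\widetilde{y} \geq \widetilde{x}/a' - b' \geq (x/a - b)/a' - b' = x/(aa') - b/a' - b'$, and then note that $-b/a' - b' \geq -(a'b + b')$: this is the one place where the standing convention $a' \geq 1$ and $b \geq 0$ is used, via $b/a' \leq b \leq a'b$. Combining the two displays yields exactly the claimed $(aa',\, a'b + b')$-approximation.

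The only (minor) obstacle is being careful that the convention $a, a' \geq 1$ and $b, b' \geq 0$ is in force, since the step $b/a' \leq a'b$ relies on $a' \geq 1$; without it the additive slack in the statement would not hold. Everything else is a one-line substitution, so the proof is essentially mechanical.
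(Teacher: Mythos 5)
Your proof is correct and is the direct, mechanical argument the paper has in mind; the paper states that the proposition ``follows immediately'' and omits the proof entirely, so there is nothing to compare against. Your observation that the lower bound needs the standing convention $a' \geq 1$, $b \geq 0$ (so that $b/a' \leq a'b$) is a genuine and correct caveat: the preliminary definition of $(a,b)$-approximation does not impose these constraints, and they are only made explicit later in the Precision Sampling Lemma's statement (``for any parameters $a \geq 1$ and $b \geq 0$''), which is exactly where this composition proposition is applied.
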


\lempsl*

\begin{algorithm}[t]
\caption{Precision Sampling} \label{alg:psl}
\begin{algorithmic}[1]
    \medskip
    \State Sample $u_{i,j} \sim \Exp(1)$ for $i \in \rangezero{n}, j \in \rangezero{\lambda}$ independently \label{alg:psl:line:precisions-1}
    \State Let $u_i = \min_j u_{i,j}$ for each $i \in \rangezero{n}$ \label{alg:psl:line:precisions-2}
    \State Obtain $(a, b \cdot u_i)$-approximations $\widetilde{A}_i$ of $A_i$ for all $i \in \rangezero{n}$ \label{alg:psl:line:approximations}
    \State Let $\widetilde{M}_j = \max_{i \in \rangezero{n}} \widetilde{A}_i / u_{i,j}$ for all $j \in \rangezero{\lambda}$ \label{alg:psl:line:max}
    \State \Return $\ln 2 \cdot \median_{j \in \rangezero{\lambda}} \widetilde{M}_j$ \label{alg:psl:line:return}
\end{algorithmic}
\end{algorithm}

\begin{proof}
We start by explaining the recovery algorithm, whose pseudocode is in~\cref{alg:psl}. \Cref{alg:psl:line:precisions-1,alg:psl:line:precisions-2} choose the \emph{precisions} $u_i$ which determine the additive error required from the approximations $\widetilde{A}_i$ obtained in \cref{alg:psl:line:approximations} for each $i \in \rangezero{n}$. To understand \cref{alg:psl:line:max}, observe that since each $\widetilde{A}_i$ is an $(a, b \cdot u_i)$-approximation of $A_i$ and since $u_i \leq u_{i,j}$, it follows that $\widetilde{M}_j$ is an $(a,b)$-approximation of 
\begin{equation*}
    M_j = \max_{i \in \rangezero n} \frac{A_i}{u_{i, j}}.
\end{equation*}
In what follows, we will argue that the median of $M_1,\dots,M_{\lambda}$ scaled by $\ln 2$ is an $(1 + \varepsilon, 0)$-approximation of $\sum_i A_i$ with probability at least $1 - \delta$. Then, by~\cref{prop:approx-composition} the scaled median of the \raisebox{0pt}[0pt][0pt]{$\widetilde{M}_j$}'s returned in~\cref{alg:psl:line:return} gives the desired $((1+\varepsilon) \cdot a, b)$-approximation of~$\sum_i A_i$.

Using the scaling and min-stability properties of \cref{fact:exponential-distribution}, we have that $M_j$ is the inverse of an exponentially distributed random variable with rate $\sum_i A_i$. Hence, since $\ln 2$ is the median of $\Exp(1)$ and it has a continuous probability density function, we have that 
\begin{align*}
    \Pr(M_j \cdot \ln 2 < (1 - \varepsilon){\textstyle \sum_i A_i}) 
        &= \Pr_{u \sim \Exp(1)}(\tfrac{1}{u} < (1 - \varepsilon)/\ln 2)\\
        &< \Pr_{u \sim \Exp(1)}(u > (1 + \varepsilon/2) \ln 2) = \frac{1}{2} - \Theta(\varepsilon).
\end{align*}
Similarly, it holds that $\Pr(M_j \cdot \ln 2 > (1 + \varepsilon)\sum_i A_i) < 1/2 - \Theta(\varepsilon)$.

Let $X^-$ denote the number of $j$'s with $M_j \cdot \ln 2 < (1 - \varepsilon)\sum_i A_i$ and $X^+$ the number of $j$'s with $M_j \cdot \ln 2 > (1 + \varepsilon) \sum_i A_i$. The scaled median of the $M_j$'s will be an $(1 + \varepsilon, 0)$-approximation to $\sum_i A_i$ if $X^- < \lambda/2$ and $X^+ < \lambda/2$. We can bound the probability of these events with a Chernoff bound:
\begin{gather*}
    \Pr(X^- > \lambda/2) = \Pr(X^- > \Ex[\,X^-\,] + \Theta(\lambda \varepsilon)) \leq \exp(-\Omega(\lambda \varepsilon^2)),\\
    \Pr(X^+ > \lambda/2) = \Pr(X^+ > \Ex[\,X^+\,] + \Theta(\lambda \varepsilon)) \leq \exp(-\Omega(\lambda \varepsilon^2)).
\end{gather*}
Putting things together, we have that setting $\lambda = \Order(\varepsilon^{-2} \log(\delta^{-1}))$, the value returned in \cref{alg:psl:line:return} of \cref{alg:psl} is an $((1+\varepsilon) \cdot a, b)$-approximation to $\sum_i A_i$ with probability at least~$1 - \delta$. 

Note that our description so far does not quite adhere to the accuracy property of the Lemma (more precisely, the recovery algorithm as stated needs to know the samples~$u_{i,j}$ which were used to define the precisions $u_i$). To fix this technicality, we note that alternatively to \cref{alg:psl:line:precisions-1,alg:psl:line:precisions-2} of \cref{alg:psl} we could first sample $u_i \sim \Exp(\lambda)$ directly and then for \cref{alg:psl:line:max} sample~$u_{i,j} \sim \Exp(1)$ conditioned on that $\min_j u_{i,j} = u_i$. Due to the min-stability of the exponential distribution (\cref{fact:exponential-distribution}) this gives the same distribution (but the above is easier to implement). In particular, in this way the distribution $\mathcal{D}$ in the statement is just  $\Exp(\lambda)$.

\medskip
Now we prove the efficiency property. Fix an arbitrary $N \geq 1$ and let $u \sim \mathcal{D}$. Let $E$ denote the event that $u > 1 / (\lambda N)$. Then $\Pr(E) = e^{-1/N} \geq 1/3$, and
\begin{align*}
    \Ex[\,1 / u \mid E\,] 
        &= \frac{1}{\Pr(E)} \int_{u=1 / (\lambda N)}^\infty \frac{1}{u} \cdot \lambda e^{-\lambda u} du \\
        &\leq 3 \int_{u=1/(\lambda N)}^1 \frac{1}{u} \cdot \lambda e^{-\lambda u} du + 3 \int_{u = 1}^{\infty} \frac{1}{u} \cdot \lambda e^{-\lambda u} du \\
        &\leq 3 \lambda \int_{u= 1/(\lambda N)}^1 \frac{1}{u}\; du + O(1) \\
        &= \Order(\lambda \log(\lambda N)) \\
        &= \widetilde{O}(\lambda \log N).
\end{align*}
Plugging in our choice of $\lambda$, we obtain that $\Ex[\,1 / u \mid E\,] = \widetilde\Order(\varepsilon^{-2} \log(\delta^{-1}) \log N)$, as claimed. 

\medskip
Finally, we enforce that the distribution $\mathcal D$ is supported over $(0, 1]$, as the lemma states. We can simply transform a sample $u \sim \mathcal D$ into $u' \gets \min(u, 1)$. Since the samples $u$ only become smaller in this way, the estimates obtained \cref{alg:psl:line:approximations} of \cref{alg:psl} have smaller additive error and we therefore preserve the accuracy property. For the efficiency property, note that $\Ex[\,1 / u' \mid E\,] \leq \Ex[\,1 / u \mid E\,] + 1$, and therefore our bound on the conditional expectation remains valid asymptotically.
\end{proof}

\paragraph{Implementation in the Word RAM Model}
Throughout we implicitly assumed that we can sample from the continuous distribution $\Exp(\lambda)$, but in the word RAM model we obviously have to discretize. An easy way is to appropriately round the samples to multiples of~$1/\poly(n)$. The resulting distribution is a geometric distribution which we can easily sample from. Alternatively, one can prove the Precision Sampling Lemma directly in terms of a discrete probability distribution, see for instance~\cite{AndoniKO10}.
\section{The Range Minimum Problem} \label{sec:range-minimum}

In this section we give a simple algorithm to efficiently combine the recursive results from the children at every node of the computation tree.

\lemrangeminimum*
\begin{algorithm}[t]
\caption{} \label{alg:range-minimum}
\begin{algorithmic}[1]
\Input{Integers $A_{-K}, \dots, A_K$}
\Output{Integers $B_{-K}, \dots, B_K$ where $B_s = \min_{-K \leq s' \leq K} A_{s'} + 2  \cdot |s' - s|$}
\medskip
\State Let $B_{-K}^L \gets A_{-K}$ and $B_K^R \gets A_K$
\For{$s \gets -K+1, \dots, K$}
    \State $B_s^L \gets \min(B_{s-1}^L + 2, A_s)$
\EndFor
\For{$s \gets K-1, \dots, -K$}
    \State $B_s^R \gets \min(B_{s+1}^R - 2, A_s)$
\EndFor
\State\Return $B_s \gets \min(B_s^L, B_s^R)$ for all $s \in \set{-K, \dots, K}$
\end{algorithmic}
\end{algorithm}
\begin{proof}
We give the pseudocode in \cref{alg:range-minimum}. We claim that the algorithm correctly computes
\begin{gather*}
    B_s^L = \min_{-K \leq s' \leq s} A_{s'} - 2s' + 2s, \\
    B_s^R = \min_{s \leq s' \leq K} A_{s'} + 2s' - 2s,
\end{gather*}
for all $s$. This claim immediately implies that we correctly output $B_s = \min(B_s^L, B_s^R)$. We prove that $B_s^L$ is computed correctly; the other statement is symmetric. As the base case, the algorithm correctly assigns $B^L_{-K} = A_{-K}$. We may therefore inductively assume that~$B^L_{s-1}$ is assigned correctly, and have to show that $B^L_s \gets \min(B^L_{s-1} + 2, A_s)$ is a correct assignment. There are two cases: Either $B_s^L$ attains its minimum for $s' < s$, in which case~$B_s^L = B^L_{s-1} + 2$. Or $B_s^L$ attains the minimum for~$s' = s$, in which case~$B_s^L = A_s$. The total time is bounded by $\Order(K)$.
\end{proof}
\section{The Andoni-Krauthgamer-Onak Algorithm} \label{sec:ako}

\begin{algorithm}[t]
\caption{The Andoni-Krauthgamer-Onak algorithm} \label{alg:ako-full}
\begin{algorithmic}[1]
\Input{Strings $X, Y$, a node $v$ in the partition tree $T$ and a sampling rate $r_v > 0$}
\Output{$\Delta_{v, s}$ for all shifts $s \in \set{-K, \dots, K}$}
\medskip
\If{$v$ is a leaf} \label{alg:ako-full:line:test-1-condition}
    \State\Return $\Delta_{v, s} = \ED(X_v, Y_{v, s})$ for all $s \in \set{-K, \dots, K}$ \label{alg:ako-full:line:test-1}
\EndIf
\If{$|X_v| \leq 1/r_v$} \label{alg:ako-full:line:test-rate-condition}
    \State\Return $\Delta_{v, s} = 0$ for all $s \in \set{-K, \dots, K}$ \label{alg:ako-full:line:test-rate}
\EndIf
\ForEach{$i \in \rangezero B$} \label{alg:ako-full:line:iter-children}
    \State Let $v_i$ be the $i$-th child of $v$ and sample $u_{v_i} \sim \mathcal D((2 \log n)^{-1}, 0.01 \cdot (Kn)^{-1})$ \label{alg:ako-full:line:precision}
    \State Recursively compute $\Delta_{v_i, s}$ with rate $r_v / u_{v_i}$ for all $s \in \set{-K, \dots, K}$ \label{alg:ako-full:line:recursion}
    \State Compute $\widetilde A_{i, s} = \min_{s' \in \set{-K, \dots, K}} \Delta_{v_i, s'} + 2 \cdot |s - s'|$ using \cref{lem:range-minimum} \label{alg:ako-full:line:range-minimum}
\EndForEach
\ForEach{$s \in \set{-K, \dots, K}$} \label{alg:ako-full:line:iter-output}
    \State Let $\Delta_{v, s}$ be the result of the recovery algorithm (\cref{lem:precision-sampling}) applied to \label{alg:ako-full:line:recovery}
    \Statex[1] $\widetilde A_{0, s}, \dots, \widetilde A_{B-1, s}$ with  precisions $u_{v_0}, \dots, u_{v_{B-1}}$
\EndForEach
\State\Return $\min(\Delta_{v, s}, K)$ for all $s \in \set{-K, \dots, K}$ \label{alg:ako-full:line:return}
\end{algorithmic}
\end{algorithm}

In this section we give a formal proof of the correctness and running time analysis of (our reinterpretation of) the algorithm by Andoni, Krauthgamer and Onak~\cite{AndoniKO10}. We give the full algorithm in \cref{alg:ako-full}.

\paragraph{Setting the Parameters}
Throughout this section we assume that $T$ is a balanced $B$-ary partition tree with $n$ leaves. This means that the depth of $T$ is bounded by $\ceil{\log_B(n)}$ and each leaf $v$ is labeled with a singleton $I_v$. For each node $v$ in $T$, we define the following parameters:
\begin{itemize}
\item \emph{Rate $r_v$: If $v$ is the root, then we set $r_v = 1000/K$. Otherwise, if $v$ is a child of $w$ then sample~$u_v \sim \mathcal D((2 \log n)^{-1}, 0.01 \cdot (Kn)^{-1})$ independently and set $r_v = r_w / u_v$.}\\This assignment matches the values in \cref{alg:ako-full}.
\item \emph{Multiplicative accuracy $\alpha_v = 2 \cdot (1 - (2 \log n)^{-1})^d$ (where $d$ is the depth of $v$).}\\Note that $\alpha_v \geq 1$ since $d \leq \log n$.
\end{itemize}
Recall that the task of solving a node $v$ is computing a list $\Delta_{v, -K}, \dots, \Delta_{v, K}$ in the sense of \cref{def:problem}. We will in fact compute values with the stronger guarantee that $\Delta_{v, s}$ is a $(\alpha_v, 1/r_v)$-approximation of $\TD_{v, s}^{\leq K}(X, Y)$. Therefore, for the root node we compute an approximation of $\TD^{\leq K}(X, Y)$ with multiplicative error $2$ and additive error $0.001K$.

\begin{lemma}[Correctness of \cref{alg:ako-full}] \label{lem:ako-correctness}
Let $X, Y$ be strings. Given any node $v$ in the computation tree $v$, \cref{alg:ako-full} correctly solves $v$, with constant probability $0.9$.
\end{lemma}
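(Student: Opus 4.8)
I would prove the lemma by structural induction on the partition tree, from the leaves upward, establishing for every node $v$ the (slightly stronger than \cref{def:problem}) invariant that each returned value $\Delta_{v,s}$ is an $(\alpha_v, 1/r_v)$-approximation of $\TD^{\leq K}_{v,s}(X,Y)$; this implies the guarantee of \cref{def:problem} because $\ED^{\leq K}(X_v,Y_{v,s}) \leq \TD^{\leq K}_{v,s}(X,Y)$ by (the proof of) \cref{lem:equivalence-ed-td}. I want the $\TD$-form \emph{lower} bound, not just the $\ED$-form one, because only that form survives the combination step at the parent. The recursion is well-founded: $|X_v|$ shrinks by a factor of at least $B\geq 2$ on every downward step, so every root-to-leaf path in the recursion terminates either at a leaf or at the pruning rule $|X_v|\leq 1/r_v$.

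\textbf{Base cases.} The key preliminary observation is the crude bound $\TD_{v,s}(X,Y)\leq |X_v|$, obtained from the shift assignment that sets every shift in the subtree rooted at $v$ equal to $s$: this pays no shift cost and each leaf $w$ contributes $\ED(X_w,Y_{w,s})\leq|X_w|$, so the contributions sum to at most $|X_v|$. Now, if $v$ is a leaf, the algorithm returns $\Delta_{v,s}=\ED(X_v,Y_{v,s})=\TD_{v,s}(X,Y)\leq 1\leq K$, i.e.\ $\Delta_{v,s}=\TD^{\leq K}_{v,s}(X,Y)$ exactly, which is trivially a valid approximation. If the pruning rule fires then $\TD^{\leq K}_{v,s}(X,Y)\leq\TD_{v,s}(X,Y)\leq|X_v|\leq 1/r_v\leq\alpha_v/r_v$ (using $\alpha_v\geq 1$), so returning $\Delta_{v,s}=0$ is within the additive slack $1/r_v$ on the lower side and trivially on the upper side.

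\textbf{Inductive step (internal, non-pruned $v$).} By the parameter setup the recursive call on the $i$-th child $v_i$ runs with rate $r_{v_i}=r_v/u_{v_i}$ and accuracy $\alpha_{v_i}=\alpha_v(1-\varepsilon)$ where $\varepsilon=(2\log n)^{-1}$, so by the induction hypothesis $\Delta_{v_i,s'}$ is an $(\alpha_{v_i},u_{v_i}/r_v)$-approximation of $\TD^{\leq K}_{v_i,s'}(X,Y)$ for all $s'\in\set{-K,\dots,K}$. Since $\alpha_{v_i}\geq 1$ one can absorb the additive term $2|s-s'|$ into the multiplicative factor, so the quantity $\widetilde A_{i,s}=\min_{-K\leq s'\leq K}(\Delta_{v_i,s'}+2|s-s'|)$ computed exactly via \cref{lem:range-minimum} is an $(\alpha_{v_i},u_{v_i}/r_v)$-approximation of $A_{i,s}:=\min_{-K\leq s'\leq K}(\TD^{\leq K}_{v_i,s'}(X,Y)+2|s-s'|)$. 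Applying the accuracy part of \cref{lem:precision-sampling} with $a=\alpha_{v_i}$ (the same for all children of $v$, as they share a depth), $b=1/r_v$, the same $\varepsilon$, and $\delta=0.01(Kn)^{-1}$, the recovery algorithm outputs a $((1+\varepsilon)\alpha_{v_i},1/r_v)$-approximation of $\sum_i A_{i,s}$, and $(1+\varepsilon)\alpha_{v_i}=(1-\varepsilon^2)\alpha_v\leq\alpha_v$. By \cref{eq:capped-td} we have $\TD^{\leq K}_{v,s}(X,Y)=\min(\sum_i A_{i,s},K)$, so capping the recovery output at $K$ in the final line yields an $(\alpha_v,1/r_v)$-approximation of $\TD^{\leq K}_{v,s}(X,Y)$; here one checks the elementary fact that $x\mapsto\min(x,K)$ preserves $(a,b)$-approximations when $a\geq 1$. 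This closes the induction, and at the root $\alpha_r=2$, $r_r=1000/K$, giving a $(2,0.001K)$-approximation of $\TD^{\leq K}(X,Y)$.

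\textbf{Error probability and the main obstacle.} The only randomness is the precision sampling, and the only way the argument above breaks is if some invocation of the recovery algorithm is inaccurate. There are $O(n)$ nodes and $2K+1$ shifts per node, hence $O(nK)$ invocations, each inaccurate with probability at most $\delta=0.01(Kn)^{-1}$; conditioning the whole analysis on all of them succeeding, a union bound bounds the failure probability by a small constant below $0.1$, so the success probability is at least $0.9$. The place that demands the most care is the simultaneous bookkeeping of the two error types down the recursion: the multiplicative error must not compound across the $O(\log_B n)$ levels, which is exactly why $\alpha_v$ is scheduled to decay geometrically as $(1-\varepsilon)^d$ and why one needs $(1+\varepsilon)(1-\varepsilon)\leq 1$ after each Precision Sampling step, while the additive error must stay pinned at $1/r_v$ even though $B$ children's estimates are summed — which is precisely what scaling the precisions by $u_{v_i}$ in \cref{lem:precision-sampling} buys. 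The remaining pieces (the $\TD_{v,s}\leq|X_v|$ bound for the pruning rule, absorbing $2|s-s'|$ via $\alpha_{v_i}\geq 1$, and the commutation of capping with approximation) are routine once the right invariant is fixed.
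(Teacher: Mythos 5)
Your proof is correct and follows essentially the same route as the paper: strengthening the invariant to an $(\alpha_v,1/r_v)$-approximation of $\TD^{\leq K}_{v,s}(X,Y)$, inducting on the tree with the pruning rule and leaf case as base cases, using the Precision Sampling Lemma with $(1+\varepsilon)\alpha_{v_i}\leq\alpha_v$ at internal nodes, and a union bound over the $O(nK)$ recovery invocations. The only additions you make are minor but sound elaborations the paper leaves implicit (the $\TD_{v,s}\leq|X_v|$ bound via the all-equal-shift assignment, absorbing $2|s-s'|$ via $\alpha_{v_i}\geq 1$, and the observation that capping at $K$ preserves $(a,b)$-approximations).
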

\begin{proof}
We prove the lemma by induction over the depth of the computation tree, i.e., we assume that the algorithm correctly solves all recursive calls to the children of $v$, in the sense of the previous definition.

For the base case, we consider the leaf nodes handled in \crefrange{alg:ako-full:line:test-1-condition}{alg:ako-full:line:test-rate}. The first test in \crefrange{alg:ako-full:line:test-1-condition}{alg:ako-full:line:test-1} is correct since it computes the tree distance exactly, while the test in \crefrange{alg:ako-full:line:test-rate-condition}{alg:ako-full:line:test-rate} is correct since $|X_v| \leq 1/r_v$ implies that $0$ is an additive $1/r_v$-approximation of~$\TD^{\leq K}_{v, s}(X, Y) \leq |X_v|$.

For the inductive step, assume that all values $\Delta_{v_i, s'}$ recursively computed in \cref{alg:ako-full:line:recursion} are correct. (We will bound the error probability later.) The algorithm approximately evaluates the following expression (which is immediate from \cref{def:capped-td}):
\begin{equation*}
    \TD^{\leq K}_{v, s}(X, Y) = \min\Bigg(\sum_{i \in \rangezero B} A_{i, s},\,K\Bigg), \quad A_{i, s} = \min_{-K \leq s' \leq K} \TD^{\leq K}_{v_i, s'}(X, Y) + 2 \cdot |s - s'|.
\end{equation*}
Indeed, for any shift $s$, the value
\begin{equation*}
    \widetilde A_{i, s} = \min_{-K \leq s' \leq K} \Delta_{v_i, s'} + 2 \cdot |s - s'|,
\end{equation*}
as computed by \cref{alg:ako-full} in \cref{alg:ako-full:line:range-minimum}, constitutes a $(\alpha_{v_i}, 1/r_{v_i})$-approximation of $A_{i, s}$. Recall that the rates $r_{v_i}$ are set to $r_{v_i} = r_v / u_{v_i}$ for all children. Here, $u_{v_i}$ is an independent sample from $\mathcal D(\varepsilon = (\log n)^{-1}, \delta = 0.01 \cdot (Kn)^{-1})$. Hence, in \cref{alg:ako-full:line:recovery} we may apply the Precision Sampling Lemma to recover a $((1 + \varepsilon) \cdot \alpha_{v_i}, 1/r_v)$-approximation of $\sum_i A_{i, s}$. (Again, ignore the error probability for now.) By a simple calculation we obtain that $(1 + \varepsilon) \cdot \alpha_{v_i} = 2 \cdot (1 + (2 \log n)^{-1}) \cdot (1 - (2 \log n)^{-1})^d \leq \alpha_v$, and therefore the approximation quality of $\Delta_{v, s}$ is sufficient to satisfy the induction hypothesis.

It remains to bound the error probability. The only source of randomness in the algorithm is the precision sampling and the recovery using \cref{lem:precision-sampling}. The algorithm succeeds if the recovery algorithm in \cref{alg:ako-full:line:recovery} succeeds for all nodes $v$ and all values~\makebox{$s \in \set{-K, \dots, K}$}. Taking a union bound over these $2Kn$ many error events, each happening at most with probability $\delta = 0.01 \cdot (Kn)^{-1}$, we can bound the total error probability by $0.05$.
\end{proof}

\begin{lemma}[Running Time of \cref{alg:ako-full}] \label{lem:ako-time}
\cref{alg:ako-full} runs in time $nB \cdot (\log n)^{\Order(\log_B(n))}$, with constant probability $0.9$.
\end{lemma}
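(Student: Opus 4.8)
The plan is to bound the running time by the product of (i) an upper bound on the number of \emph{active} nodes (nodes that the recursion actually reaches), and (ii) an upper bound on the expected cost of a single execution of \cref{alg:ako-full}, ignoring recursive calls. First I would observe that a single execution, ignoring recursion, costs $\Order(KB)$: the leaf tests take $\Order(K)$, the $B$ invocations of \cref{lem:range-minimum} cost $\Order(KB)$ in total, and the $\Order(K)$ applications of the Precision Sampling recovery algorithm (\cref{lem:precision-sampling}) with $n$ replaced by $B$ and $\varepsilon = (2\log n)^{-1}$, $\delta = 0.01(Kn)^{-1}$ cost $\Order(K \cdot B \cdot \varepsilon^{-2}\log(\delta^{-1})) = \widetilde\Order(KB)$. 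So a single execution costs $\widetilde\Order(KB)$, and it suffices to bound the number of active nodes by $n/K \cdot (\log n)^{\Order(\log_B n)}$.

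For the active-node count, the key is the pruning rule in \cref{alg:ako-full:line:test-rate-condition}: a node $v$ with $|X_v| \le 1/r_v$ is solved directly and its children are never explored. So an internal active node $v$ must satisfy $|X_v| > 1/r_v$, i.e.\ $r_v |X_v| > 1$. I would charge each active node a ``budget'' of $r_v|X_v|$ and argue that the total budget over all active nodes at any fixed level is small in expectation, then sum over the $\Order(\log_B n)$ levels. Recall $r_v = 1000/K \cdot \prod_{i=1}^d (1/u_{v_i})$ along the root-to-$v$ path. Using the Efficiency part of \cref{lem:precision-sampling} with $N = 100 K n$: there are events $E_w$, each of probability $\ge 1 - 1/N$, with $\Ex[1/u_w \mid E_w] \le \widetilde\Order(\varepsilon^{-2}\log(\delta^{-1})\log N) \le \polylog(n)$. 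Conditioning on $E = \bigwedge_w E_w$ (which holds with probability $\ge 0.99$ after a union bound over the at most $2n$ nodes, say — more carefully, over all nodes in the tree), we get $\Ex[r_v \mid E] \le (1000/K)\cdot(\log n)^{\Order(d)} \le (1/K)(\log n)^{\Order(\log_B n)}$. Hence $\Ex\big[\sum_{v \text{ active}} r_v|X_v| \ \big|\ E\big] \le (1/K)(\log n)^{\Order(\log_B n)} \sum_v |X_v| \le (n/K)(\log n)^{\Order(\log_B n)}$, using that $\sum_{v}|X_v| \le n\log_B n$ summed over all nodes (since $\sum_w |X_w| = n$ on each level). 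Since each active internal node contributes more than $1$ to this sum, the number of active internal nodes is, in expectation, at most $(n/K)(\log n)^{\Order(\log_B n)}$; adding a factor $B$ for the leaves (children of active internal nodes) keeps the same form. By Markov's inequality the number of active nodes is $n/K \cdot (\log n)^{\Order(\log_B n)}$ with probability $\ge 0.99$.

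Combining, with probability $\ge 0.9$ the total running time is the number of active nodes times $\widetilde\Order(KB)$, which is $nB \cdot (\log n)^{\Order(\log_B n)}$, absorbing the $\widetilde\Order(K)$ and the $\polylog$ factors into the $(\log n)^{\Order(\log_B n)}$ term. The main obstacle is the charging argument: I must be careful that conditioning on $E$ does not distort the distribution of $|X_v|$ (it does not, since $|X_v|$ is deterministic) and that Markov is applied to the \emph{conditional} expectation while the unconditional failure of $E$ is folded into the overall $0.9$ success probability; the other subtlety is that $r_v|X_v| > 1$ only at \emph{internal} active nodes, so the leaves must be counted separately via the branching factor $B$.
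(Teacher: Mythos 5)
Your proposal is correct and follows essentially the same two-step structure as the paper's proof: bound the per-node cost by $\widetilde\Order(KB)$, then bound the number of active nodes via the efficiency property of the Precision Sampling Lemma combined with Markov's inequality, conditioning on the good event $E$. The only meaningful difference is a small one of bookkeeping: the paper applies Markov directly to $\Pr(\text{$v$ is active} \mid E)\le |X_v|\cdot\Ex[r_v\mid E]$ and sums over a level, whereas you charge $r_v|X_v|$ as a budget, bound $\Ex[\sum_v r_v|X_v|\mid E]$, and observe that each recursion-internal node contributes more than $1$ — this also correctly handles the off-by-one (only nodes that actually spawn children are guaranteed to satisfy $r_v|X_v|>1$; recursion leaves are absorbed via the branching factor $B$), a point the paper glosses over slightly but which does not affect the final bound.
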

\begin{proof}
We first bound the running time of a single execution of \cref{alg:ako-full} (i.e., ignoring the cost of the recursive calls) by $\Order(K B \polylog(n))$:
\begin{itemize}
\item \crefrange{alg:ako-full:line:test-1-condition}{alg:ako-full:line:test-rate} and \cref{alg:ako-full:line:return}: These steps take time $\Order(K)$ to produce the output and are thus negligible. (Recall that the edit distance computation in \cref{alg:ako-full:line:test-1} boils down to comparing single characters.)
\item \crefrange{alg:ako-full:line:iter-children}{alg:ako-full:line:range-minimum}: The loop runs for $B$ iterations. In each iteration, we sample a precision in \cref{alg:ako-full:line:precision} in negligible time, perform a recursive computation which we ignore here, and apply \cref{lem:range-minimum} in time $\Order(K)$. The total time is $\Order(KB)$ as claimed.
\item \crefrange{alg:ako-full:line:iter-output}{alg:ako-full:line:recovery}: The loop runs for $\Order(K)$ iterations and in each iteration we apply the recovery algorithm from \cref{lem:precision-sampling} with parameters $\varepsilon = \Omega(\log n)$ and $\delta \geq n^{-\Order(1)}$. Each execution takes time $\Order(B \varepsilon^{-2} \log(\delta^{-1})) = \Order(B \polylog(n))$, as claimed.
\end{itemize}

Next, we argue that the number of active nodes in the computation tree is bounded by~$n / K \cdot (\log n)^{\Order(\log_B(n))}$ with good probability. It then follows that the total running time is bounded as claimed. To see this, we first analyze the rate $r_v$ at any node $v$. More specifically we apply \cref{lem:precision-sampling} with $N = 100n$ to obtain that: For any sample $u_v$ from $\mathcal D(\varepsilon=(2\log n)^{-1}, \delta=0.01\cdot (Kn)^{-1})$ there exists an event $E_v$ such that $\Pr(E_v) \geq 1 - 0.01 / n$ and
\begin{equation*}
    \Ex(1 / u_v \mid E_v) \leq \widetilde{\Order}(\varepsilon^{-2} \log(\delta^{-1}) \log n) = \polylog(n).
\end{equation*}
Using a union bound over all nodes $v$ in the tree, we can assume that all events $E_v$ simultaneously happen with probability at least $0.99$. Hence, from now on we condition on~$E = \bigwedge_v E_v$. Recall that $r_v = 1000 \cdot (K \cdot u_{v_1} \dots u_{v_d})^{-1}$ where $v_0, v_1, \dots, v_d = v$ is the root-to-node path leading to $v$. Therefore, and using that the $u_v$'s are sampled independently:
\begin{equation*}
    \Ex(r_v \mid E) = \frac{1000}K \cdot \prod_{i=1}^d \Ex(1 / u_{v_i} \mid E_{v_i}) \leq \frac1K \cdot (\log n)^{\Order(d)}.
\end{equation*}

We will now use that a node $v$ is active only if $|X_v| > 1/r_v$ (this is enforced by \crefrange{alg:ako-full:line:test-rate-condition}{alg:ako-full:line:test-rate}). Using Markov's inequality we obtain that
\begin{equation*}
    \Pr(\text{$v$ is active} \mid E) = \Pr(r_v > 1 / |X_v| \mid E) \leq |X_v| \cdot \Ex(r_v \mid E) \leq \frac{|X_v|}{K} \cdot (\log n)^{\Order(d)}.
\end{equation*}
Therefore, the number of active nodes at depth $d$ is $\sum_v |X_v| / K \cdot (\log n)^{\Order(d)} = n / K \cdot (\log n)^{\Order(d)}$ (where the sum is over all nodes $v$ at depth $d$, hence $\sum_v |X_v| = |X| = n$). We apply this bound at the deepest level $d = \log_B(n)$, and obtain by another application of Markov's inequality that the total number of active nodes is bounded by $100 n / K \cdot (\log n)^{\Order(\log_B(n))}$ with probability at least $0.99$. (The total success probability is $0.98$.)
\end{proof}
\section{2-Approximating Edit Distance for Many Shifts} \label{sec:ed-shifts}
In this section we show how to modify the Landau-Vishkin algorithm~\cite{LandauV88} to give a constant-factor approximation of the edit distance of a string $X$ and several consecutive shifts of another string $Y$. Recall that we use this routine at the leaves of our algorithm (\cref{alg:main:line:test-short} of \cref{alg:main}).

\thmedshifts*
\begin{proof}
\newcommand\D[2]{D[\,#1,#2\,]}
Let $m = |X|$ and $n = |Y|$. We will write $Y_s = Y\range{K+s}{m+K+s}$ for short. Consider the following table $\D ij$ defined as 
\begin{equation}\label{eqn:dp}
  \D ij = \min_{0 \leq s' \leq 2K} \ED(X\range{0}{i}, Y\range{s'}{j}),
\end{equation}
where $i \in \rangezero{m}$ and $j \in \rangezero{n}$.

First, we will explain how to compute the values $\D{m}{m + K + s}$ in the desired time, and then we show that these constitute a 2-approximation of $\ED^{\leq K}(X, Y_s)$.

\begin{itemize}[itemsep=\smallskipamount]
\itemdesc{Computing the Values $\D{m}{m + K + s}$:} It is not hard to see that \Cref{eqn:dp} follows the same recursive formulation as the dynamic program for edit distance, but with a twist in the base case. That is, we initialize $\D{0}{j} = 0$ for all $j \in \set{0, \dots, 2K}$ (which accounts for the possible starting shifts of $Y$) and the rest of the entries follow the recurrence
\begin{equation*}
  \D ij = \min\set{1 + \D{i-1}{j}, 1 + \D{i}{j-1}, c_{i,j} + \D{i-1}{j-1}},
\end{equation*}
where $c_{i,j}$ is the cost of matching $X \access i$ with $Y \access j$ (that is, $c_{i,j} = 1$ if $X\access{i} \neq Y\access{j}$, and otherwise $c_{i,j} = 0$) and where we set the out-of-bounds entries $\D{-1}{j}, \D{i}{-1}$ to $\infty$.

Since we want to approximate capped edit distances, we are only interested in computing values $\leq K$ in this table. This can be done using the classic Landau-Vishkin algorithm~\cite{LandauV88} in time $O(|X| + K^2)$. We now briefly sketch how this algorithm works. The idea is to iterate over the edit distance values $d = 0,\dots,K$. For each value $d$ and for each upper-left to bottom-right diagonal in the dynamic programming table, we maintain the furthest position to the right along each diagonal which contains the value $d$. We call this set of positions the \emph{frontier}. Since moving away from a diagonal incurs cost~1 and since we initialize $K$ consecutive diagonals with $d = 0$ in the base case, the frontier consists of $O(K)$ values at any iteration of the algorithm (i.e.\ we only need to keep track of $O(K)$ diagonals). Given the frontier for some edit distance value $d$, we can obtain the frontier for $d + 1$ by performing a longest common extension query for each diagonal and updating the corresponding position along each diagonal. Longest common extension queries can be answered using suffix trees in $O(1)$-time after preprocessing $X$ and $Y$ in linear time~\cite{LandauV88,Farach-ColtonFM00,KarkkainenSB06}, and updating the position along each diagonal also takes constant time. Thus, each of the $K$ iterations takes time $O(K)$, yielding the desired running time of $O(|X| + K^2)$.

\itemdesc{Approximation Guarantee:} We claim that for every $s \in \set{-K, \dots, K}$, the following bounds hold:
\begin{equation*}
  \D{m}{m + K + s} \leq \ED(X, Y_s) \leq 2 \cdot \D{m}{m + K + s}.
\end{equation*}
This claim implies the lemma statement. Indeed, the previous step allows us to compute the entries $\D{m}{m+K+s}$ which are at most $K$ and identify those which have value larger than $K$. By capping the latter entries at $K$, the claim implies that we get a 2\=/approximation for the capped distances $\ED^{\leq K}(X, Y_s)$ for every $s \in \set{-K, \dots, K}$, as desired.

We now prove the claimed bounds. The lower bound $\D{m}{m + K + s} \leq \ED(X, Y_s)$ holds by the definition of $\D{m}{m + K + s}$. For the upper bound, let $s'$ be the shift which minimizes the expression for $\D{m}{m+K+s}$ in \Cref{eqn:dp}, i.e., let $s'$ be such that $\D{m}{m+K+s} = \ED(X, Y\range{K+s'}{m+K+s})$. We will use that the edit distance of any two strings $A, B$ is at least their length difference $\big|\, |A| - |B| \,\big|$. In particular, we have~$\D{m}{m+K+s} \geq |s - s'|$. 

By the triangle inequality, we have that 
\begin{align*}
  \ED(X, Y_s) 
    & \leq \ED(X, Y\range{K+s'}{m+K+s}) + \ED(Y\range{K+s'}{m+K+s}, Y_s) \\
    & \leq \D{m}{m+K+s} + |s - s'| \leq 2 \cdot \D{m}{m+K+s},
\end{align*}
where the second inequality follows because we can transform $Y\range{K+s'}{m+K+s}$ into $Y_s = Y\range{K+s}{m+K+s}$ by deleting or inserting $|s - s'|$ characters. This proves the claim and thus finishes the proof of~\cref{thm:ed-shifts}. \qedhere




\end{itemize}

\end{proof}

\bibliography{refs}

\end{document}